\newtheorem{theorem}{Theorem}
\newtheorem{lemma}{Lemma}
\newtheorem{definition}{Definition}
\newtheorem{corollary}{Corollary}
\newtheorem{proposition}{Proposition}
\newtheorem{remark}{Remark}
\newcommand{\mR}{\mathbb{R}}
\newcommand{\mC}{\mathbb{C}}
\newcommand{\mN}{\mathbb{N}}
\newcommand{\mE}{\mathbb{E}}
\newcommand{\mZ}{\mathbb{Z}}
\newcommand{\mS}{\mathbb{S}}
\newcommand{\mK}{\mathbb{K}}
\newcommand{\mk}{\mathfrak{k}}
\newcommand{\mg}{\mathfrak{g}}
\newcommand{\cM}{\mathcal{M}}
\newcommand{\cH}{\mathcal{H}}
\newcommand{\cP}{\mathcal{P}}
\newcommand{\cC}{\mathcal{C}}
\newcommand{\cK}{\mathcal{K}}
\newcommand{\cO}{\mathcal{O}}
\newcommand{\cU}{\mathcal{U}}
\newcommand{\cJ}{\mathcal{J}}
\newcommand{\ux}{\underline{x}}
\newcommand{\uxb}{\underline{x} \grave{}}
\newcommand{\px}{\partial_{\bold{x}}}
\newcommand{\upx}{\partial_{\underline{x}}}
\newcommand{\osp}{\mathfrak{osp}(m|2n)}
\newcommand{\sC}{\cC l_{m|2n}}
\begin{document}
\title{Conformal symmetries of the super Dirac operator}

\author{K.\ Coulembier\thanks{Ph.D. Fellow of the Research Foundation - Flanders (FWO), E-mail: {\tt Coulembier@cage.ugent.be}} $\quad$ H. De Bie\thanks{ E-mail: {\tt Hendrik.DeBie@ugent.be}}}

\date{\small{Clifford Research Group}\\
\small{Department of Mathematical Analysis\\
Faculty of Engineering and Architecture -- Ghent University\\ Krijgslaan 281, 9000 Gent,
Belgium}
}

\maketitle

\begin{abstract}
In this paper, the Dirac operator, acting on super functions with values in super spinor space, is defined along the lines of the construction of generalized Cauchy-Riemann operators by Stein and Weiss. The introduction of the superalgebra of symmetries $\mathfrak{osp}(m|2n)$ is a new and essential feature in this approach. This algebra of symmetries is extended to the algebra of conformal symmetries $\mathfrak{osp}(m+1,1|2n)$. The kernel of the Dirac operator is studied as a representation of both algebras. The construction also gives an explicit realization of the Howe dual pair $\mathfrak{osp}(1|2)\times\mathfrak{osp}(m|2n)\subset \mathfrak{osp}(m+4n|2m+2n)$. Finally, the super Dirac operator gives insight into the open problem of classifying invariant first order differential operators in super parabolic geometries.

\end{abstract}

\textbf{MSC 2010 :} 17B10, 30G35, 58C50\\
\noindent
\textbf{Keywords :} Dirac operator, orthosymplectic superalgebras, conformally invariant differential operators, Howe dual pairs


\section{Introduction}

This paper defines and studies the Dirac operator on flat superspace $\mR^{m|2n}$. In a more ad hoc approach in e.g. \cite{MR2374394} a related operator has been introduced of which the operator in this paper is an improvement. The definition of the operator in the current paper follows the ideas of Stein and Weiss in \cite{MR0223492} which allowed for a unified treatment of generalized Cauchy-Riemann operators on $\mR^m$. We also study the kernel of the super Dirac operator, its symmetries and the related Howe duality. In this introduction we give an overview of the motivations to study this particular operator.

The construction of the Dirac operator is part of a larger program to classify conformally invariant first order operators on superspace. The conformal Killing vector fields on $\mR^{m|2n}$ generate a Lie superalgebra isomorphic to the real form $\mathfrak{osp}(m+1,1|2n)$. This algebra appears as the algebra of conformal symmetries in super field theories in e.g. \cite{MR1626117}. We will prove that the super Dirac operator is a conformally invariant operator. The classification in superspace would be an extension of the classification of Fegan in \cite{MR0482879} of conformally invariant first order differential operators on $\mR^m$ or $\mS^m$. The result of Fegan has already been generalized to other parabolic geometries in \cite{MR1738448, Orsted, Soucek}.

The Dirac operator in this paper is an interesting example in this classification because it reveals two differences between the classical case and the case of supergeometry. First of all, the functions on which this operator acts take values in an infinite dimensional $\osp$-representation, whereas the classical classifications in \cite{MR0482879, MR1738448} only consider finite dimensional representations. As will become apparent in this paper, the natural extension of the classical Dirac operator leads to infinite dimensional representations. This is already the case for the (generalized) symplectic Dirac operators on $\mR^{2n}$ studied in \cite{MR2458281}. There, a class of infinite dimensional representations was included in the classification of invariant operators on metaplectic contact projective geometries. The second reason why this Dirac operator is an interesting starting point in the classification is that in some cases the construction encounters the problem of 
 tensor products which are not completely reducible. An important step in the work of Fegan is the decomposition of the tensor product $\mC^m\otimes V$, with $V$ an irreducible $\mathfrak{so}(m)$-representation, into irreducible representations. The corresponding tensor product for $\osp$ is not necessarily completely reducible. Therefore, the Dirac operator in the current paper shows how this additional difficulty can be approached and how the classification can be expected to differ from the classical case. In Section \ref{Fegan} the conclusions which are made throughout the paper towards a Fegan classification will be summarized.

Another motivation comes from the theory of Lie superalgebra representations, where the question of irreducibility of indecomposable highest weight representations plays a central role. In \cite{OSpHarm} the kernel of the super Laplace operator was studied as an $\osp$-representation. This gave new important information on a certain class of finite dimensional irreducible highest weight representations. Similarly, in the current paper, the kernel of the Dirac operator leads to an explicit realization of some infinite dimensional highest weight representations of $\osp$. This direct approach leads to a better understanding of the concept of indecomposable but irreducible highest weight modules, see \cite{MR051963}. In particular we will also obtain the full decomposition series of a certain tensor product. This was already studied in \cite{Tensor}, but the completeness of the decomposition could not be proved there.

The Dirac operator on $\mR^{m|2n}$ generates the Lie superalgebra $\mathfrak{osp}(1|2)$ together with the corresponding vector variable. These operators commute with the action of the Lie superalgebra $\osp$. We prove that this is an explicit construction of the spinor representation of $\mathfrak{osp}(m+4n|2m+2n)$, which according to \cite{MR1893457} is a Lie superalgebra in which $\mathfrak{osp}(1|2)$ and $\mathfrak{osp}(m|2n)$ are each other's centralizers. This leads to a realization of the Howe duality $(\mathfrak{osp}(m|2n),\mathfrak{osp}(1|2))$ if $m-2n\not\in-2\mN$, which has not been studied before. The explicit Howe duality is summarized in the subsequent equations \eqref{sumHowe1}, \eqref{sumHowe2} and \eqref{sumHowe3}. The Howe dualities $\mathfrak{osp}(1|2)\times \mathfrak{so}(m)\subset\mathfrak{osp}(m|2m)$ in \cite{Kyo} and $\mathfrak{osp}(1|2)\times \mathfrak{sp}(2n)\subset\mathfrak{osp}(4n|2n)$ in \cite{Krysl} are limit cases of the Howe duality in the current
  paper.
  If $m-2n\in-2\mN$ the realization of the Howe duality breaks down and this different behavior is studied in Section \ref{Mkasreps}. The Howe duality corresponding to the super Laplace operator, which is the square of the super Dirac operator, is $\mathfrak{sl}_2\times\mathfrak{osp}(m|2n)\subset\mathfrak{osp}(4n|2m)$ and was studied in \cite{OSpHarm}.

Finally, the super Dirac operator in the current paper also unifies two classical operators. These are the Dirac operator on $\mR^m$, see e.g. \cite{MR1169463, Kyo} and an operator constructed in \cite{Krysl} used to study differential forms on $\mR^{2n}$ with values in the Kostant symplectic spinors $\mS_{0|2n}$, as is presented in Figure \ref{tabelhowe}. In case we only consider polynomials, the classical Dirac operator acts on the symmetric tensor powers of the fundamental $\mathfrak{so}(m)$-representation $\mC^{m}$ with values in the orthogonal spinors $\mS_{m}$. The Dirac operator generates the Lie superalgebra $\mathfrak{osp}(1|2)$ together with the vector variable and this algebra commutes with the action of $\mathfrak{so}(m)$. Since the decomposition into irreducible representations of $S(\mC^m)\otimes \mS_{m}$ under the joint action of $\mathfrak{osp}(1|2)\times \mathfrak{so}(m)$ is multiplicity-free, we obtain a realization of the Howe dual pair 
 $\left( \mathfrak{so}(m),\mathfrak{osp}(1|2)\right)$. In \cite{Krysl},  a similar construction was made for the action of $\mathfrak{sp}(2n)$ on differential forms on $\mR^{2n}$ (the outer power $\Lambda(\mC^{2n})$) with values in the symplectic spinors $\mS_{0|2n}$. The commutant of $\mathfrak{sp}(2n)$ was given by $\mathfrak{osp}(1|2)$ and one of the generators of this Lie superalgebra can be seen as an analogue of the Dirac operator. This construction showed that $\left( \mathfrak{sp}(2n),\mathfrak{osp}(1|2)\right)$ is a Howe dual pair for this realization. The super Dirac operator we will construct on $\mR^{m|2n}$, reduces in the two limiting cases $m=0$ and $n=0$ to one of these situations. The polynomials on $\mR^{m|2n}$ are given by the supersymmetric tensor power $S(\mC^{m|2n})$, which corresponds to $S(\mC^m)\otimes \Lambda(\mC^{2n})$, and the super spinor space $\mS_{m|2n}$ generalizes and contains both the orthogonal and symplectic spinors, see \cite{Tensor}. 

The super Dirac operator also fits into a bigger picture with the symplectic Dirac operator on $\mR^{2n}$ of \cite{DBSS}. Figure \ref{tabelhowe} also contains this symplectic Dirac operator, acting on functions on $\mR^{2n}$ with values in the symplectic spinors $\mS_{0|2n}$. The symplectic Dirac operator and the corresponding vector variable generate the Lie algebra $\mathfrak{sl}_2$, which leads to the Howe dual pair $(\mathfrak{sp}(2n),\mathfrak{sl}_2)$. It seems plausible that it is possible to generalize this operator to superspace as well. This should also lead to a generalization of the operator appearing in the study of differential forms on $\mR^m$ with values in the spinor space $\mS_m$, see \cite{MR1169463, MR1368704}, since the algebra of super differential forms on $\mR^{m|2n}$ contains a commuting subalgebra isomorphic to the polynomials on $\mR^{2n}$ as well as the differential forms on $\mR^m$.

\begin{figure}
\caption{Howe dualities for the super Dirac operator and limiting cases}
\label{tabelhowe}
\[
\xymatrix{
& *+[F]\txt{$S(\mC^{m|2n})\otimes \mS_{m|2n}$\\ $\mathfrak{osp}(1|2)\times \mathfrak{osp}(m|2n)$}  \ar[ddl]_{n \rightarrow 0}\ar[ddr]^{m \rightarrow 0}&\\
\\
*+[F]\txt{$S(\mC^m)\otimes \mS_{m}$\\ $\mathfrak{osp}(1|2)\times \mathfrak{so}(m)$ \\ \cite{MR1169463},  \cite{Kyo}}\ar@{<~>}[d]&&*+[F]\txt{$\Lambda(\mC^{2n})\otimes \mS_{0|2n}$ \\  $\mathfrak{osp}(1|2)\times \mathfrak{sp}(2n)$ \\ \cite{Krysl}}\ar@{<~>}[d]\\
*+[F]\txt{$\Lambda(\mC^m)\otimes \mS_{m}$ \\ $\mathfrak{sl}(2)\times \mathfrak{so}(m)$ \\ \cite{MR1368704}}&&*+[F]\txt{$S(\mC^{2n})\otimes \mS_{0|2n}$ \\ $\mathfrak{sl}(2)\times \mathfrak{sp}(2n)$ \\ \cite{DBSS}}\\
\\
& *+[F]\txt{$\Lambda(\mC^{m|2n})\otimes \mS_{m|2n}$ \\$\mathfrak{sl}(2)\times \mathfrak{osp}(m|2n)$}  \ar@{-->}[uur]_{m \rightarrow 0}\ar@{-->}[uul]^{n \rightarrow 0}&\\
}
\]
\end{figure}
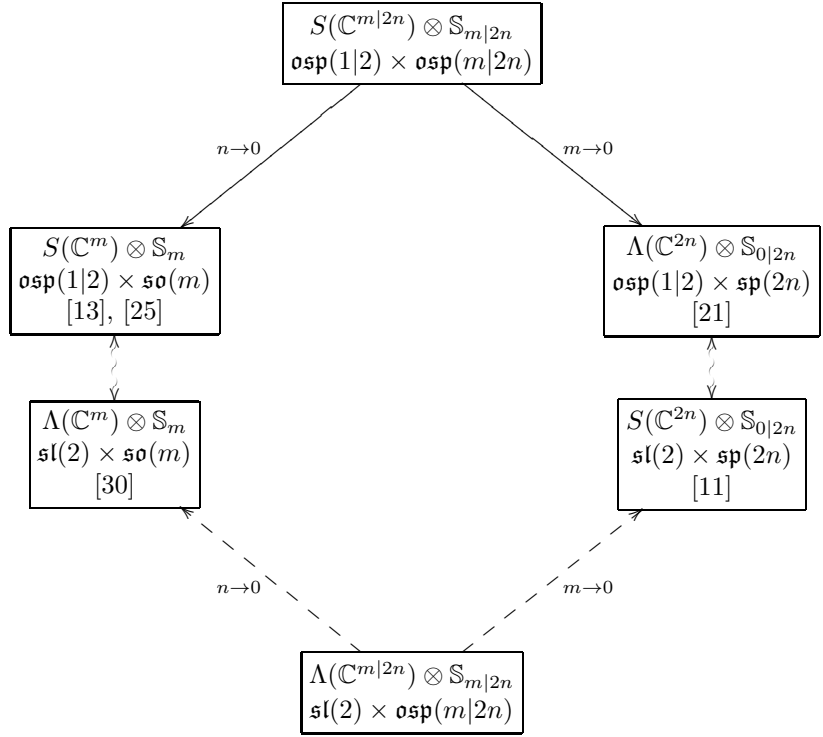

The paper is organized as follows. In the preliminaries we recall the basic notions of the classical Dirac operator as a Stein-Weiss type operator and the necessary results on harmonic analysis on $\mR^{m|2n}$ and super spinor space. Then we introduce the super Clifford algebra and relate it to the super spinor space and the Lie superalgebra $\mathfrak{osp}(m|2n)$. This gives the necessary tools to define and elegantly describe the super Dirac operator. At the end of Section \ref{secdefDirac} we show a similarity between the super Dirac operator and a realization of $\mathfrak{osp}(1|2)$ due to Bernstein. In Section \ref{secHowe} the Howe dual pair $(\osp,\mathfrak{osp}(1|2))$ is studied, where the Lie superaglebra $\mathfrak{osp}(1|2)$ is generated by the super Dirac operator and the vector variable. In particular the Fischer decomposition of $S(\mC^{m|2n})\otimes\mS_{m|2n}$ is obtained if $m-2n\not\in-2\mN$. In Section \ref{Mkasreps} the properties of the kernel of the Dira
 c operat
 or as an $\mathfrak{osp}(m|2n)$-representation are studied. In Section \ref{secCon} we construct all first order generalized symmetries of the super Dirac operator which have a scalar leading term. They generate the Lie superalgebra $\mathfrak{osp}(m+1,1|2n)$, which is isomorphic to the Lie superalgebra of conformal Killing vector fields on $\mR^{m|2n}$. Moreover, it is shown that the leading terms are exactly the conformal Killing vector fields. Then we study the kernel of the Dirac operator as an $\mathfrak{osp}(m+1,1|2n)$-representation. Finally, in Section \ref{Fegan}, we review the obtained insights towards the classification of conformally invariant first order differential operators.

\section{Preliminaries}
\label{preliminaries}
In this section we will recall some known facts about the classical Dirac operator, harmonic analysis in superspace and the super spinor space. First we mention some conventions and notations. Unless it is explicitly mentioned otherwise we will assume that $m>2$ holds.

The real orthogonal algebra will be denoted by $\mathfrak{so}(m)=\mathfrak{so}(m;\mR)$ and the irreducible $\mathfrak{so}(m)$-representa-tion with highest weight $\mu$ by $L^m_{\mu}$. The real orthosymplectic Lie superalgebra is denoted by $\mathfrak{osp}(m|2n)$. Its irreducible highest weight representations will be denoted by $K_\lambda^{m|2n}$, where $\lambda$ is the highest weight corresponding to the simple root system used in \cite{OSpHarm} and \cite{Tensor}. This root system differs from the distinguished one in \cite{MR051963} but is much more convenient to describe the type of representations we study. At the end of Section \ref{secHowe} the most relevant representations will also be expressed in terms of the distinguished root system. For general representations the procedure to calculate the highest weight is described in Section 4 in \cite{Tensor}. The roots of $\mathfrak{so}(m)$ are expressed in terms of $\epsilon_j$, $j=1,\cdots,\lfloor m/2\rfloor$ and those of 
 the symp
 lectic algebra $\mathfrak{sp}(2n)$ in terms of $\delta_i$, $i=1,\cdots,n$. Some important fundamental weights are given by $\omega_d=\frac{1}{2}(\epsilon_1+\epsilon_2+\cdots+\epsilon_d)$ if $m=2d$ or $m=2d+1$, $\omega_{d-1}=\frac{1}{2}(\epsilon_1+\cdots+\epsilon_{d-1}-\epsilon_d)$ if $m=2d$ and $\nu_j=\delta_1+\cdots+\delta_j$ for $1\le j\le n$.

\subsection{Dirac operator on $\mR^m$}
\label{secclassDi}

In this subsection we recall the basic notions concerning Clifford analysis on $\mR^m$, see \cite{MR1169463, Kyo, MR0223492}. No proofs will be given because they correspond to the limit case $n\to0$ of the Dirac operator on $\mR^{m|2n}$ studied in the current paper.

The complex Clifford algebra $\cC l_m$ corresponding to the vector space $\mC^m$ is generated by the standard basis vectors $e_j,$ $j=1,\cdots, m$ with commutation relations
\begin{equation}
\label{classClifford}
\quad e_je_k+e_ke_j=-2\delta_{jk}.
\end{equation}

\vspace{1mm}

The embedding $\mR^m\subset \cC l_m$ is given by identifying the vector $(x_{1}, \ldots, x_{m})$ with the vector variable $\ux=\sum_{j=1}^mx_je_j$. The variables $x_j$ are assumed to commute with the vectors $e_j$. The Dirac operator is given by
\begin{eqnarray}
\label{ClassDirac}
\upx=\sum_{j=1}^me_j\partial_{x_j}\quad
\end{eqnarray}
and acts on smooth functions with values in the Clifford algebra or a minimal left ideal of the Clifford algebra, the spinor space $\mS_m$:
\[\upx:\cC^\infty(\mR^m)\otimes \mS_m\to\cC^\infty(\mR^m)\otimes \mS_m.\]
All minimal left ideals in $\cC l_m$ are isomorphic to the spinor space. If $m=2d+1$, the spinor space is $\mS_m\cong L^m_{\omega_d}$ as an $\mathfrak{so}(m)$-representation. If $m=2d$ is even, the spinor space decomposes into two irreducible $\mathfrak{so}(m)$-representations, $\mS_m=\mS_m^+\oplus\mS_m^-\cong L^m_{\omega_d}\oplus L^m_{\omega_{d-1}}$. The square of the Dirac operator is given by the scalar Laplace operator
\begin{eqnarray*}
\upx^2=-\Delta_b=-\sum_{j=1}^m\partial_{x_j}^2.
\end{eqnarray*}

The Dirac operator can equally be constructed as a Stein-Weiss type operator, see \cite{MR0223492}. Consider the space $\cC^\infty(\mR^m)\otimes V$ of functions on $\mR^m$ with values in a simple $\mathfrak{so}(m)$-module $V$. The gradient can naturally be seen as an operator
\[\nabla: \cC^\infty(\mR^m)\otimes V\to\cC^\infty(\mR^m)\otimes \mC^m\otimes V,\]
which in coordinates is given by $\nabla f=\sum_{i=1}^m \partial_{x_j}f\otimes e_j$ in case $V$ is the trivial representation.

The decomposition into irreducible $\mathfrak{so}(m)$-representations of the tensor product of two irreducible representations $V$ and $W$ is given by $V\otimes W=V\boxtimes W\oplus\left(\oplus_{i}U_i\right)$ for some irreducible highest weight representations $U_i$ with highest weight lower than the sum of the highest weights of $V$ and $W$, and $V\boxtimes W$ the irreducible representation with highest weight equal to this sum. This last representation is called the Cartan product of the representations $V$ and $W$. When one of the two representations is the fundamental representation, the $\mathfrak{so}(m)$-invariant projection onto the Cartan product is denoted by
\[E:\mC^m\otimes V \to \mC^m\boxtimes V.\]
Stein and Weiss showed that many generalized Cauchy-Riemann systems with interesting properties correspond to operators of the form $\partial=E^\perp\circ \nabla$ with $E^\perp=1-E$, the invariant projection onto everything except the Cartan product, for some module $V$. The generalized Cauchy-Riemann system is then given by
\[\partial f=0\qquad\mbox{for}\quad f\in\cC^\infty(\mR^m)\otimes V\]
and functions satisfying this are called hyperholomorphic functions.

If the irreducible module $V$ corresponds to spinor spaces, the tensor product decomposition is given by
\begin{eqnarray*}
\mC^m\otimes\mS_m&\cong & \mC^m\boxtimes\mS_m\,\oplus\, \mS_m\qquad\mbox{for $m$ odd}\\
\mC^m\otimes\mS_m^+&\cong & \mC^m\boxtimes\mS^+_m\,\oplus\, \mS_m^-\qquad\mbox{for $m$ even}\\
\mC^m\otimes\mS_m^-&\cong & \mC^m\boxtimes\mS^-_m\,\oplus\, \mS_m^+\qquad\mbox{for $m$ even},
\end{eqnarray*}
see \cite{MR0482879, MR0223492}. In case $m$ is even we thus obtain two operators, acting between the spaces $\cC^\infty(\mR^m)\otimes\mS^+_m$ and $\cC^\infty(\mR^m)\otimes\mS^-_m$, which sum up to a single differential operator acting inside $\cC^\infty(\mR^m)\otimes\mS_m$.


In the Stein-Weiss definition of the Dirac operator the Clifford algebra does not appear. However, the Dirac operator is most elegantly described by introducing the endomorphism algebra of the spinor space, which is the corresponding Clifford algebra, $\cC l_m\cong End(\mS_m)$. By using this Clifford algebra we can identify $\partial$ with the Dirac operator $\upx$ in equation \eqref{ClassDirac}. Note that the fact that $\mS_m$ can be realized as a left ideal in $\cC l_m$ and the fact that the action of $\cC l_m$ on $\mS_m$ is given by left-multiplication is not important in the Stein-Weiss construction. It will also no longer hold for the super Dirac operator, because the symplectic spinors are no left ideal inside the Weyl algebra. This phenomenon also appears in the case of the symplectic Dirac operator in \cite{DBSS} and in the study of the Howe dual pair on differential forms with values in the symplectic spinors in \cite{Krysl}.

On the sphere $\mS^m$, the conformal compactification of $\mR^m$, the Dirac operator is conformally invariant, see \cite{MR0482879}. This means that the Dirac operator acting on the space of functions $\Gamma(\mS^m,SO(m+1,1)\times_P \mS_m)$ is $SO(m+1,1)$-invariant. Here $P$ is the Poincar\'e group, which contains $SO(m)$, the translations which act trivially on $\mS_m$ and the rescaling which acts on $\mS^m$ through the conformal weight. The sphere satisfies $\mS^m\cong SO(m+1,1)/P$. The infinitesimal action of the morphisms on $\mS^m$ which preserve the class of metrics are the conformal Killing vector fields. They constitute the Lie algebra $\mathfrak{so}(m+1,1)$. These do not integrate to global diffeomorphisms when only the flat space $\mR^m\subset \mS^m$ is considered. The conformal invariance of the Dirac operator on $\mR^m$ is therefore expressed by considering the conformal Killing vector fields. This means that there is a set of first order differential operators $D
 $, which
  constitute the Lie algebra $\mathfrak{
 so}(m+1,1)$, for which a second differential operator $\delta$ exists such that
\[ \upx D= \delta \upx\]
holds. Such differential operators $D$ are called (generalized) symmetries. 

The Lie algebra $\mathfrak{so}(m)$ acts on spinor space through its realization as bivectors in the Clifford algebra $\cC l_m$. The standard generators which are realized on functions as $L_{ij}=x_i\partial_{x_j}-x_j\partial_{x_i}$  are given by $-\frac{1}{2}e_ie_j$ for $1\le i<j\le m$. The $\mathfrak{so}(m)$-action on $\cC^\infty(\mR^m)\otimes \cC l_m$ or $\cC^\infty(\mR^m)\otimes \mS_m$ is therefore given by the differential operators
\begin{eqnarray*}
x_i\partial_{x_j}-x_j\partial_{x_i}-\frac{1}{2}e_ie_j.
\end{eqnarray*}

The Dirac operator $\upx$ and the vector variable $\ux$ generate the Lie superalgebra $\mathfrak{osp}(1|2)$ and commute with the realization of $\mathfrak{so}(m)$ given above, see e.g. \cite{MR2374394, Kyo}. This is a consequence of straightforward commutation relations such as $\upx \ux+\ux\upx=-2\mE-m$, where $\mE=\sum_{j=1}^mx_j\partial_{x_j}$ is the Euler operator.

The solutions of the Dirac equation $\upx f=0$ are called monogenic functions. In particular, the space of spherical monogenics of degree $k$ is given by
\[
\cM^b_k=\{p\in \mC[x_1,\cdots,x_m]\otimes \mS_{m}|\,\mE p=k p\mbox{ and }\upx p=0\}.
\]

Arbitrary polynomials can be decomposed into a sum of products of the powers of the vector variable with spherical monogenics. This is the subject of the following decomposition, called monogenic Fischer decomposition by analogy with the Fischer decomposition of scalar polynomials based on harmonic functions.
\begin{theorem}
\label{bosmon}
The monogenic Fischer decomposition on $\mR^m$ is given by
\[
\mC[x_1,\cdots,x_m]\otimes \mS_{m}=\bigoplus_{j=0}^\infty\bigoplus_{k=0}^\infty \ux^j \cM^b_k.
\]
\end{theorem}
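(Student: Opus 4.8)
The plan is to establish the monogenic Fischer decomposition by combining the well-known scalar Fischer decomposition on $\mR^m$ with a careful analysis of how the Dirac operator acts on the pieces $\ux^j\cM^b_k$. First I would recall that every homogeneous polynomial with spinor values of degree $N$ can be written, via the ordinary harmonic Fischer decomposition applied componentwise, as a sum $\sum_{s\ge 0}|\ux|^{2s}h_{N-2s}$ with $h_{N-2s}$ harmonic of degree $N-2s$; since $|\ux|^2=-\ux^2$ in the Clifford algebra, it suffices to prove the decomposition for harmonic spinor-valued polynomials and then expand powers of $|\ux|^2$ in terms of $\ux^2$. So the real content is: for $h_k$ harmonic of degree $k$, one has $h_k\in\cM^b_k\oplus\ux\cM^b_{k-1}$, because $\upx h_k$ is harmonic of degree $k-1$ and one checks that the map $\cM^b_{k-1}\to\cM^b_{k-1}$ obtained by first multiplying by $\ux$ and then applying $\upx$ is (up to a nonzero scalar depending on $k$ and $m$) the identity — this follows from the $\mathfrak{osp}(1|2)$ relation $\upx\ux+\ux\upx=-2\mE-m$ mentioned in the preliminaries, which gives $\upx(\ux M_{k-1})=(-2(k-1)-m)M_{k-1}$ for $M_{k-1}\in\cM^b_{k-1}$.

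The key steps, in order, would be: (1) show the sum $\sum_{j,k}\ux^j\cM^b_k$ is direct by a degree-and-monogenicity argument — if $\sum_j\ux^j M_{k-j}^{(j)}$ (fixed total degree) vanishes, apply $\upx$ repeatedly and use the nonvanishing of the scalars $-2(k-j)-m$ (here the hypothesis $m>2$, and more precisely $m\notin -2\mN+\{0,\dots\}$ type conditions, guarantee the constants are nonzero) to peel off the terms one by one; (2) show the sum exhausts all polynomials, by induction on the degree $N$: given a homogeneous spinor-valued $p_N$, set $M_N := p_N - \frac{1}{2N+m-2}\,\ux\,\upx$ applied appropriately — more precisely, use that $\upx p_N$ has degree $N-1$, is covered by induction, lift it back through multiplication by $\ux$ (rescaling each $\ux^j\cM^b_k$-component by the known eigenvalue), and check the remainder is annihilated by $\upx$, hence lies in $\cM^b_N$; (3) assemble the graded pieces into the full direct sum over all $j,k\ge 0$. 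Alternatively, and perhaps more cleanly, I would deduce it directly from the scalar Fischer decomposition $\mC[\ux]=\bigoplus_{s}|\ux|^{2s}\mathcal{H}_\ell$ tensored with $\mS_m$, then split each harmonic space $\mathcal{H}_k\otimes\mS_m$ as $\cM^b_k\oplus\ux\cM^b_{k-1}$ using the eigenvalue computation, and finally rewrite $|\ux|^{2s}=(-1)^s\ux^{2s}$ and reindex; the reindexing $(s,\ell,\varepsilon)\mapsto(j,k)$ with $\ux^{2s}\ux^\varepsilon\cM^b_{k}$ contributing to $\ux^j\cM^b_k$ is a bookkeeping exercise that one must verify is a bijection onto $\{(j,k):j,k\ge0\}$.

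The main obstacle I expect is step (2), surjectivity/exhaustion: one needs the scalars $2k+m-2$ (equivalently $2(k-1)+m$, the eigenvalue of $\upx\ux$ on $\cM^b_{k-1}$) to be nonzero for all relevant $k$, which is why the standing assumption $m>2$ matters, and one must handle the bookkeeping of which rescaled components to subtract so that the remainder is genuinely monogenic rather than merely of lower degree. A secondary subtlety is that $\mS_m$ is reducible when $m$ is even, so strictly speaking one is decomposing $\mathcal{H}_k\otimes\mS_m^\pm$ and should note the decomposition respects the $\pm$-grading (or simply work with the full $\mS_m$ and invoke the tensor product decompositions $\mC^m\otimes\mS_m\cong\mC^m\boxtimes\mS_m\oplus\mS_m$ recalled earlier). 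Since the excerpt states that proofs in this subsection follow from the $n\to0$ limit of the superspace results, I would in fact present this as a corollary of the general super monogenic Fischer decomposition proved later in the paper, merely indicating here the classical eigenvalue identity $\upx\ux M_{k-1}=-(2k+m-2)M_{k-1}$ as the mechanism; but if a self-contained argument is wanted, the induction above is the route, with the eigenvalue nonvanishing as the crux.
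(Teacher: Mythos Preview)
Your proposal is correct and matches the paper's approach: the paper gives no direct proof of this classical statement, declaring it the $n\to0$ limit of the super case, and the super proof (Lemma~\ref{superFischerLemma} combined with Theorem~\ref{HkS}) is exactly your ``cleaner'' route --- harmonic Fischer decomposition followed by the splitting $\cH_k\otimes\mS_m=\cM_k\oplus\ux\cM_{k-1}$ via the eigenvalue identity $\upx\,\ux M_{k-1}=-(2k+m-2)M_{k-1}$. Your alternative induction-on-degree argument is also fine and yields the same result.
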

Let us discuss the consequences of this theorem in some more detail. First, we look at the case where $m=2d+1$ is odd. Each space $\cM^b_k$ is an irreducible representation for $\mathfrak{so}(m)$ with highest weight $k\epsilon_1+\omega_d$. Therefore each representation $L^{m}_{k\epsilon_1+\omega_d}$ appears an infinite amount of times in the decomposition. The corresponding isotypical component $\bigoplus_{j=0}^\infty \ux^j \cM^b_k$ corresponds to an irreducible $\mathfrak{osp}(1|2)$-module with weight vectors $\ux^j \cM^b_k$ and lowest weight $k+m/2$. Theorem \ref{bosmon} thus implies that under the joint action of $\mathfrak{osp}(1|2)\times\mathfrak{so}(m)$, the space $\mC[x_1,\cdots,x_m]\otimes\mS_{m}$ has a multiplicity-free irreducible direct sum decomposition. Additionally, each $\mathfrak{osp}(1|2)$-representation is paired up with only one $\mathfrak{so}(m)$-representation which also appears only once. This implies that $(\mathfrak{so}(m),\mathfrak{osp}(1|2))$ is a Howe dual pair (see \cite{MR0986027}) for the action on $\mC[x_1,\cdots,x_m]\otimes\mS_{m}$. At the beginning of Section \ref{secHowe} an overview is given about how the Howe duality of the Dirac operator, super Laplace operator and super Dirac operator extend the Howe duality of the Laplace operator included in \cite{MR0986027}.

The case $m=2d$ is slightly more complicated. By introducing the polynomials of even and odd degree with notation $\mC[x_1,\cdots,x_m]^\pm$, the decomposition in Theorem \ref{bosmon} can be refined to
\begin{eqnarray*}
\left(\mC[x_1,\cdots,x_m]^+\otimes \mS_{m}^+\right)\oplus\left(\mC[x_1,\cdots,x_m]^-\otimes \mS_{m}^-\right)&=&\bigoplus_{j=0}^\infty\bigoplus_{k=0}^\infty \ux^j \cM^b_k{}^{(-1)^{k}}\\
\left(\mC[x_1,\cdots,x_m]^-\otimes \mS_{m}^+\right)\oplus\left(\mC[x_1,\cdots,x_m]^+\otimes \mS_{m}^-\right)&=&\bigoplus_{j=0}^\infty\bigoplus_{k=0}^\infty \ux^j \cM^b_k{}^{(-1)^{k+1}},
\end{eqnarray*}
with 
\begin{eqnarray*}
\cM^b_k{}^{\pm}=\{p\in \mC[x_1,\cdots,x_m]\otimes \mS_{m}^\pm|\,\mE p=k p\mbox{ and }\upx p=0\}
\end{eqnarray*}
irreducible $\mathfrak{so}(m)$-representations with highest weight respectively given by $k\epsilon_1+\omega_d$ and $k\epsilon_1+\omega_{d-1}$. Now these two decompositions correspond to multiplicity-free irreducible direct sum decompositions, with one-to-one pairing, under the joint action of $\mathfrak{osp}(1|2)\times\mathfrak{so}(m)$.

\begin{remark}
{\rm Clearly, there is also a Howe duality corresponding to the Laplace operator. This is the well-known representation of $\mathfrak{sl}_2\times\mathfrak{so}(m)$ on $\mC[x_1,\cdots,x_m]$, see \cite{MR0986027}. This Howe duality has been generalized to $\mR^{m|2n}$ in \cite{OSpHarm} and the corresponding Fischer decomposition will be recalled in the subsequent Lemma \ref{superFischerLemma}.

Moreover, other Dirac-type operators and their Howe duals can be found in \cite{MR2677004, H12}.}
\end{remark}

\subsection{Super vector spaces and $\osp$}
\label{prelosp}

The standard basis of the graded vector space $V=\mK^{m|2n}$ (with $\mK$ a field which in this paper will always be $\mR$ of $\mC$) consists of the vectors $E_j$ for $1\le j\le m+2n$, where $E_j=(0,\cdots,0,1,0,\cdots,0)$ with $1$ at the $j$-th position. The elements $E_j$ with $1\le j\le m$ span $V_{\overline{0}}$, and $E_j$ with $m<j\le m+2n$ span $V_{\overline{1}}$. As a vector space $\mK^{m|2n}$ is clearly isomorphic to $\mK^{m+2n}$.

For any $\mZ_2$-graded vector space $V=V_{\overline{0}}\oplus V_{\overline{1}}$, a vector $u$ belonging to $V_{\overline{0}}\cup V_{\overline{1}}$ is called homogeneous, and in this case we define $|u|=\alpha$ for $u\in V_\alpha$ where $\alpha\in\mZ_2=\mZ/(2\mZ)$. We also introduce a function
\begin{eqnarray}
\label{gradmap}
[\cdot]:\{1,2,\cdots,m+2n\}\to\mZ_2, \quad
\text{$[j]=\overline{0}$ if $j\le m$ and $[j]=\overline{1}$ otherwise}.
\end{eqnarray}
Then $|E_j|=[j]$ for all $j$ for the super vector space $V=\mK^{m|2n}$.

The space of endomorphisms on $\mR^{m|2n}$ is denoted by End$(\mR^{m|2n})$ when seen as an associative algebra or by $\mathfrak{gl}(m|2n;\mR)$ when seen as a Lie superalgebra. As a vector space, End$(\mR^{m|2n})$ is isomorphic to End$(\mR^{m+2n})$. The grading on End$(\mR^{m|2n})$ is inherited naturally from the grading on $\mR^{m|2n}$. The super Lie bracket on $\mathfrak{gl}(m|2n;\mR)$ is given by $[A,B]=A\circ B-(-1)^{|A||B|}B\circ A$. We will always use this notation $[\cdot,\cdot]$, also in case $A$ and $B$ are odd and the super commutator equals the anti-commutator $\{\cdot,\cdot\}$. The bracket $[\cdot,\cdot]$ is super antisymmetric and satisfies a super Jacobi identity. A super vector space $V$ with such a super Lie bracket is called a Lie superalgebra if the bracket is grade-preserving, $[V_i,V_j]\subset V_{i+j}$. 

In this paper, the orthosymplectic metric $g\in\mR^{(m+2n)\times (m+2n)}$ is given by
\begin{eqnarray}
\label{gmetric}
g=\left( \begin{array}{cc} I_m&0\\ \vspace{-3.5mm} \\0&J_{2n}
\end{array}
 \right)&\mbox{with}&J_{2n}=\left( \begin{array}{cc} 0&I_{n}\\  \vspace{-3.5mm} \\-I_n&0
\end{array}
 \right).
 \end{eqnarray}
 
 The Lie superalgebra $\mathfrak{osp}(m|2n)$ can be defined as the subsuperalgebra of $\mathfrak{gl}(m|2n;\mR)$ that preserves this metric. Considering the applications in the current paper it is more natural to introduce $\osp$ through the standard generators, which constitute a subset of $\mathfrak{gl}(m|2n)$. The defining representation of $\mathfrak{osp}(m|2n)$ on $\mR^{m|2n}$ is given by
\begin{equation}
\label{nataction}
\cK_{ij}E_k=g_{kj}E_i-(-1)^{[i][j]}g_{ki}E_j.
\end{equation}
The operators $\cK_{ij}$ generate $\osp$, and satisfy the following super commutator relations:
\begin{eqnarray}
\nonumber
[\cK_{ij},\cK_{kl}]&=&\cK_{ij}\cK_{kl}-(-1)^{([i]+[j])([k]+[l])}\cK_{kl}\cK_{ij}\\
\label{commL}
&=&g_{kj}\cK_{il}+(-1)^{[i]([j]+[k])}g_{li}\cK_{jk}-(-1)^{[k][l]}g_{lj}\cK_{ik}-(-1)^{[i][j]}g_{ki}\cK_{jl}.
\end{eqnarray}
We will always assume real Lie superalgebras acting on complex spaces from now one. The defining representation satisfies $\mC^{m|2n}\cong K_{\epsilon_1}^{m|2n}$.

The tensor product $V\otimes W$ of two $\mathfrak{osp}(m|2n)$-representations $V$ and $W$, is again a representation with action defined by
\[
X\cdot (v\otimes w)=(X\cdot v)\otimes w\,+\, (-1)^{|X||v|}v\otimes (X\cdot w),
\]
for $X\in\mathfrak{osp}(m|2n)$, $v\in V$ both homogeneous and $w\in W$. The supersymmetric tensor product $V\odot V$ is the span in $V\otimes V$ of the elements $u\otimes v+(-1)^{|u||v|}v\otimes u$ for $u,v\in V$ homogeneous. This is a subrepresentation of $V\otimes V$.

\subsection{Harmonic analysis on $\mR^{m|2n}$}

In this subsection we recall some results on the study of the Laplace operator on superspace, see \cite{OSpHarm, DBE1}.

Superspaces are spaces where one considers not only commuting (bosonic) but also anti-commuting (fermionic) co-ordinates. The $2n$ anti-commuting variables ${x\grave{}}_i$ generate the complex Grassmann algebra $\Lambda_{2n}$. We consider a space with $m$ bosonic variables. The supervector $\bold{x}$ is defined to be
\[
\bold{x}=(X_1,\cdots,X_{m+2n})=(\ux,\uxb)
\]
The first $m$ variables are commuting and the last $2n$ anti-commuting. The commutation relations are then summarized in 
\[X_iX_j=(-1)^{[i][j]}X_jX_i \mbox{ for }i,j=1,\cdots,m+2n.\] 
The algebra generated by the variables $X_j$ is denoted by $\cP$ and is isomorphic to the supersymmetric tensor power of $\mC^{m|2n}$. The flat supermanifold, corresponding with these variables, is denoted by $\mR^{m|2n}$. The full algebra of functions on this supermanifold is $\cO(\mR^{m|2n})=\cC^\infty(\mR^m)\otimes \Lambda_{2n}$ which contains $\cP$ as a subalgebra. The partial derivatives are defined by the relation \[\partial_{X_j}X_k=\delta_{jk}+(-1)^{[j][k]}X_k\partial_{X_j}.\]

Using the orthosymplectic metric $g$ we can define the super Laplace operator and norm squared on $\mR^{m|2n}$, along with the Euler operator:
 \begin{equation}
 \label{defsuperLap}
 \Delta=\sum_{j,k=1}^{m+2n}\partial_{X_j}g_{jk}\partial_{X_k}, \qquad  R^2=\sum_{j,k=1}^{m+2n}X_jg_{jk}X_k,\qquad \mE=\sum_{j=1}^{m+2n}X_j\partial_{X_j}.
 \end{equation}
As in the classical case $\Delta$, $R^2$ and $\mE+\frac{m-2n}{2}$ generate the Lie algebra $\mathfrak{sl}_2$, see \cite{DBE1, MR0695958}. In these formulas, the parameter $M=m-2n$ replaces the classical dimension $m$. It turns out that $M$ plays an important role and will often characterize properties independently of the exact super dimension $m|2n$.

 
The null-solutions of the super Laplace operator are called (super) harmonic functions.
\begin{definition}
{\rm The space of} spherical harmonics {\rm of homogeneous degree $k$ is given by $\cH_k=\ker\Delta\cap\cP_k$, with $\cP_k$ the polynomials of degree $k$, i.e. those satisfying $\mE P=kP$.}
\end{definition}

We can use the metric to raise indices as $\partial_{X^j}=\sum_k g_{kj}\partial_{X_k}$. These partial derivatives satisfy $\partial_{X^j}R^2=2X_j$.

Since $\cP\cong \oplus_{k=0}^\infty S_k\left(\mC^{m|2n}\right)=S\left(\mC^{m|2n}\right)$, the $\osp$-action on $\cP$ is given by
\begin{eqnarray}
\label{Lij}
\pi_{\cO}:\cK_{ij}&\rightarrow& L_{ij}=X_i\partial_{X^j}-(-1)^{[i][j]}X_j\partial_{X^i}.
\end{eqnarray}
This action clearly extends from $\cP$ to the full algebra $\cO(\mR^{m|2n})$. The Laplace operator and norm squared commute with these differential operators. The actions of $\mathfrak{sl}_2$ and $\mathfrak{osp}(m|2n)$ on $\mR^{m|2n}$ therefore commute with each other. In \cite{OSpHarm} it was proved that this pair $(\mathfrak{osp}(m|2n),\mathfrak{sl}_2)$ constitutes a Howe dual pair for this representation in case $m-2n\not\in-2\mN$. Here we summarize the main results obtained in \cite{OSpHarm} on this Howe duality.
\begin{lemma}
\label{irrHk}
When $M=m-2n\not\in-2\mN$, the space $\cH_k$ of spherical harmonics on $\mR^{m|2n}$ of homogeneous degree $k$ is an irreducible $\mathfrak{osp}(m|2n)$-module. When $M\in-2\mN$, $\cH_k$ is irreducible if and only if 
\begin{eqnarray*}
k>2-M &\mbox{or}& k< 2-\frac{M}{2}.
\end{eqnarray*}
The module $\cH_k$ is always indecomposable. When reducible it has one submodule, $R^{2k+M-2}\cH_{2-M-k}$. When $\cH_k$ is irreducible, it is isomorphic to $K^{m|2n}_{k\epsilon_1}$ as an $\osp$-representation, otherwise the quotient with respect to the submodule is isomorphic to $K^{m|2n}_{k\epsilon_1}$.
\end{lemma}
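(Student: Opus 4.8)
The plan is to run the classical argument for the harmonic Fischer decomposition with $\osp$ in place of $\mathfrak{so}(m)$, using the super harmonic analysis of \cite{DBE1, OSpHarm} together with an induction on $k$.

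\emph{Step 1: highest weight vectors inside $\cH_k$.} For every $k$ the polynomial $(x_1+ix_2)^k$ lies in $\cH_k$: it is homogeneous of degree $k$ and is annihilated by $\Delta$ since the coordinate $x_1+ix_2$ is $g$-null. With the Borel for which $\cP_1=\mC^{m|2n}\cong K^{m|2n}_{\epsilon_1}$, it is a highest weight vector of weight $k\epsilon_1$. Since $R^2$ commutes with the operators $L_{ij}$, i.e. is $\osp$-invariant, each vector $(R^2)^b(x_1+ix_2)^a$ is again an $\osp$-highest weight vector, of weight $a\epsilon_1$, and a short Leibniz computation using $[\Delta,R^2]=4\mE+2M$ and $\Delta(x_1+ix_2)=0$ gives
\[
\Delta\bigl((R^2)^b(x_1+ix_2)^a\bigr)=2b\,(2a+2b+M-2)\,(R^2)^{b-1}(x_1+ix_2)^a .
\]
Hence among the degree-$k$ vectors of this form ($a+2b=k$) exactly two are harmonic: $(x_1+ix_2)^k$ (from $b=0$), present for all $k$, and $(R^2)^{\,k+M/2-1}(x_1+ix_2)^{\,2-M-k}$, which is a nonzero harmonic of degree $k$ precisely when $M\in-2\mN$ and $2-M/2\le k\le 2-M$.

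\emph{Step 2: composition factors and the dichotomy.} The quadratic Casimir of $\osp$ acts on $\ker\Delta$ as $\mE(\mE+M-2)$ — the Howe-dual relation for $(\mathfrak{sl}_2,\osp)$ on $\cO(\mR^{m|2n})$, cf. \cite{DBE1, OSpHarm} — hence it acts on $\cH_k$ by the scalar $k(k+M-2)$. Therefore every composition factor $K^{m|2n}_\mu$ of $\cH_k$ satisfies $\langle\mu,\mu+2\rho\rangle=k(k+M-2)$; combined with the description of the $\osp$-highest weight vectors of $\cP_k=S_k(\mC^{m|2n})$ — they are spanned by the $(R^2)^b(x_1+ix_2)^a$ of Step 1, see \cite{Tensor} — this forces $\mu\in\{k\epsilon_1,\,(2-M-k)\epsilon_1\}$, each occurring with multiplicity at most one (the multiplicity of $K^{m|2n}_{(2-M-k)\epsilon_1}$ in $\cP_k$ being pinned down by comparing characters with the classical Fischer decompositions of $S(\mC^m)$ and $\Lambda(\mC^{2n})$). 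The factor $K^{m|2n}_{k\epsilon_1}$ is always present (highest weight vector $(x_1+ix_2)^k$); by Step 1 the factor $K^{m|2n}_{(2-M-k)\epsilon_1}$ is present exactly when $M\in-2\mN$ and $2-M/2\le k\le 2-M$. This yields the stated irreducibility criterion, and when $\cH_k$ is irreducible it is generated by $(x_1+ix_2)^k$, hence $\cong K^{m|2n}_{k\epsilon_1}$.

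\emph{Step 3: the reducible case and indecomposability.} Assume $M\in-2\mN$ and $2-M/2\le k\le 2-M$, so $0\le 2-M-k<k$, with $2-M-k$ in the irreducible range $\ell<2-M/2$; by induction $\cH_{2-M-k}$ is irreducible, isomorphic to $K^{m|2n}_{(2-M-k)\epsilon_1}$. Multiplication by $R^{2k+M-2}$ is $\osp$-equivariant and does not kill $\cH_{2-M-k}$ (already $R^{2k+M-2}(x_1+ix_2)^{2-M-k}\ne0$), hence is injective on this irreducible module; by Step 1 its image $R^{2k+M-2}\cH_{2-M-k}$ is a submodule of $\cH_k$ isomorphic to $K^{m|2n}_{(2-M-k)\epsilon_1}$. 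With the bound from Step 2, $\cH_k$ then has length two, with this module as socle and $K^{m|2n}_{k\epsilon_1}$ as head. It remains to show the extension does not split, i.e. that $(x_1+ix_2)^k$ generates $\cH_k$ over $U(\osp)$, i.e. that the singular vector of weight $(2-M-k)\epsilon_1$ in the Verma module of highest weight $k\epsilon_1$ has nonzero image in $\cP_k$. I expect this non-vanishing to be the main obstacle: it has no classical counterpart (there the relevant tensor products are completely reducible) and must be checked by a direct computation in the explicit realization $L_{ij}=X_i\partial_{X^j}-(-1)^{[i][j]}X_j\partial_{X^i}$ — for instance by exhibiting the appropriate string of lowering operators applied to $(x_1+ix_2)^k$ and verifying it does not collapse — or quoted from \cite{OSpHarm}. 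Granting it, $\cH_k$ is a highest weight module, hence indecomposable with unique irreducible quotient $K^{m|2n}_{k\epsilon_1}$, which finishes the proof (the base cases $\cH_0\cong K^{m|2n}_0$ and $\cH_1\cong K^{m|2n}_{\epsilon_1}$ of the induction being trivial).
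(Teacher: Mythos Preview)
The paper does not prove this lemma: it appears in the preliminaries explicitly as a summary of results established in \cite{OSpHarm} (``Here we summarize the main results obtained in \cite{OSpHarm} on this Howe duality''), and no argument is given. So there is no proof in the present paper to compare your attempt against.

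Your sketch is a sensible outline of the argument and is close in spirit to what is actually done in \cite{OSpHarm}. You correctly isolate the crux in Step~3: showing that the length-two extension does not split, equivalently that $(x_1+ix_2)^k$ generates all of $\cH_k$ over $U(\osp)$. You flag this as the main obstacle and propose either a direct lowering-operator computation or a citation to \cite{OSpHarm}; this is honest, but it means the argument as written is not self-contained at exactly the point that distinguishes the super case from the classical one. One further caution on Step~2: knowing the $\mathfrak n^+$-singular vectors of $\cP_k$ (or of $\cH_k$) does not by itself bound the composition factors of $\cH_k$, since a simple subquotient may have a highest weight vector that is not singular in the ambient module; this gap closes only once you know $\cH_k$ is a highest weight module --- which is precisely what Step~3 is meant to prove. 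In \cite{OSpHarm, DBE1} the circularity is avoided by working through the explicit $\sosp$-decomposition of $\cH_k$, which gives the composition length and the indecomposability simultaneously without presupposing the highest-weight-module property.
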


This leads to the harmonic Fischer decomposition.

\begin{lemma}
If $M=m-2n \not \in -2 \mN$, $\cP$ decomposes into simple $\mathfrak{osp}(m|2n)$-modules as
\begin{eqnarray}
\cP = \bigoplus_{k=0}^{\infty} \cP_k= \bigoplus_{j=0}^{\infty} \bigoplus_{k=0}^{\infty} R^{2j}\cH_k.
\label{superFischer}
\end{eqnarray}
\label{superFischerLemma}
\end{lemma}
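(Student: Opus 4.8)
The plan is to establish the decomposition $\cP = \bigoplus_{j,k} R^{2j}\cH_k$ by combining the $\mathfrak{sl}_2$-structure of superspace with the irreducibility statement of Lemma \ref{irrHk}. First I would recall that $\Delta$, $R^2$ and $\mE+\frac{m-2n}{2}$ generate a copy of $\mathfrak{sl}_2$ acting on $\cP$, with $R^2$ and $\Delta$ the raising/lowering operators and the Euler operator (shifted) the Cartan element. The spherical harmonics $\cH_k=\ker\Delta\cap\cP_k$ are precisely the lowest-weight vectors of this $\mathfrak{sl}_2$-action inside the graded piece $\cP_k$. The key algebraic input is the commutation relation between $\Delta$ and powers of $R^2$: a standard computation gives $[\Delta,R^{2}]=4\mE+2M$ and more generally $\Delta R^{2j} = R^{2j}\Delta + 2j(2\mE+2j+M-2)R^{2j-2}$ on $\cP$. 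Because we assume $M=m-2n\not\in-2\mN$, the scalar factor $2j(2\mE+2j+M-2)$ acts invertibly on $\cP_k$ for every $j\ge 1$ and $k\ge 0$ (the eigenvalue $2j(2k+2j+M-2)$ never vanishes), which is exactly the point where the hypothesis is used.

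The core of the argument is then a degree-by-degree induction showing $\cP_k = \bigoplus_{j\ge 0} R^{2j}\cH_{k-2j}$. For the inductive step, given $P\in\cP_k$, one applies $\Delta$ to obtain $\Delta P\in\cP_{k-2}$, which by the induction hypothesis lies in $\bigoplus_{j\ge 0}R^{2j}\cH_{k-2-2j}$; using the non-vanishing of the scalars above one can solve for a unique $Q\in\bigoplus_{j\ge 1}R^{2j}\cH_{k-2j}$ with $\Delta Q = \Delta P$, whence $P-Q\in\ker\Delta\cap\cP_k=\cH_k$. This gives $\cP_k = \cH_k + \sum_{j\ge1}R^{2j}\cH_{k-2j}$, and directness of the sum follows by the same injectivity: if $\sum_j R^{2j}h_j=0$ with $h_j\in\cH_{k-2j}$, applying $\Delta$ repeatedly and peeling off terms forces all $h_j=0$. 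Summing over $k$ gives \eqref{superFischer}. Finally, each summand $R^{2j}\cH_k$ is $\osp$-invariant since $R^2$ is an $\osp$-invariant (it is built from the metric $g$ and commutes with all $L_{ij}$), and $R^{2j}:\cH_k\to R^{2j}\cH_k$ is an $\osp$-isomorphism because it is injective (again by the non-vanishing of the scalar factor) and $\osp$-equivariant; combined with Lemma \ref{irrHk}, which gives that $\cH_k$ is a simple $\osp$-module when $M\not\in-2\mN$, this shows each $R^{2j}\cH_k$ is a simple $\osp$-module, so the decomposition is into simple modules as claimed.

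The main obstacle is the bookkeeping in the induction: one must track the precise scalar $2j(2k+2j+M-2)$ arising from iterated commutators of $\Delta$ with $R^{2j}$ and verify it is nonzero exactly under the hypothesis $M\not\in-2\mN$ (for $M\le0$ even this scalar could vanish when $k+j = 1-M/2$, which is the reducibility locus appearing in Lemma \ref{irrHk}). Everything else — $\osp$-invariance of $R^2$, the Clifford/$\mathfrak{sl}_2$ commutation relations, equivariance of multiplication by $R^{2j}$ — is routine given the preliminaries, and the simplicity of the summands is imported directly from Lemma \ref{irrHk}. In fact this entire statement is the Fischer decomposition already obtained in \cite{OSpHarm}, so in the paper it suffices to cite that reference; the sketch above is the self-contained route one would take.
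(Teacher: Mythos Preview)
Your proposal is correct. The paper gives no proof of this lemma at all: it is stated in the preliminaries as a result imported from \cite{OSpHarm}, exactly as you anticipate in your final paragraph. Your self-contained argument via the $\mathfrak{sl}_2$-triple $(\Delta,R^2,\mE+\tfrac{M}{2})$, the commutator identity $\Delta R^{2j}=R^{2j}\Delta+2j(2\mE+2j+M-2)R^{2(j-1)}$, and degree induction is the standard route and is essentially what is carried out in \cite{OSpHarm}; the non-vanishing of the scalar precisely under $M\notin-2\mN$ is indeed the place the hypothesis enters, and the simplicity of each $R^{2j}\cH_k$ then follows from Lemma~\ref{irrHk} together with the $\osp$-equivariance and injectivity of $R^{2j}$.
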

Similar to the classical monogenic Fischer decomposition in Subsection \ref{secclassDi}, this decomposition implies that under the joint action of $\mathfrak{sl}_2\times\mathfrak{osp}(m|2n)$, the space $\cP\cong S\left( \mC^{m|2n}\right)$ is isomorphic to the multiplicity-free irreducible direct sum decomposition
\begin{eqnarray*}
\cP\cong \bigoplus_{k=0}^\infty T_{k+\frac{1}{2}M}\times K^{m|2n}_{k\epsilon_1},
\end{eqnarray*}
for $M\not\in-2\mN$, with $T_{\lambda}$ the irreducible $\mathfrak{sl}_2$-representation with lowest weight $\lambda$.

\subsection{Super spinor space}

The spinors $\mS_{m|2n}$ for $\osp$ are realizations of the Lie superalgebra as differential operators on the supersymmetric version of a Grassmann algebra, see \cite{Tensor} for the complete construction, characterization and motivation. This representation generalizes the spinor-representation for $\mathfrak{so}(m)$, but also corresponds to a notion of a minimal representation for $\osp$, similar to the metaplectic representation of $\mathfrak{sp}(2n)$, see \cite{Joseph}.

\begin{definition}
\label{superGrass}
\rm The complex algebra $\Lambda_{d|n}$ is freely generated by $\{\theta_1,\cdots,\theta_d,t_1,\cdots, t_n\}$ subject to the relations
\[\theta_{j}\theta_{k}=-\theta_k\theta_j\quad\mbox{for}\quad  1\le j,k\le d,\qquad t_it_l=t_lt_i\quad\mbox{for}\quad  1\le i,l\le n\]
and
\[\theta_jt_i=-t_i\theta_j\quad\mbox{for}\quad  1\le j\le d,\quad 1\le i\le n.\]
The parity which makes $\Lambda_{d|n}$ a superalgebra, is given by $|\theta_j|=0$ and $|t_i|=1$.
\end{definition}
The algebra $\Lambda_{d|n}$ is a super antisymmetric algebra, i.e. $ab=-(-1)^{|a||b|}ba$ for $a,b$ homogeneous elements of $\Lambda_{d|n}$.

The subspaces of elements containing an even, respectively odd, amount of generators will be denoted by $\Lambda_{d|n}^+$ respectively $\Lambda_{d|n}^-$. These should not be confused with the even and odd part according to the $\mZ_2$-gradation, in which case the even part consists of elements containing an even amount of the odd generators.

The action of $\osp$ on $\Lambda_{d|n}$ for $m=2d+1$ and $m=2d$ will be given in Section \ref{superClifford}. This makes the algebra $\Lambda_{d|n}$ a simple $\mathfrak{osp}(2d+1|2n)$-module, denoted by 
\begin{eqnarray*}
\mS_{2d+1|2n}\cong K^{2d+1|2n}_{\omega_d-\frac{1}{2}\nu_n}\cong\Lambda_{d|n}.
\end{eqnarray*}
For the $\mathfrak{osp}(2d|2n)$-superalgebra, the module is the direct sum of two simple modules,
\begin{eqnarray*}
\mS_{2d|2n}=\mS^+_{2d|2n}\oplus\mS^-_{2d|2n}\cong K^{2d|2n}_{\omega_d-\frac{1}{2}\nu_n}\oplus K^{2d|2n}_{\omega_d+\nu_{n-1}-\frac{3}{2}\nu_n} \cong\Lambda_{d|n}^+\oplus\Lambda_{d|n}^-=\Lambda_{d|n}.
\end{eqnarray*}

In order to apply the Stein-Weiss procedure of \cite{MR0223492}, explained in Subsection \ref{secclassDi}, to the super spinor space, the decomposition into irreducible blocks of the tensor product with the fundamental representation is needed. This theorem and the subsequent Theorem \ref{decomp2} follow from Theorem 8 in \cite{Tensor}.
\begin{theorem}
\label{decomp1}
The tensor products of the fundamental representation of $\osp$ with the super spinor spaces satisfy
\begin{eqnarray*}
\mC^{2d+1|2n}\otimes\mS_{2d+1|2n}&\cong&\mC^{2d+1|2n}\boxtimes \mS_{2d+1|2n}\,\oplus\, \mS_{2d+1|2n}\\
\mC^{2d|2n}\otimes\mS_{2d|2n}^+\,&\cong&\mC^{2d|2n}\boxtimes \mS^+_{2d|2n}\,\,\,\oplus\,\, \mS_{2d|2n}^-\qquad\qquad\mbox{if}\quad d\not= n\\
\mC^{2d|2n}\otimes\mS_{2d|2n}^-\,&\cong&\mC^{2d|2n}\boxtimes \mS_{2d|2n}^-\,\,\,\oplus \,\,\mS_{2d|2n}^+\qquad\qquad\mbox{if}\quad d\not= n
\end{eqnarray*}
with $K^{m|2n}_\lambda\boxtimes K^{m|2n}_\mu = K^{m|2n}_{\lambda+\mu}$ the Cartan product. 

If $d=n$ the tensor products $\mC^{2n|2n}\otimes\mS_{2n|2n}^\pm$ are not completely reducible: there exist indecomposable highest weight representations $V^\pm$ such that
\begin{eqnarray*}
\mC^{2n|2n}\otimes\mS_{2n|2n}^\pm\supsetneq V^\pm\supsetneq \mS_{2n|2n}^\mp
\end{eqnarray*}
holds and the quotient $V^\pm/\mS_{2n|2n}^\mp$ is the irreducible highest weight representation with highest weight equal to the sum of the highest weights of $\mC^{2n|2n}$ and $\mS_{2n|2n}^\pm$.
\end{theorem}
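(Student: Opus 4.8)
The plan is to deduce all three statements from Theorem~8 of \cite{Tensor}, which for an arbitrary dominant weight $\lambda$ describes $\mC^{m|2n}\otimes K^{m|2n}_\lambda$: it identifies the irreducible subquotients as those $K^{m|2n}_{\lambda+\mu}$ with $\mu$ a weight of the fundamental module for which $\lambda+\mu$ is dominant, it states when these assemble into a direct sum, and it describes the non-split extension occurring otherwise. Conceptually this is the super version of the Stein--Weiss picture of Subsection~\ref{secclassDi}: the Cartan product is the kernel of the $\osp$-equivariant Clifford multiplication $\mC^{m|2n}\otimes\mS_{m|2n}\to\mS_{m|2n}$ introduced in Section~\ref{superClifford}, and the subleading spinor summand is its image. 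So the proof reduces to bookkeeping.

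First I would carry out the weight arithmetic. Here $\mC^{m|2n}\cong K^{m|2n}_{\epsilon_1}$ has weights $\{\pm\epsilon_i:1\le i\le d\}\cup\{\pm\delta_j:1\le j\le n\}$, together with $0$ when $m=2d+1$. For $\lambda=\omega_d-\tfrac12\nu_n$, the highest weight of $\mS_{2d+1|2n}$ and of $\mS^+_{2d|2n}$, the shift $\mu=\epsilon_1$ produces the Cartan product $K^{m|2n}_{\epsilon_1+\omega_d-\frac12\nu_n}$; for $m=2d+1$ the shift $\mu=0$ returns $\mS_{2d+1|2n}$, and for $m=2d$ the shift $\mu=-\delta_n$ returns $K^{2d|2n}_{\omega_d-\frac12\nu_n-\delta_n}=K^{2d|2n}_{\omega_d+\nu_{n-1}-\frac32\nu_n}=\mS^-_{2d|2n}$, using $\nu_n-\nu_{n-1}=\delta_n$. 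Starting instead from the highest weight $\omega_d+\nu_{n-1}-\tfrac32\nu_n$ of $\mS^-_{2d|2n}$, the surviving shifts $\mu=\epsilon_1$ and $\mu=+\delta_n$ give $\mC^{2d|2n}\boxtimes\mS^-_{2d|2n}$ and $\mS^+_{2d|2n}$. One then checks that every other shift fails the dominance condition of the (non-distinguished) root system of \cite{Tensor} --- in particular the classically relevant $\mu=-\epsilon_d$ when $m=2d$ --- so that precisely these two subquotients occur.

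Next I would settle complete reducibility. For $m=2d+1$, and for $m=2d$ with $d\neq n$, the list above is a genuine direct sum; I would see this from an $\osp$-equivariant section of Clifford multiplication, up to normalization $s\mapsto\sum_{j,k}E_j\,g^{jk}\otimes(E_k\cdot s)$ (landing in the tensor product with the relevant spinor space), whose composition with Clifford multiplication is a nonzero multiple of $(m-2n)$ times the identity; since $m-2n\neq0$ in exactly these cases the section is well defined and the short exact sequence splits. When $m=2d$ and $d=n$, one has $m-2n=0$, the section degenerates, and Theorem~8 of \cite{Tensor} instead yields a proper submodule $V^\pm\subset\mC^{2n|2n}\otimes\mS^\pm_{2n|2n}$ sitting in a non-split short exact sequence $0\to\mS^\mp_{2n|2n}\to V^\pm\to\mC^{2n|2n}\boxtimes\mS^\pm_{2n|2n}\to0$, which is exactly the claimed chain $\mC^{2n|2n}\otimes\mS^\pm_{2n|2n}\supsetneq V^\pm\supsetneq\mS^\mp_{2n|2n}$ with $V^\pm/\mS^\mp_{2n|2n}$ the Cartan product.

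The hard part is the $d=n$ case: one must pin down the \emph{direction} of the extension --- that $\mS^\mp_{2n|2n}$ is the submodule and the Cartan product the quotient of $V^\pm$, and not the reverse --- and that $V^\pm$ is strictly contained in the whole tensor product, so that the latter carries one further subquotient (necessarily $\mS^\mp_{2n|2n}$) above it. Once the $\osp$-action on $\Lambda_{d|n}$ is available this follows cleanly: Clifford multiplication $\mC^{2n|2n}\otimes\mS^\pm_{2n|2n}\to\mS^\mp_{2n|2n}$ is still surjective, so its kernel is a proper submodule whose head is the Cartan product; this kernel is the non-split $V^\pm$, and the quotient of the tensor product by it is $\mS^\mp_{2n|2n}$. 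Everything away from this point is the weight arithmetic above together with an appeal to Theorem~8 of \cite{Tensor}.
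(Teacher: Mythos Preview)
Your approach is essentially the paper's: both deduce the statement directly from Theorem~8 of \cite{Tensor}, and indeed the paper's entire proof is the single sentence ``This theorem and the subsequent Theorem~\ref{decomp2} follow from Theorem~8 in \cite{Tensor}.'' Your elaboration---the weight arithmetic identifying the two admissible shifts, the explicit section of Clifford multiplication whose composite with $E^\perp$ is $-(m-2n)\cdot\mathrm{id}$ (cf.\ the computation $\sum_{l,k}g_{lk}E_lE_k=-M$ in the proof of Theorem~\ref{osp12}), and the identification of $V^\pm$ with $\ker E^\perp$ in the degenerate case---goes well beyond what the paper supplies and is correct in outline, though you should double-check the index placement and grading signs in your section formula to ensure $\osp$-equivariance in the super setting.
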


In the rest of this paper, we will mostly consider superfunctions with values in the super spinor space. In particular, we will study the space $\cP\otimes\mS_{m|2n}$ and the properties of this function space as an $\osp$-module. Therefore, we look at the decomposition of the tensor product of the simple module $\cH_k$ with the super spinor space. According to Lemma \ref{irrHk}, the tensor product of the representation $K^{m|2n}_{k\epsilon_1}$ with spinor spaces needs to be studied.
\begin{theorem}
\label{decomp2}
The tensor product of the spherical harmonics on $\mR^{m|2n}$ of homogeneous degree $k$ or their simple quotient module, with the spinor spaces of $\osp$ decomposes into irreducible $\osp$-modules as follows: for $m=2d+1$
\[
\cH_k\otimes\mS_{2d+1|2n}\cong K^{2d+1|2n}_{k\epsilon_1}\otimes K^{2d+1|2n}_{\omega_d-\frac{1}{2}\nu_n}\cong K^{2d+1|2n}_{k\epsilon_1+\omega_d-\frac{1}{2}\nu_n}\oplus K^{2d+1|2n}_{(k-1)\epsilon_1+\omega_d-\frac{1}{2}\nu_n}
\]
holds and for $m=2d$ and $k\not=n-d+1$
\begin{eqnarray*}
K^{2d|2n}_{k\epsilon_1}\otimes\mS_{2d|2n}^+\cong K^{2d|2n}_{k\epsilon_1}\otimes K^{2d|2n}_{\omega_d-\frac{1}{2}\nu_n}&\cong&K^{2d|2n}_{k\epsilon_1+\omega_d-\frac{1}{2}\nu_n}\oplus K^{2d|2n}_{(k-1)\epsilon_1+\omega_d+\nu_{n-1}-\frac{3}{2}\nu_n}\\
K^{2d|2n}_{k\epsilon_1}\otimes\mS_{2d|2n}^-\cong K^{2d|2n}_{k\epsilon_1}\otimes K^{2d|2n}_{\omega_d+\nu_{n-1}-\frac{3}{2}\nu_n}&\cong&K^{2d|2n}_{k\epsilon_1+\omega_d+\nu_{n-1}-\frac{3}{2}\nu_n}\oplus K^{2d|2n}_{(k-1)\epsilon_1+\omega_d-\frac{1}{2}\nu_n}
\end{eqnarray*}
holds. If $k=n-d+1$ the tensor product is not completely reducible.
\end{theorem}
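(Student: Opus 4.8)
The plan is to reduce Theorem \ref{decomp2} to Theorem \ref{decomp1} by an associativity-of-tensor-products argument, exactly as Theorem \ref{decomp1} was itself obtained from Theorem 8 in \cite{Tensor}. The starting point is the observation from Lemma \ref{irrHk} that, for $M\not\in-2\mN$, $\cH_k\cong K^{m|2n}_{k\epsilon_1}$ is irreducible, and that (as recalled in Section \ref{prelosp}) $\mC^{m|2n}\cong K^{m|2n}_{\epsilon_1}$ with $\cH_1\cong\mC^{m|2n}$. Since $K^{m|2n}_{k\epsilon_1}$ sits inside $(\mC^{m|2n})^{\odot k}$ as the Cartan power (the supersymmetric powers of the defining representation are built from harmonics via the harmonic Fischer decomposition of Lemma \ref{superFischerLemma}), one can hope to iterate the single-box rule of Theorem \ref{decomp1}.

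First I would set up the $\mathfrak{osp}(1|2)$-module structure that organizes the whole computation: on $\cP\otimes\mS_{m|2n}$ the super Dirac operator together with the vector variable generate $\mathfrak{osp}(1|2)$ (this is the construction of Section \ref{secHowe}, which may be assumed), and $\cK_{ij}\mapsto L_{ij}-\tfrac12[\text{spinor part}]$ gives a commuting $\osp$-action. The kernel of the Dirac operator in degree $k$ decomposes into highest-weight $\osp$-modules, and the full space $\cP_k\otimes\mS_{m|2n}$ is recovered from these "spherical monogenics" by applying powers of the vector variable — a super analogue of the monogenic Fischer decomposition (Theorem \ref{bosmon}). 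The highest weight of each monogenic piece is read off by the recipe of Section 4 in \cite{Tensor}. Concretely I expect $\cH_k\otimes\mS_{2d+1|2n}$ to contain $K_{k\epsilon_1+\omega_d-\frac12\nu_n}$ as the Cartan product together with exactly one lower piece, and the candidate for the lower piece is forced by comparing with the $k=1$ instance, which is precisely Theorem \ref{decomp1}. A super-character or super-dimension bookkeeping argument (using that $\dim\cH_k$ and $\dim\mS_{m|2n}$ are known, or rather that the formal characters match) then shows there is no room for a third summand, so the decomposition is exactly two-term for generic $k$.

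For the even case $m=2d$ the same strategy applies to $\mS^{\pm}_{2d|2n}$ separately, with the single-box rule swapping $+\leftrightarrow-$ in the chirality; the highest weights of the two summands are exactly the ones listed, and the flip in the subleading weight between the $+$ and $-$ computations mirrors the flip $\mS^{+}\leftrightarrow\mS^{-}$ already present in Theorem \ref{decomp1}. The condition $d\neq n$ there becomes, after tensoring with $K_{k\epsilon_1}$, the shifted condition $k\neq n-d+1$: this is exactly the value of $k$ at which the relevant subleading weight hits the wall where the corresponding $\osp$-module fails to be irreducible (the analogue of the $d=n$ obstruction), so complete reducibility fails there for the same structural reason.

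The main obstacle is the non-semisimplicity issue: the category of $\osp$-modules is not semisimple, so "the tensor product of irreducibles decomposes into irreducibles" is false in general, and one cannot simply multiply Theorem \ref{decomp1} by $K_{(k-1)\epsilon_1}$ and collect terms. The careful point is to verify that for $m-2n\not\in-2\mN$ and $k\neq n-d+1$ all the modules appearing — $\cH_k$, the spinor spaces, and the two candidate summands — are genuinely irreducible and that no nontrivial extensions occur between them; this is where one must invoke the irreducibility criterion of Lemma \ref{irrHk} together with the highest-weight combinatorics of \cite{Tensor} to rule out a would-be indecomposable glueing. Identifying the precise exceptional value $k=n-d+1$, and checking that the decomposition really does fail there (rather than merely that the argument breaks), is the delicate endpoint of the proof; I would handle it by pinpointing which summand's highest weight $\lambda$ satisfies the reducibility condition of Lemma \ref{irrHk}, paralleling the $d=n$ analysis of Theorem \ref{decomp1}.
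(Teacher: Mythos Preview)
The paper does not give an in-house proof of this theorem at all: the sentence immediately preceding Theorem~\ref{decomp1} states that both Theorem~\ref{decomp1} and Theorem~\ref{decomp2} ``follow from Theorem~8 in \cite{Tensor}''. So the paper's proof is a citation, and your proposal is not being compared against an argument but against an external reference.

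That said, your proposed argument has a genuine circularity problem relative to the paper's internal logic. You want to identify the summands of $\cH_k\otimes\mS_{m|2n}$ via the monogenic decomposition (the $\mathfrak{osp}(1|2)$-machinery of Section~\ref{secHowe}), but in the paper that decomposition \emph{uses} Theorem~\ref{decomp2}: the proof of Theorem~\ref{HkS} first obtains the algebraic splitting $\cH_k\otimes\mS_{m|2n}=\cM_k\oplus\bold{x}\cM_{k-1}$ from the $\mathfrak{osp}(1|2)$-relations alone, and then appeals to Theorem~\ref{decomp2} to conclude that $\cM_k$ is irreducible with the stated highest weight. Corollary~\ref{monosp} makes the dependence explicit. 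So invoking the monogenic picture to prove Theorem~\ref{decomp2} is circular unless you can establish the irreducibility of $\cM_k$ by an independent route---and your sketch defers precisely that step back to ``the highest-weight combinatorics of \cite{Tensor}'', which is the citation the paper already makes.

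Two further points. First, the super spinor space $\mS_{m|2n}\cong\Lambda_{d|n}$ is infinite-dimensional (the $t_i$ are commuting), so a dimension or super-dimension count cannot rule out extra summands; only a formal-character argument could, and that again leads you into the territory of \cite{Tensor}. Second, your diagnosis of the exceptional value $k=n-d+1$ via Lemma~\ref{irrHk} is off target: Lemma~\ref{irrHk} governs reducibility of the scalar harmonics $\cH_k$, whereas the failure at $k=n-d+1$ is a failure of \emph{complete reducibility of the tensor product} (a nontrivial extension between two irreducibles), analysed in the paper only later via Corollary~\ref{tensornotcr}, and again ultimately traced to \cite{Tensor}.
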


\section{The super Clifford algebra and $\osp$-spinors}
\label{superClifford}

As in the classical case the Dirac operator will be elegantly described in terms of a Clifford-type algebra. This super Clifford algebra will be identified with an algebra of endomorphisms on the super spinor space. The definition of the super Clifford algebra is an immediate graded extension of equation \eqref{classClifford}. We do not use the term Clifford-Weyl algebra here because then the generators of the Clifford algebra are usually assumed to commute with the the generators of the Weyl algebra, whereas in the following super Clifford algebra they mutually anticommute.

\begin{definition}
\label{defClifford}
{\rm The} super Clifford algebra $\sC$ \rm corresponding to the super vector space $\mC^{m|2n}$ is the associative superalgebra generated by $E_{k}$, where $k=1,\cdots,m+2n$. The gradation is given by $|E_k|=[k]$ and the multiplication relation is
\begin{eqnarray}
\label{commrelCliff}
E_kE_l+(-1)^{[k][l]}E_lE_k=-2g_{lk},
\end{eqnarray}
with $g$ the orthosymplectic metric \eqref{gmetric}.
\end{definition}
The super vector space $\mC^{m|2n}$ is naturally embedded in the super Clifford algebra by identifying the basis vectors $E_j$.

We define the superalgebra morphism $\widehat\cdot:a\to\widehat{a}$ on $\sC$ generated by
\begin{eqnarray*}
\widehat{E_i}&=&\sum_{j=1}^{m+2n}E_jg_{ji}\quad\mbox{ and}\\
\widehat{ab}&=&\widehat{a}\,\widehat{b}\mbox{ for }a,b\in\sC.
\end{eqnarray*}
This morphism is well-defined since the equation $\widehat{E_k}\widehat{E_l}+(-1)^{[k][l]}\widehat{E_l}\widehat{E_k}=-2g_{lk}$ holds.

\begin{definition}
\label{kappamap}
\rm The superalgebra morphism from the Clifford algebra $\cC l_{m|2n}$ to the endomorphism algebra End$(\mS_{m|2n})$ of spinor space $\mS_{m|2n}$ in Definition \ref{superGrass}, is given by $\kappa:$ $\sC\to$End$(\mS_{m|2n})$:
\begin{eqnarray*}
\kappa(E_j)=(\theta_j-\partial_{\theta_j}) & \mbox{for}&j=1,\cdots, d\\
\kappa(E_{d+j})=i(\theta_j+\partial_{\theta_j})& \mbox{for}&j=1,\cdots, d\\
\kappa(E_m)= i(-1)^{\sum_{j=1}^{d}\theta_j\partial_{\theta_j}+\sum_{i=1}^nt_i\partial_{t_i}}&\mbox{for} &\mbox{if }m=2d+1\\ 
\kappa(E_{m+i})=\sqrt{2}t_i&\mbox{for}&i=1,\cdots, n\\ 
\kappa(E_{m+n+i})=-\sqrt{2}\partial_{t_i}&\mbox{for}&i=1,\cdots ,n,
\end{eqnarray*}
with $d=\lfloor m/2\rfloor$.
\end{definition}
The operator $G=(-1)^{\sum_{j=1}^{d}\theta_j\partial_{\theta_j}+\sum_{i=1}^nt_i\partial_{t_i}}$ satisfies $\theta_iG=-G\theta_i$ and $t_iG=-Gt_i$. It is readily checked that $\kappa$ generates an algebra morphism by using the commutation relations in Definition \ref{superGrass}. When there is no confusion possible we will use the notation $A\cdot v$ for $\kappa(A)v$ with $A\in\sC$ and $v\in\mS_{m|2n}$.

The orthosymplectic Lie superalgebra $\osp$ can be embedded into the super Clifford algebra $\sC$ by an identification with the bi-vectors:
\begin{align}
\label{Bij}
\begin{split}
\iota:\cK_{kl}\rightarrow B_{kl}&=-\frac{1}{2}\left(g_{lk}+\sum_{a,b=1}^{m+2n}E_aE_bg_{ak}g_{bl}\right)\\
&=-\frac{1}{2}\left(\widehat{E}_k\widehat{E}_l+g_{lk}\right)\\
&=\frac{1}{4}\left((-1)^{[k][l]}\widehat{E_lE_k}-\widehat{E_kE_l}\right)
\end{split}
\end{align}

A direct calculation using Definition \ref{defClifford} shows that the corresponding bi-vectors $B_{kl}$ do satisfy the relations in formula \eqref{commL}.

The combination of the embedding of $\osp$ into $\sC$ with the action of $\sC$ on $\Lambda_{d|n}\cong\mS_{m|2n}$ in Definition \ref{kappamap} make $\mS_{m|2n}$ into an $\osp$-module, $\pi_{\mS}=\kappa\circ\iota$, or
\[
\pi_{\mS}(\cK_{kl})=\kappa(\iota(\cK_{kl})).
\]

\begin{remark}
\rm{The actions such as $\pi_{\cO}$ and $\pi_{\mS}$ will also be used to denote the action of $\osp$ on the tensor product of $\cO(\mR^{m|2n})$ or $\mS_{m|2n}$ with the trivial representation. So in particular for any $\osp$-module $V$, the formula $\pi_{\cO}(X)(f\otimes v)$ for $f\in\cO(\mR^{m|2n})$ and $v\in V$ is equal to $(\pi_{\cO}(X)f)\otimes v$, since $V$ is regarded here as a direct sum of trivial $\osp$-modules.}
\end{remark}

When we consider functions on $\mR^{m|2n}$ with values in a super vector space $V$, $\cO(\mR^{m|2n})\otimes V$, we can identify $f\otimes v$ with $fv$ or $(-1)^{|f||v|}vf$. It can be checked that the corresponding commutation relation $X_jv=(-1)^{[j]|v|}vX_j$ leads to a consistent $\mathfrak{osp}(m|2n)$-action. In the case of spinor valued functions $\cO(\mR^{m|2n})\otimes \mS_{m|2n}$, this implies that there is a gradation in the commutation of functions with the endomorphisms on $\mS_{m|2n}$. Therefore the commutation relation
\[
X_jE_k =(-1)^{[j][k]}E_k X_j
\]
holds.

\begin{remark}
\rm{In previous approaches to super Clifford analysis, see e.g. \cite{MR2374394}, a slightly different algebra was considered, where this commutation relation was $X_jE_k=E_k X_j$. However, this does not allow for an $\mathfrak{osp}(m|2n)$-symmetry of the basic operators.}
\end{remark}

Theorem \ref{decomp1} implies that there exists a surjective $\osp$-module morphism $E^\perp: \mC^{m|2n}\otimes \mS_{m|2n}\to \mS_{m|2n}$ if $m-2n\not=0$. For $m=2d+1$ this is unique (up to a multiplicative constant) while for $m=2d$ the space of such morphisms is two dimensional due to the fact that $\mS_{2d|2n}$ decomposes into two simple modules. For $m=2d$ however, there is also a preferred natural choice for $E^\perp$. The subsequent lemma also proves that such a vector space morphism still exists for the case $m-2n=0$.

\begin{lemma}
\label{Eperp}
The super vector space morphism $E^\perp: \mC^{m|2n}\otimes \mS_{m|2n}\to \mS_{m|2n}$ defined by the expression
\begin{eqnarray*}
E^\perp (E_k\otimes v)=\sum_{l=1}^{m+2n}(E_l\cdot v) g_{lk}=\kappa\left(\widehat{E_k}\right)v
\end{eqnarray*}
is an $\osp$-module morphism, thus it is invariant with respect to the $\osp$-action $\pi_{\mS}$ on the super spinor space and the natural action on $\mC^{m|2n}$ given in equation \eqref{nataction}.
\end{lemma}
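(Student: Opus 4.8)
The plan is to verify directly that the proposed map $E^\perp$ intertwines the two $\osp$-actions, by an explicit computation with the standard generators $\cK_{kl}$. Concretely, I would fix basis elements $E_i, E_j$ (a generator index pair) and a basis vector $E_k$ of $\mC^{m|2n}$, together with $v\in\mS_{m|2n}$, and check that
\[
E^\perp\bigl(\pi(\cK_{ij})(E_k\otimes v)\bigr)=\pi_{\mS}(\cK_{ij})\,E^\perp(E_k\otimes v),
\]
where on the left $\pi$ denotes the tensor-product action built from the defining action \eqref{nataction} on $\mC^{m|2n}$ and $\pi_{\mS}$ on $\mS_{m|2n}$. Since $E^\perp(E_k\otimes v)=\kappa(\widehat{E_k})v$ and $\pi_{\mS}=\kappa\circ\iota$, after applying $\kappa$ everything reduces to an identity inside $\sC$ acting on $v$, namely that
\[
\sum_{l}g_{lk}\,\iota(\cK_{ij})\widehat{E_l}\;-\;(-1)^{|\cK_{ij}|\,[k]}\sum_{l}g_{lk}\,\widehat{E_l}\,\iota(\cK_{ij})
\;=\;\widehat{\cK_{ij}E_k}\ \text{acting via }\kappa,
\]
where $\cK_{ij}E_k$ on the right is computed from \eqref{nataction}. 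Here the grading sign $(-1)^{|\cK_{ij}||v|}$ from the tensor action and the sign in $X_jE_k=(-1)^{[j][k]}E_kX_j$ must be bookkept carefully — that is the only subtle point.

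The key algebraic input is the commutator $[\iota(\cK_{ij}),E_a]$ inside $\sC$. Using $\iota(\cK_{ij})=B_{ij}=-\tfrac12(\widehat E_i\widehat E_j+g_{ji})$ together with the defining relations \eqref{commrelCliff}, a short computation gives
\[
[B_{ij},E_a]\;=\;g_{aj}\widehat{E_i}-(-1)^{[i][j]}g_{ai}\widehat{E_j},
\]
i.e. $B_{ij}$ acts on the generators $E_a$ (and hence on the $\widehat{E_a}$) exactly by the defining representation \eqref{nataction}; this is precisely the statement that the bi-vectors $B_{ij}$ satisfy \eqref{commL}, which is already asserted in the text after \eqref{Bij}. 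Applying this with $E_a$ replaced by $\widehat{E_k}=\sum_l E_l g_{lk}$ and then acting on $v$ yields
\[
\pi_{\mS}(\cK_{ij})\,\kappa(\widehat{E_k})v
=\kappa(\widehat{E_k})\,\pi_{\mS}(\cK_{ij})v
+\kappa\bigl(g_{kj}\widehat{E_i}-(-1)^{[i][j]}g_{ki}\widehat{E_j}\bigr)v,
\]
after re-expressing $[B_{ij},\widehat E_k]$ via the metric. On the other side, $E^\perp(\pi(\cK_{ij})(E_k\otimes v))$ expands, using \eqref{nataction} for the $\mC^{m|2n}$-component and $E^\perp(E_k\otimes\pi_{\mS}(\cK_{ij})v)=\kappa(\widehat E_k)\pi_{\mS}(\cK_{ij})v$ for the spinor component, into the same two terms — provided the grading signs line up, which they do because $\widehat{\cdot}$ is a grade-preserving morphism and $|E_l|=[l]$.

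The main obstacle I anticipate is purely in the sign bookkeeping: the tensor-product action carries the factor $(-1)^{|\cK_{ij}||v|}$ on the second summand, and the identification of $\cO(\mR^{m|2n})\otimes\mS_{m|2n}$-style gradings forces $\widehat{E}_l$ and $v$ to graded-commute with a sign $(-1)^{[l]|v|}$; one must check these conspire with the $(-1)^{[i][j]}$ in \eqref{nataction} and the $(-1)^{[i][j]}$ appearing in $[B_{ij},E_a]$ so that the two expressions agree identically rather than up to a sign depending on the parities of $i,j,k$. Once the commutator formula $[B_{ij},E_a]=g_{aj}\widehat E_i-(-1)^{[i][j]}g_{ai}\widehat E_j$ is established and the parities are tracked through a single generic case (say $i,j,k$ of arbitrary parity), the identity follows, and linearity in $E_k$ and in $v$, together with the fact that the $\cK_{ij}$ generate $\osp$, completes the proof. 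For $m=2d+1$ one should additionally note that $\widehat{E}_m$ involves the factor $G$, but since $G$ also graded-commutes with the $E_a$ in the prescribed way, no extra case analysis is needed.
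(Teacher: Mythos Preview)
Your proposal is correct and follows essentially the same route as the paper's own proof: both reduce the intertwining property to the single Clifford-algebra identity $[B_{ij},\widehat{E_k}]=g_{kj}\widehat{E_i}-(-1)^{[i][j]}g_{ki}\widehat{E_j}=\widehat{\cK_{ij}\cdot E_k}$ (this is exactly equation \eqref{Baction} in the paper), and then unwind $\pi_{\mS}(\cK_{ij})\kappa(\widehat{E_k})v$ using this commutator to match it against $E^\perp(\cK_{ij}\cdot(E_k\otimes v))$. One minor remark: your displayed formula $[B_{ij},E_a]=g_{aj}\widehat{E_i}-(-1)^{[i][j]}g_{ai}\widehat{E_j}$ is not quite right as written (it holds for $\widehat{E_a}$ in place of $E_a$, which is what you actually use two lines later), and your closing comment about the operator $G$ for $m=2d+1$ is unnecessary since the whole computation lives in $\sC$ and $\kappa$ is already an algebra morphism.
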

\begin{proof}
It has to be proved that $E^\perp\circ\cK_{ij}=\cK_{ij}\circ E^\perp$, or 
\[
E^\perp\left((\cK_{ij}\cdot E_k)\otimes v\right)+(-1)^{[k]([i]+[j])}E^\perp(E_k\otimes \pi_\mS(\cK_{ij})v)=\pi_\mS(\cK_{ij})\kappa\left(\widehat{E_k}\right)v.
\]
Therefore we need the relation
\begin{eqnarray}
\label{Baction}
[B_{ij},\widehat{E_k}]=-\frac{1}{2}\widehat{[E_iE_j,E_k]}=g_{kj}\widehat{E_i}-(-1)^{[i][j]}g_{ki}\widehat{E_j}\,=\,\widehat{\cK_{ij}\cdot E_k}.
\end{eqnarray}
Using this, we calculate
\begin{eqnarray*}
\cK_{ij}\cdot E^\perp (E_k\otimes v)&=&\kappa(\iota(\cK_{ij}))\kappa(\widehat{E_k}) v=\kappa(B_{ij}\widehat{E_k}) v) \\
&=&\kappa([B_{ij},\widehat{E_k}]) v+(-1)^{([i]+[j])[k]}(\widehat{E_k}\cdot(\cK_{ij}\cdot v)) \\
&=&\kappa(\widehat{\cK_{ij}\cdot E_k}) v+(-1)^{([i]+[j])[k]}(\widehat{E_k}\cdot(\cK_{ij}\cdot v)) \\
&=&E^\perp \left((\cK_{ij}\cdot E_k) \otimes v\right)+(-1)^{([i]+[j])[k]}E^\perp(E_k\otimes \cK_{ij}\cdot v)\\
&=&E^\perp \left(\cK_{ij}\cdot (E_k \otimes v)\right),
\end{eqnarray*}
which is the equation that needed to be proved.
\end{proof}

\section{The super Dirac operator}
\label{secdefDirac}

As in the classical case the super gradient is the $\osp$-invariant first order differential operator acting between $\cO(\mR^{m|2n})$ and $\cO(\mR^{m|2n})\otimes \mC^{m|2n}$, see the subsequent Lemma \ref{lemmagrad}. It is defined as
\begin{eqnarray}
\label{gradient}
\nabla f=\sum_{j=1}^{m+2n}(-1)^{[j](1+|f|)}\partial_{X_j}f\otimes E_j
\end{eqnarray}
for $f\in\cO(\mR^{m|2n})$ homogeneous. It is also naturally an $\mathfrak{osp}(m|2n)$-invariant operator between $\cO(\mR^{m|2n})\otimes V$ and $\cO(\mR^{m|2n})\otimes \mC^{m|2n}\otimes V$ for $V$ any $\osp$-representation.
\begin{lemma}
\label{lemmagrad}
The gradient $\nabla$ is an $\mathfrak{osp}(m|2n)$-invariant operator between $\cO(\mR^{m|2n})\otimes V$ and $\cO(\mR^{m|2n})\otimes \mC^{m|2n}\otimes V$ for $V$ any $\osp$-representation.
\end{lemma}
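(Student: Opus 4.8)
The plan is to verify directly that $\nabla$ intertwines the $\osp$-action on the source space $\cO(\mR^{m|2n})\otimes V$ with that on the target $\cO(\mR^{m|2n})\otimes\mC^{m|2n}\otimes V$. Since $V$ enters both sides merely as a spectator tensor factor (the action being the graded tensor-product action), it suffices to treat the case $V=\mC$ and show $\nabla\circ\pi_{\cO}(\cK_{ij})=\pi(\cK_{ij})\circ\nabla$, where on the right $\pi$ denotes the tensor-product action combining $\pi_{\cO}$ on $\cO(\mR^{m|2n})$ with the natural action \eqref{nataction} on the $\mC^{m|2n}$-factor. Concretely, writing $L_{ij}=X_i\partial_{X^j}-(-1)^{[i][j]}X_j\partial_{X^i}$ as in \eqref{Lij}, I must check that for homogeneous $f$,
\[
\sum_{l}(-1)^{[l](1+|f|)}\partial_{X_l}(L_{ij}f)\otimes E_l
\]
equals
\[
\sum_{l}(-1)^{[l](1+|L_{ij}f|)}(\partial_{X_l}f)\otimes(\cK_{ij}\cdot E_l)\;+\;\sum_{l}(-1)^{[l](1+|f|)}(L_{ij}\partial_{X_l}f)\otimes E_l .
\]

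The key computational step is the commutator identity between the first-order operators $L_{ij}$ and the partial derivatives, namely an analogue of the classical relation $[L_{ij},\partial_{X_l}]=-(\text{natural action of }\cK_{ij}\text{ on the index }l)$. Precisely, I expect to establish
\[
[\,\partial_{X_l},\,L_{ij}\,]=g_{lj}\,\partial_{X^i}-(-1)^{[i][j]}g_{li}\,\partial_{X^j}
\]
up to the sign bookkeeping forced by the graded Leibniz rule $\partial_{X_l}X_k=\delta_{lk}+(-1)^{[l][k]}X_k\partial_{X_l}$ and the raising convention $\partial_{X^j}=\sum_k g_{kj}\partial_{X_k}$. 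This should be checked by a short direct calculation from the definitions \eqref{Lij} and the derivation rule, being careful that $g$ is block-diagonal so that $[l]$-grading issues only couple derivatives of the same parity. Once this is in hand, summing against $E_l$ and comparing with \eqref{nataction}, $\cK_{ij}E_l=g_{lj}E_i-(-1)^{[i][j]}g_{li}E_j$, the ``extra'' term produced by moving $L_{ij}$ past $\partial_{X_l}$ reassembles exactly into $\sum_l(\partial_{X_l}f)\otimes(\cK_{ij}\cdot E_l)$, which is the first term on the target side; the remaining term $L_{ij}\partial_{X_l}f$ matches term-by-term.

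The main obstacle is not conceptual but the sign/parity accounting: the factor $(-1)^{[j](1+|f|)}$ in the definition \eqref{gradient} means that applying $\nabla$ to $L_{ij}f$ (whose parity is $|L_{ij}f|=[i]+[j]+|f|$) produces signs $(-1)^{[l](1+[i]+[j]+|f|)}$ that must be reconciled with the signs arising when $\cK_{ij}$ acts on $E_l$ in the graded tensor action, which carries its own factor $(-1)^{|\cK_{ij}||\,\cdot\,|}=(-1)^{([i]+[j])(\,\cdot\,)}$. I would organize the verification by fixing the parities $[i],[j],[l]$ and $|f|$ as elements of $\mZ_2$ and checking the identity of signs in each case, or more slickly by noting that the whole computation is the $n=0$ computation with every transposition of symbols replaced by a Koszul sign, so that consistency is guaranteed once one checks it respects the super Jacobi identity \eqref{commL}. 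A clean alternative is to invoke that $\nabla f=\sum_j(-1)^{[j](1+|f|)}(\partial_{X_j}f)\otimes E_j$ is, up to normalization, the canonical map $\cO\to\cO\otimes\mC^{m|2n}$ dual to the contraction $\mC^{m|2n}\otimes\cO\to\cO$, $E_j\otimes f\mapsto \partial_{X^j}f$, and this contraction is manifestly $\osp$-equivariant because it is built from the invariant pairing $g$ together with the equivariance of $X_j\mapsto\partial_{X^j}$ — but since the paper's style is to verify such statements by explicit commutators, I would present the direct check as above.
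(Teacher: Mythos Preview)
Your approach is essentially identical to the paper's: reduce to the scalar case $V=\mC$, compute the commutator of $\partial_{X_l}$ with $L_{ij}$, and match the resulting extra term against the natural action \eqref{nataction} on the $E_l$-factor, tracking the Koszul signs coming from $|L_{ij}f|=[i]+[j]+|f|$. One small correction: the commutator you expect has its indices transposed and should involve Kronecker deltas rather than the metric --- the paper's identity (equation \eqref{partL}) reads, in your labeling, $[\partial_{X_l},L_{ij}]=\delta_{li}\,\partial_{X^j}-(-1)^{[i][j]}\delta_{lj}\,\partial_{X^i}$, and it is only after contracting against $E_l$ and re-indexing that the $g$'s from \eqref{nataction} appear on the other side of the comparison.
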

\begin{proof}
This follows immediately from the scalar case, where $V$ is equal to the trivial representation. The equation
\begin{equation}
\label{partL}
[\partial_{X_j},L_{kl}]=\delta_{jk}\partial_{X^l}-(-1)^{[k][l]}\delta_{jl}\partial_{X^k},
\end{equation}
which follows from the definition in equation \eqref{Lij} leads to the calculation
\begin{eqnarray*}
\nabla(L_{kl}f)&=&\sum_{j=1}^{m+2n}(-1)^{[j](1+|f|)}(L_{kl}\partial_{X_j}f)\otimes E_j\\
&+&(-1)^{[k](|f|+[l])}\partial_{X^l}f\otimes E_k-(-1)^{[k][l]}(-1)^{[l](|f|+[k])}\partial_{X^k}f\otimes E_l.
\end{eqnarray*}
This has to be equal to
\[
\cK_{kl}\cdot (\nabla f)=\sum_{j=1}^{m+2n}(-1)^{[j](1+|f|)}(L_{kl}\partial_{X_j}f)\otimes E_j+\sum_{j=1}^{m+2n}(-1)^{[j](1+|f|)}(-1)^{([j]+|f|)([k]+[l])}\partial_{X_j}f\otimes \cK_{kl}\cdot E_j.
\]
Using equation \eqref{nataction}, the second term is calculated as follows  
\begin{eqnarray*}
&&\sum_{j=1}^{m+2n}(-1)^{([j]+|f|)([j]+[k]+[l])}\partial_{X_j}f\otimes \cK_{kl}\cdot E_j\\
&=&\sum_{j=1}^{m+2n}(-1)^{([j]+|f|)[k]}\partial_{X_j}f\otimes g_{jl}E_k-\sum_{j=1}^{m+2n}(-1)^{([j]+|f|)[l]}(-1)^{[k][l]}\partial_{X_j}f\otimes g_{jk}E_l\\
&=&(-1)^{([l]+|f|)[k]}\partial_{X^l}f\otimes E_k-(-1)^{|f|[l]}\partial_{X^k}f\otimes E_l,
\end{eqnarray*}
which proves the lemma.
\end{proof}

Now we have all the necessary ingredients to define the super Dirac operator according to the Stein-Weiss procedure in \cite{MR0223492}. Only for the case $M=m-2n=0$ the tensor product in Theorem \ref{decomp1} of the fundamental representation with the spinorial representations does not decompose into a Cartan product part and a spinor representation, since this tensor product is not completely reducible. However, there is still a unique homomorphism $\mC^{2n|2n}\otimes\mS^\pm_{2n|2n}\to \mS^{\mp}_{2n|2n}$ and the corresponding $E^\perp$ is given in Lemma \ref{Eperp}. Thus, even though the notion of Cartan product is ill-defined for $\mC^{2n|2n}\otimes\mS^\pm_{2n|2n}$, the Stein-Weiss procedure remains applicable. 

\begin{definition}
\label{defDirac}
{\rm The} super Dirac operator \rm is the first order differential operator acting on the function space $\cO(\mR^{m|2n})\otimes\mS_{m|2n}$ given by
\begin{eqnarray*}
\px f= E^\perp (\nabla f)
\end{eqnarray*}
for $f\in\cO(\mR^{m|2n})\otimes\mS_{m|2n}$ with $E^\perp: \mC^{m|2n}\otimes \mS_{m|2n}\to \mS_{m|2n}$ given in Lemma \ref{Eperp} and $\nabla$ the gradient from equation \eqref{gradient}.
\end{definition}
As in the classical case the Dirac operator can be elegantly expressed by making use of the identification between the Clifford algebra and the endomorphisms on spinor space in Definition \ref{kappamap}.

\begin{proposition}
\label{DiracCliff}
The super Dirac operator of Definition \ref{defDirac} is given by
\begin{eqnarray*}
\px f= \sum_{j,k=1}^{m+2n}g_{jk}E_j\cdot(\partial_{X_k}f)
\end{eqnarray*}
for $f\in\cO(\mR^{m|2n})\otimes\mS_{m|2n}$. In this formula, $E_j \cdot h\otimes v$ should be interpreted as $(-1)^{[j]|h|}h\otimes E_j\cdot v=(-1)^{[j]|h|}h\otimes \kappa(E_j) v$ for $h\in\cO(\mR^{m|2n})$ homogeneous and $v\in\mS_{m|2n}$. The Dirac operator can equally be expressed as
\[
\px=\sum_{j,k=1}^{m+2n}\kappa(E_j)g_{jk}\partial_{X_k}=\sum_{k=1}^{m+2n}\kappa(\widehat{E_k})\partial_{X_k}=\sum_{j=1}^{m+2n}\partial_{X^j}\kappa(E_j).
\]
\end{proposition}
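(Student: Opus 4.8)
The plan is to prove the formula by a direct (sign-heavy) composition, and then to translate the resulting operator between its three guises. By $\mC$-linearity it suffices to take $f=h\otimes v$ with $h\in\cO(\mR^{m|2n})$ and $v\in\mS_{m|2n}$ homogeneous. Substituting the gradient \eqref{gradient} into $\px f=E^\perp(\nabla f)$ and using $\partial_{X_j}(h\otimes v)=(\partial_{X_j}h)\otimes v$ gives $\nabla f=\sum_j(-1)^{[j](1+|h|+|v|)}(\partial_{X_j}h)\otimes v\otimes E_j$. Moving $E_j$ from the outer slot past the spinorial factor $v$ to the position where $E^\perp$ acts contributes the Koszul sign $(-1)^{[j]|v|}$ dictated by the conventions of Section \ref{superClifford}, so the $|v|$-dependence cancels and $\nabla f=\sum_j(-1)^{[j](1+|h|)}(\partial_{X_j}h)\otimes E_j\otimes v$. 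Applying $E^\perp$ through Lemma \ref{Eperp}, i.e.\ $E^\perp(E_j\otimes v)=\kappa(\widehat{E_j})v$, then yields $\px f=\sum_{j=1}^{m+2n}(-1)^{[j](1+|h|)}(\partial_{X_j}h)\otimes\kappa(\widehat{E_j})v$.

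Next I would recognise the right-hand side as $\sum_{j,k}g_{jk}E_j\cdot(\partial_{X_k}f)$. Writing $\widehat{E_k}=\sum_jE_jg_{jk}$ and using that $\kappa$ is an algebra morphism (Definition \ref{kappamap}) gives $\kappa(\widehat{E_k})=\sum_j\kappa(E_j)g_{jk}$; since $g_{jk}=0$ unless $[j]=[k]$, the stated convention $E_j\cdot(h'\otimes v)=(-1)^{[j]|h'|}h'\otimes\kappa(E_j)v$ attaches to $g_{jk}E_j\cdot((\partial_{X_k}h)\otimes v)$ the sign $(-1)^{[j](|h|+[j])}=(-1)^{[j]}(-1)^{[j]|h|}$, which is exactly the sign appearing above. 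Thus the two expressions coincide term by term, and reading the same computation as an identity of operators on $\cO(\mR^{m|2n})\otimes\mS_{m|2n}$ gives $\px=\sum_{j,k}\kappa(E_j)g_{jk}\partial_{X_k}$, whence also $\px=\sum_k\kappa(\widehat{E_k})\partial_{X_k}$ by the same rewriting of $\widehat{E_k}$. For the final form I would invoke $\partial_{X^j}=\sum_kg_{kj}\partial_{X_k}$ together with two elementary facts: $\kappa(E_j)$ and $\partial_{X_k}$ act on different tensor factors and hence satisfy $\kappa(E_j)\partial_{X_k}=(-1)^{[j][k]}\partial_{X_k}\kappa(E_j)$, and the orthosymplectic metric \eqref{gmetric} is ``supersymmetric'', $g_{kj}=(-1)^{[j][k]}g_{jk}$ (symmetric on the $I_m$-block, antisymmetric on the $J_{2n}$-block). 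Combining them turns $\sum_{j,k}\kappa(E_j)g_{jk}\partial_{X_k}$ into $\sum_{j,k}g_{kj}\partial_{X_k}\kappa(E_j)=\sum_j\partial_{X^j}\kappa(E_j)$.

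The only real obstacle is keeping the grading signs straight: one has to follow simultaneously the Koszul sign from moving $E_j$ past the spinor, the sign $(-1)^{[j](1+|f|)}$ built into $\nabla$, and the sign $(-1)^{[j]|h|}$ concealed in the notation $E_j\cdot(h\otimes v)$, and to verify that the residual factor $(-1)^{[j]}$, present precisely when $\widehat{E_j}$ is odd (that is, when $[j]=\overline{1}$), is absorbed correctly into each of the three forms. Beyond this bookkeeping, nothing is needed except Lemma \ref{Eperp}, the morphism property of $\kappa$, and the commutation relations of Section \ref{superClifford}.
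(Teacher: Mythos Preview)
Your argument is correct: you unpack Definition \ref{defDirac} by composing the gradient \eqref{gradient} with the projection of Lemma \ref{Eperp}, track the Koszul signs carefully, and then rewrite the resulting operator using $\widehat{E_k}=\sum_j E_jg_{jk}$, the supersymmetry $g_{kj}=(-1)^{[j][k]}g_{jk}$ of the metric, and the graded commutation of $\kappa(E_j)$ with $\partial_{X_k}$. The paper gives no proof for this proposition at all---it is stated and immediately followed by Remark \ref{remDiracCl}---so your computation is exactly the verification the paper leaves to the reader, carried out in the only natural way.
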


\begin{remark}
\label{remDiracCl}
\rm{The Dirac operator can now also be defined on super Clifford algebra valued functions, $\cO(\mR^{m|2n})\otimes\sC$ by the extension:
\begin{eqnarray*}
\px=\sum_{j,k=1}^{m+2n}E_jg_{jk}\partial_{X_k}=\sum_{k=1}^{m+2n}\widehat{E_k}\partial_{X_k}=\sum_{j=1}^{m+2n}\partial_{X^j}E_j,
\end{eqnarray*}
where $E_j$ acts by left multiplication on $\cO(\mR^{m|2n})\otimes\sC$.}
\end{remark}

The invariance of the gradient $\nabla: \cO(\mR^{m|2n})\otimes \mS_{m|2n}\to \cO(\mR^{m|2n})\otimes\mC^{m|2n}\otimes \mS_{m|2n}$ and the projection $E^\perp:\cO(\mR^{m|2n})\otimes\mC^{m|2n}\otimes \mS_{m|2n}\to \cO(\mR^{m|2n})\otimes \mS_{m|2n}$ imply that the Dirac operator
\begin{eqnarray*}
\px:\cO(\mR^{m|2n})\otimes \mS_{m|2n}\to\cO(\mR^{m|2n})\otimes \mS_{m|2n}
\end{eqnarray*}
is $\mathfrak{osp}(m|2n)$-invariant. This can be expressed as
\begin{eqnarray}
\label{ospDirac}
 \px K_{ij}=K_{ij}\px\qquad\mbox{for} \quad K_{ij}=L_{ij}+B_{ij}
 \end{eqnarray}
where $L_{ij}$ and $B_{ij}$ are defined in equation \eqref{Lij} and \eqref{Bij} respectively. It follows immediately from the fact that $\cO(\mR^{m|2n})\otimes\mS_{m|2n}$ is a tensor product representation that the operators $K_{ij}$ generate $\osp$. This can also be calculated directly by taking into account the correct commutation relation $E_jX_k=(-1)^{[j][k]}X_kE_j$. This also implies that the Dirac operator acting on super Clifford algebra functions in Remark \ref{remDiracCl} is $\mathfrak{osp}(m|2n)$-invariant.

\begin{remark}
\rm{In the present section the condition $m>0$ is never needed, since the spinors for $\mathfrak{sp}(2n)$ behave as the limit $m\to 0$ for $\mathfrak{osp}(m|2n)$. So also the Dirac operator on $\mR^{0|2n}$ can be constructed using the Stein-Weiss procedure. In \cite{Krysl} this operator was introduced in a different context and its Howe duality was also studied. So in the current paper we will not focus on this particular limit case.}
\end{remark}


Now we show how the Dirac operator on $\mR^{m|2n}$ is closely related with an operator in the construction of Bernstein, which has been generalized by Shmelev, see \cite{MR1893457, Shmelev2}. On superspace $\mR^{2n|m}$ with coordinates $(y_1,\cdots,y_{2n},{y\grave{}}_1,\cdots,{y\grave{}}_{m})$ we can consider differential forms, see e.g. the basic definitions in \cite{MR2374394}. Then the operators
\begin{eqnarray*}
X_+&=&2\sum_{i=1}^ndy_idy_{i+n}+\sum_{j=1}^m\left(d{y\grave{}}_{j}\right)^2\mbox{ and}\\
X_-&=&2\sum_{i=1}^n\partial_{y_i}\rfloor\partial_{y_{i+n}}\rfloor+\sum_{j=1}^m\left(\partial_{{y\grave{}}_{j}}\rfloor\right)^2
\end{eqnarray*}
generate $\mathfrak{sl}_2$. This is equivalent with the fact that the operators in equation \eqref{defsuperLap} generate $\mathfrak{sl}_2$. Then the operator 
\begin{eqnarray*}
D_-&=&\sum_{i=1}^n\left(\partial_{y_i}\partial_{y_{i+n}}\rfloor-\partial_{y_{i+n}}\partial_{y_{i}}\rfloor\right)-\frac{1}{2}\sum_{k=1}^{2n}y_k\partial_{y_k}\rfloor+\sum_{i=1}^m\left(\partial_{{y\grave{}}_{j}}+{y\grave{}}_{j}\right)\partial_{{y\grave{}}_{j}}\rfloor
\end{eqnarray*}
satisfies $D_-^2=X_-$ and together with $X_+$ and $X_-$ this operator generates $\mathfrak{osp}(1|2)$. The operator $D_+$ has a similar expression, but is also defined by $[D_-,X_+]$. 

Now we make an identification $dy_k\leftrightarrow {x\grave{}}_k$, $d{y\grave{}}_{j}\leftrightarrow x_j$, $\partial_{y_k}\rfloor\leftrightarrow \partial_{{x\grave{}}_k}$, $\partial_{{y\grave{}}_{j}}\rfloor\leftrightarrow \partial_{x_j}$ with $(\ux,\uxb)$ the coordinates on $\mR^{m|2n}$, then
\begin{eqnarray*}
X_+=R^2,&&X_-=\Delta\qquad\mbox{and}\\
D_-&=&\sum_{i=1}^n\left(-\partial_{y_{i+n}}-\frac{1}{2}y_i\right)\partial_{{x\grave{}}_i}+\sum_{i=1}^n\left(\partial_{y_i}-\frac{1}{2}y_{i+n}\right)\partial_{{x\grave{}}_{i+n}}+\sum_{i=1}^m\left(\partial_{{y\grave{}}_{j}}+{y\grave{}}_{j}\right)\partial_{x_j}.
\end{eqnarray*}
So this constitutes a different square root of the super Laplace operator leading to a realization of $\mathfrak{osp}(1|2)$. In this construction a Clifford-Weyl algebra with $2n$ bosonic variables and $m$ fermionic variables is used, while the Dirac operator in Proposition \ref{DiracCliff} is constructed using a super Clifford algebra with only $n$ bosonic variables and $\lfloor m/2\rfloor$ fermionic variables, see the identification in Definition \ref{kappamap}. The Howe duality in Bernsteins construction is on the space of differential forms, the dual partner of $\mathfrak{osp}(1|2)$ being the super Poisson algebra $\mathfrak{po}(2n|m)$, which contains $\mathfrak{osp}(m|2n)$. In the following section we will study the Howe duality in our construction for $\mathfrak{osp}(1|2)\times\osp$.

\section{Howe duality and Fischer decomposition}
\label{secHowe}
As in the classical case, the square of the super Dirac operator is given by minus the super Laplace operator \eqref{defsuperLap}, $\px^2=-\Delta$ which is a natural requirement for a proper super Dirac operator. This and other commutation relations are calculated in Theorem \ref{osp12} which shows that the super Dirac operator, together with the vector variable $\bold{x}$ defined below, generates the Lie superalgebra $\mathfrak{osp}(1|2)$. This realization of $\mathfrak{osp}(1|2)$ commutes with the action of $\osp$ on $\cO(\mR^{m|2n})\otimes \mS_{m|2n}$. In this section we prove that these two algebras constitute a Howe dual pair if $m-2n\not\in-2\mN$. We also determine the action of the Lie superalgebra $\mathfrak{osp}(m+4n|2m+2n)$, in which $\mathfrak{osp}(1|2)$ and $\osp$ are each other's centralizers, on $\cO(\mR^{m|2n})\otimes \mS_{m|2n}$.

In the original setting in \cite{MR0986027}, dual pairs were considered where one of the two algebras is equal to $\mathfrak{gl}(k)$, $\mathfrak{so}(m)$ or $\mathfrak{sp}(2n)$. Consider $\mk$ equal to $\mathfrak{so}(m)$ or $\mathfrak{sp}(2n)$ and $V$ its natural representation, $\mC^{m}$ or $\mC^{2n}$ respectively. In \cite{MR0986027} the representation structure of $\mk$ on $\left(\otimes^kS(V)\right)\otimes \left(\otimes^l\Lambda(V)\right)$ is investigated in the context of invariant theory. There is a natural action of $\mathfrak{osp}(2ld|2kd)$ (with $d=\dim V$, which is $m$ or $2n$) on this space, as in \cite{OSpHarm, Tensor, MR0986027, MR1893457, Kyo}. This action of $\mathfrak{osp}(2ld|2kd)$ includes the action of $\mk$. The representation of $\mk$ is then studied by considering its centraliser in $\mathfrak{osp}(2ld|2kd)$. One example is $\mk=\mathfrak{so}(m)$, $k=1$ and $l=0$, then the centraliser of $\mathfrak{so}(m)$ inside $\mathfrak{sp}(2m)$ is $\mathfrak{sl}(2)$. This is the Howe duality corresponding to the Laplace operator on $\mR^m$.

There are several extensions of this principle, which are also known as Howe dualities and which possess similar properties as the original dual pairs in \cite{MR0986027}. We mention two concrete examples, which are extensions of the specific Howe duality given above, for others see e.g. \cite{MR2646304, MR1847665, MR1893457}. The first one is studied in \cite{OSpHarm} and is a superization of the example above. The algebra $\mk$ becomes the Lie superalgebra $\mathfrak{osp}(m|2n)$, the role of $V$ replaced by its natural representation $\mC^{m|2n}$ and the again the choice $k=1$, $l=0$ is made. Since $S(\mC^{m|2n})\cong S(\mC^{m})\otimes\Lambda(\mC^{2n})$ there is a natural action of $\mathfrak{osp}(4n|2m)$, in which $\mathfrak{osp}(m|2n)$ is included and has centraliser $\mathfrak{sl}(2)$. This is the Howe duality corresponding to the super Laplace operator on $\mR^{m|2n}$. In \cite{OSpHarm} it is proven that whenever $m-2n\not\in-2\mN$ the dual pair $(\mathfrak{osp}(m|2n),\mathfrak{sl}(2))$ possesses the appropriate properties for a Howe dual pair.

Another extension considers $\mk$ still equal to $\mathfrak{so}(m)$ (for convenience we choose $m$ even), but now the action is studied on $S(\mC^m)\otimes\Lambda(\mC^{\frac{m}{2}})$ where $\Lambda(\mC^{\frac{m}{2}})$ has the $\mathfrak{so}(m)$ representation structure of the spinor space $\mS_m$ as considered in Subsection \ref{secclassDi}. Now there is a natural action of $\mathfrak{osp}(m|2m)$ and the centraliser of $\mathfrak{so}(m)$ in this algebra is given by $\mathfrak{osp}(1|2)$. This Howe duality corresponds to the Dirac operator and is studied in \cite{Kyo}, where it is proven that this does indeed satisfy appropriate properties to be called a Howe dual pair.

The super Dirac operator in this paper provides a way of combining these two extensions of the principle of Howe dualities, which will be explored in this section.

The vector variable is defined as an element of $\cO(\mR^{m|2n})\otimes \sC$,
\[
\bold{x}=\sum_{j=1}^{m+2n}X_jE_j,
\]
which can be seen as an operator on $\cO(\mR^{m|2n})\otimes \mS_{m|2n}$ via $\bold{x}=\sum_{j}X_j\kappa(E_j)$ with the $\kappa$-action in Definition \ref{kappamap} or as an operator on $\cO(\mR^{m|2n})\otimes \sC$ by left multiplication.

It can be easily checked that, as the Dirac operator in equation \eqref{ospDirac}, the operator $\bold{x}$ is $\mathfrak{osp}(m|2n)$-invariant. 

\begin{theorem}
\label{osp12}
The odd operators $\px$ and $\bold{x}$ acting on $\cO(\mR^{m|2n})\otimes\mS_{m|2n}$ or $\cO(\mR^{m|2n})\otimes\sC$ generate the Lie superalgebra $\mathfrak{osp}(1|2)$.
\end{theorem}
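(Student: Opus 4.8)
The plan is to verify directly that the three operators $\px$, $\bold{x}$, and a suitable even operator (the Euler-type operator) close into the defining relations of $\mathfrak{osp}(1|2)$, namely that $\mathfrak{osp}(1|2)$ is generated by two odd elements $F_\pm$ with $\{F_+,F_-\}$ and $\{F_\pm,F_\pm\}$ spanning the even part $\mathfrak{sl}_2$, and the brackets of the even part with the odd ones reproducing the $\mathfrak{sl}_2$-module structure on the two-dimensional odd space. Concretely I would set $F_-=\tfrac{1}{2}\px$ and $F_+=\tfrac{1}{2}\bold{x}$ (up to normalization constants to be fixed at the end), and compute $\{F_-,F_-\}$, $\{F_+,F_+\}$ and $\{F_+,F_-\}$.

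The key computational steps, in order: first, $\px^2=-\Delta$ and $\bold{x}^2=R^2$. The first follows from $\kappa(\widehat{E_j})\kappa(\widehat{E_k})+(-1)^{[j][k]}\kappa(\widehat{E_k})\kappa(\widehat{E_j})=-2g_{jk}$ (the image under $\kappa$ of the Clifford relation, using that $\widehat\cdot$ preserves the relation) combined with $\partial_{X_j}\partial_{X_k}=(-1)^{[j][k]}\partial_{X_k}\partial_{X_j}$, so that $\px^2=\sum_{j,k}\kappa(\widehat{E_j})\kappa(\widehat{E_k})\partial_{X_j}\partial_{X_k}$ symmetrizes to $-\sum_{j,k}g_{jk}\partial_{X_j}\partial_{X_k}$; one must be careful that the sign picked up in commuting $\partial_{X_j}$ past $\kappa(E_k)$ (which is trivial, as $\kappa(E_k)$ acts on the spinor factor and $\partial_{X_j}$ on the function factor, but the identification $E_j\cdot h=(-1)^{[j]|h|}h\otimes\kappa(E_j)\cdot v$ introduces a grading) matches the sign in the Clifford relation — this bookkeeping is exactly the reason the commutation relation $X_jE_k=(-1)^{[j][k]}E_kX_j$ was imposed. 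Similarly $\bold{x}^2=\sum_{j,k}X_jX_k\kappa(E_j)\kappa(E_k)$ symmetrizes (using $X_jX_k=(-1)^{[j][k]}X_kX_j$) to $-\sum X_jX_k g_{kj}=R^2$ up to sign. Second, the mixed anticommutator: $\{\px,\bold{x}\}=\sum_{j,k}(\widehat{E_j}\partial_{X_j})(X_kE_k)+(\text{other order})$; moving $\partial_{X_j}$ past $X_k$ produces a Kronecker delta term giving $\sum_j\kappa(\widehat{E_j}E_j)=\sum_j\kappa(\widehat{E_j})\kappa(E_j)$ plus a term of the form $\sum_{j,k}(\widehat{E_j}E_k+\ldots)X_k\partial_{X_j}$ which by the Clifford relation collapses to something proportional to the Euler operator $\mE$ plus a constant, and the constant $\sum_j\kappa(\widehat{E_j})\kappa(E_j)$ should evaluate to $-(m-2n)=-M$ using Definition~\ref{kappamap}. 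Thus $\{\px,\bold{x}\}$ is (a multiple of) $2\mE+M$, the same even operator $\mE+\tfrac{M}{2}$ that together with $\Delta$ and $R^2$ generated $\mathfrak{sl}_2$ in Lemma~\ref{superFischerLemma}.

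Third, having identified the even part as $\langle\Delta,R^2,2\mE+M\rangle\cong\mathfrak{sl}_2$ (already known from \cite{DBE1, MR0695958}), I would check $[\Delta,\bold{x}]=2\px$ and $[R^2,\px]=-2\bold{x}$ (equivalently $[\mE,\px]=-\px$, $[\mE,\bold{x}]=\bold{x}$, which just record that $\px$ lowers degree by one and $\bold{x}$ raises it by one) so that the odd two-dimensional space spanned by $\px,\bold{x}$ is the standard $\mathfrak{sl}_2$-doublet. These, with the super-Jacobi identity, give precisely the bracket table of $\mathfrak{osp}(1|2)$, so after fixing the scaling constants the two odd operators generate a copy of $\mathfrak{osp}(1|2)$; the statement for the $\sC$-valued version is identical since there $\px$ and $\bold{x}$ act by left multiplication by $\widehat{E_k}$, $E_j$ respectively and obey the same Clifford relations. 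The main obstacle I anticipate is not conceptual but the sign/grading bookkeeping: getting every factor of $(-1)^{[j][k]}$, $(-1)^{[j]|f|}$ right when sliding $\partial_{X_j}$, $X_k$ and the $\kappa(E_j)$ past each other, and correctly evaluating the anomaly constant $\sum_j\kappa(\widehat{E_j})\kappa(E_j)=-M$ from the explicit formulas for $\kappa$, where the bosonic ($t_i,\partial_{t_i}$) and fermionic ($\theta_j,\partial_{\theta_j}$) generators contribute with opposite signs and the extra generator $\kappa(E_m)$ in the odd-$m$ case must be handled separately.
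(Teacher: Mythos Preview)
Your proposal is correct and follows essentially the same route as the paper's proof: compute $\bold{x}^2$, $\px^2$ and $\{\px,\bold{x}\}$ directly from the Clifford relations and the commutation $X_jE_k=(-1)^{[j][k]}E_kX_j$, identify the even span with the $\mathfrak{sl}_2$ already known from the super Laplace operator, and then verify the remaining mixed brackets. The only minor difference is that you propose evaluating the constant term $\sum_j\widehat{E_j}E_j$ via the explicit realization $\kappa$ in Definition~\ref{kappamap}, whereas the paper obtains it abstractly from the Clifford relation \eqref{commrelCliff}; the latter is cleaner and handles the $\sC$-valued case simultaneously, but both arrive at $-M$ (note the paper's signs give $\bold{x}^2=-R^2$ and $\{\px,\bold{x}\}=-(2\mE+M)$).
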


\begin{proof}
Definition \ref{defClifford} allows to calculate
\begin{eqnarray*}
[\bold{x},\bold{x}]&=&2\bold{x}^2=\sum_{j,k}(-1)^{[j][k]}X_jX_k E_jE_k+\sum_{j,k}X_jX_k E_kE_j\\
&=&-2\sum_{j,k}X_jX_k g_{jk}= -2 R^2.
\end{eqnarray*}
Similarly, the expression for the Dirac operator using the Clifford algebra in Proposition \ref{DiracCliff} or Remark \ref{remDiracCl} allows to compute that $[\px,\px]=2\px^2= -2 \Delta$.

Then we prove that the equality $[\px,\bold{x}] =- 2 (\mE + \frac{1}{2}M)$ holds by calculating
\begin{eqnarray*}
\px\bold{x}+\bold{x}\px&=&\sum_{l,k=1}^{m+2n}E_lg_{lk}E_k+\sum_{j,l,k=1}^{m+2n}X_j\left((-1)^{[j][l]}E_lE_j+E_jE_l\right)g_{lk}\partial_{X_k}\\
&=&\frac{1}{2}\sum_{l,k=1}^{m+2n}\left(E_lg_{lk}E_k +(-1)^{[k][l]}E_kg_{lk}E_l\right)-2\sum_{j,l,k=1}^{m+2n}X_jg_{lj}g_{lk}\partial_{X_k}\\
&=&-M-2\sum_{k=1}^{m+2n}X_k\partial_{X_k}.\\
\end{eqnarray*}
It is already known that $R^2$, $\Delta$ and $\mE + \frac{1}{2}M$ generate the Lie algebra $\mathfrak{sl}_2\cong\mathfrak{sp}(2)$, the underlying Lie algebra of $\mathfrak{osp}(1|2)$. The mixed commutators are given by
\[
\begin{array}{lll}
\left[\bold{x}, \bold{x}^2\right] =0 &\qquad& \left[\px, \bold{x}^2\right] =-2\bold{x}\\
\left[\bold{x}, \Delta \right] =-2\px &\qquad& \left[\px, \Delta\right] =0\\
\left[\bold{x}, \mE + \frac{1}{2}M\right] =-\bold{x} &\qquad& \left[\px, \mE + \frac{1}{2}M \right] =\px,
\end{array}
\]
which concludes the proof.
\end{proof}

\begin{remark}
\rm{The Dirac operator $\px:\cP\otimes \mS_{m|2n}\to\cP\otimes \mS_{m|2n}$ or $\px:\cP\otimes\sC\to\cP\otimes\sC$ is surjective if $m\not=0$. This follows immediately from the fact that the square $\px^2=-\Delta$ is surjective on $\cP$, which is a consequence of the surjectivity of the classical Laplace operator. }
\end{remark}

The null-solutions of the super Dirac operator are called {\em super monogenic functions}. The space of {\em spherical monogenics} of degree $k$ is given by
\begin{eqnarray*}
\cM_k=\{p\in \cP\otimes \mS_{m|2n}|\,\mE p=k p\mbox{ and }\px p=0\}.
\end{eqnarray*}
The spaces $\cM_k^\pm$ for $m=2d$ are the monogenic functions of degree $k$ with values in $\mS^\pm_{2d|2n}$.

The relation $[\px,\bold{x}] =- 2 \mE - M$ implies that the equalities
\begin{eqnarray}
\label{px2l}
\px \bold{x}^{2l}M_k&=&-2l \bold{x}^{2l-1}M_k\\
\label{px2l1}
\px \bold{x}^{2l+1}M_k&=&-(2k+2l+M)\bold{x}^{2l}M_k
\end{eqnarray}
hold for $M_k\in\cM_k$.

\begin{theorem}
\label{HkS}
For $\cH_k$, the space of spherical harmonics of degree $k$ on $\mR^{m|2n}$, the decomposition
\begin{eqnarray*}
\cH_k\otimes\mS_{m|2n}=\cM_k\otimes\bold{x}\cM_{k-1}
\end{eqnarray*}
holds if $k\not=1-\frac{1}{2}M$. If $m$ is odd this is a decomposition into simple $\osp$-modules. If $m$ is even (with $k\not=1-\frac{1}{2}M$) this can be refined to
\begin{eqnarray*}
\cH_k\otimes\mS_{m|2n}^+&=&\cM_k^+\otimes\bold{x}\cM_{k-1}^-\\
\cH_k\otimes\mS_{m|2n}^-&=&\cM_k^-\otimes\bold{x}\cM_{k-1}^+
\end{eqnarray*}
and if $k<1-\frac{1}{2}M$ or $k>2-M$ holds these correspond to decompositions into simple $\osp$-modules.
\end{theorem}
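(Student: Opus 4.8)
The statement is the monogenic Fischer decomposition of $\cH_k\otimes\mS_{m|2n}$, which is the super analogue of the classical result that $\cM^b_k\oplus\ux\cM^b_{k-1}$ exhausts $\cH_k\otimes\mS_m$. The natural strategy is to combine two facts: a \emph{dimension/representation count} coming from Theorem \ref{decomp2}, and an \emph{explicit splitting} of $\cH_k\otimes\mS_{m|2n}$ into a monogenic part and a part lying in the image of the vector variable $\bold{x}$, analogous to the classical projection $P = 1 + \frac{1}{2k+M-2}\,\ux\upx$. The whole thing is governed by the $\mathfrak{osp}(1|2)$-structure from Theorem \ref{osp12} and the scalar relations \eqref{px2l}--\eqref{px2l1}.

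First I would set up the abstract direct sum. For $P\in\cH_k\otimes\mS_{m|2n}$, write $P = \left(P + \tfrac{1}{2k+M-2}\,\bold{x}\,\px P\right) - \tfrac{1}{2k+M-2}\,\bold{x}\,\px P$. Using $\px^2=-\Delta$ from Theorem \ref{osp12} and $\Delta P = 0$ (since $P$ is harmonic, the spinor part being a sum of trivial $\mathfrak{sl}_2$-modules), one checks via \eqref{px2l}--\eqref{px2l1}, applied with $l=0$ to the monogenic decomposition of $\px P\in\cH_{k-1}\otimes\mS_{m|2n}$, that $\px$ annihilates the first term, so it lies in $\cM_k$; the second term lies in $\bold{x}(\cH_{k-1}\otimes\mS_{m|2n})$, but I must still argue it lies in $\bold{x}\cM_{k-1}$ — here one iterates, or uses the classical monogenic Fischer decomposition on the bosonic coordinates plus the fact that $\px(\cH_{k-1}\otimes\mS_{m|2n})\subseteq\cH_{k-2}\otimes\mS_{m|2n}$. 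The coefficient $2k+M-2 = 2(k-1)+M$ is nonzero precisely because $k\neq 1-\tfrac12 M$, which is exactly the excluded case; this is where that hypothesis enters. Directness of the sum follows from \eqref{px2l1}: if $\bold{x} M_{k-1}\in\cM_k$ then $\px\bold{x} M_{k-1} = -(2(k-1)+M)M_{k-1} = 0$, forcing $M_{k-1}=0$ under the same non-vanishing.

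Next I would identify the two summands as $\osp$-modules. Both $\cM_k$ and $\bold{x}\cM_{k-1}$ are $\osp$-submodules because $\px$ and $\bold{x}$ are $\osp$-invariant (equation \eqref{ospDirac} and the remark after it), and $\bold{x}:\cM_{k-1}\to\bold{x}\cM_{k-1}$ is an $\osp$-isomorphism (injectivity again from \eqref{px2l1} and $k\neq 1-\tfrac12 M$). By Lemma \ref{irrHk}, $\cH_k\cong K^{m|2n}_{k\epsilon_1}$ when $m-2n\notin-2\mN$; for $m-2n\in-2\mN$ one uses the stated range $k>2-M$ (so $\cH_k$ is irreducible) together with $k<1-\tfrac12 M$ or $k>2-M$ (which also controls $\cH_{k-1}$), so that Theorem \ref{decomp2} applies to describe $\cH_k\otimes\mS_{m|2n}$ as a sum of exactly two irreducibles with the highest weights listed there — the top one containing the monogenics (it is the Cartan-product type summand, killed by the invariant $E^\perp$-contraction, hence by $\px$) and the lower one matching $\bold{x}\cM_{k-1}\cong\cH_{k-1}\otimes$(spinor) modulo its own lower piece, which by induction is $\cM_{k-1}$. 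Comparing with the abstract direct sum just constructed forces $\cM_k$ and $\bold{x}\cM_{k-1}$ to be the two irreducible constituents, and in the even case the refinement into $\mS^\pm$ follows by tracking the $\mZ_2$-grading of $\bold{x}$, which swaps $\Lambda_{d|n}^+$ and $\Lambda_{d|n}^-$.

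The main obstacle is the bookkeeping at the reducible values of $M$: when $m-2n\in-2\mN$ one must simultaneously ensure that $\cH_k$, $\cH_{k-1}$ \emph{and} the relevant tensor products in Theorem \ref{decomp2} are all completely reducible, and the combined condition "$k<1-\tfrac12 M$ or $k>2-M$" is precisely engineered to do this — verifying that these inequalities rule out the bad case $k=n-d+1$ of Theorem \ref{decomp2} (equivalently $k = 1-\tfrac12 M$ shifted appropriately) and the reducibility thresholds $2-M$, $2-\tfrac{M}{2}$ of Lemma \ref{irrHk} is the delicate point. I expect the cleaner route is actually to prove the abstract direct-sum decomposition first \emph{without} any irreducibility claim (valid whenever $k\neq 1-\tfrac12 M$), and then separately invoke Lemma \ref{irrHk} and Theorem \ref{decomp2} only to upgrade "direct sum of two $\osp$-modules of the right dimensions" to "direct sum of two simple modules" in the stated range.
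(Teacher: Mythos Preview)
Your proposal is correct and follows essentially the same route as the paper: the identical splitting formula $H_k=\bigl(H_k+\tfrac{1}{2k-2+M}\,\bold{x}\,\px H_k\bigr)-\tfrac{1}{2k-2+M}\,\bold{x}\,\px H_k$, directness via \eqref{px2l1}, and simplicity via Lemma~\ref{irrHk} together with Theorem~\ref{decomp2}. The one place you overcomplicate matters is the membership of the second term in $\bold{x}\cM_{k-1}$: since $\px^2 P=-\Delta P=0$ (which you already noted), $\px P$ is monogenic of degree $k-1$ outright, so no iteration or appeal to the bosonic Fischer decomposition is needed---the paper uses this directly.
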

\begin{proof}
It is clear that $\px\px M_k=0$ for $M_k\in\cM_k$ and as a consequence of equation \eqref{px2l1} we obtain that also $\px\px \bold{x}M_{k-1}=0$ holds for $M_{k-1}\in\cM_{k-1}$. Equation \eqref{px2l1} also implies that $\cM_k\cap\bold{x}\cM_{k-1}=0$ if $k\not=1-\frac{1}{2}M$. Now if $k\not=1-\frac{1}{2}M$, every harmonic $H_k\in\cH_k\otimes\mS_{m|2n}$ can be written as
\[
H_k=\left(H_k+\frac{1}{2k-2+M}\bold{x}\px H_k\right)-\frac{1}{2k-2+M}\bold{x}\px H_k,
\]
which proves $\cH_k\otimes\mS_{m|2n}=\cM_k\otimes\bold{x}\cM_{k-1}$, by applying equation \eqref{px2l1}.

Since $\px$ and $\mE$ commute with the $\osp$-action on $\cO(\mR^{m|2n})\otimes\mS_{m|2n}$, the spaces $\cM_k$ are $\osp$-modules. If $m$ is odd $k\not=1-\frac{1}{2}M$ always holds. Comparison with Lemma \ref{irrHk} and Theorem \ref{decomp2} then shows that for $m=2d+1$, $\cM_k\cong K^{2d+1|2n}_{k\epsilon_1+\omega_d-\frac{1}{2}\nu_n}$ holds and the theorem follows.

The case $m=2d$ is proven similarly.
\end{proof}

The proof of Theorem \ref{HkS} implies the following corollary.

\begin{corollary}
\label{monosp}
The spherical monogenics on $\mR^{2d+1|2n}$ satisfy
\begin{eqnarray*}
\cM_k\cong K^{2d+1|2n}_{k\epsilon_1+\omega_d-\frac{1}{2}\nu_n}
\end{eqnarray*}
as $\mathfrak{osp}(2d+1|2n)$-representations. The spherical monogenics on $\mR^{2d|2n}$ satisfy
\begin{eqnarray*}
\cM_k^+\cong K^{2d|2n}_{k\epsilon_1+\omega_{d}-\frac{1}{2}\nu_n}&\mbox{and }&\cM_k^-\cong K^{2d|2n}_{k\epsilon_1+\omega_{d}+\nu_{n-1}-\frac{3}{2}\nu_n}
\end{eqnarray*}
if $d>n$ or $d\le n$ with $k\not\in[1+n-d,1+2n-2d]$.
\end{corollary}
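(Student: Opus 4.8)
The plan is to determine $\cM_k$ (and, for $m=2d$ even, $\cM_k^\pm$) by confronting two descriptions of $\cH_\ell\otimes\mS_{m|2n}$ with each other, for a suitably chosen degree $\ell\in\{k,k+1\}$. On one hand Theorem \ref{HkS} gives $\cH_\ell\otimes\mS_{m|2n}=\cM_\ell\oplus\bold{x}\cM_{\ell-1}$ (and its $\mS^\pm$-refinement when $m$ is even) as soon as $\ell\ne1-\frac{1}{2}M$. On the other hand, if $\cH_\ell$ is irreducible --- equivalently, by Lemma \ref{irrHk}, $\cH_\ell\cong K^{m|2n}_{\ell\epsilon_1}$ --- then Theorem \ref{decomp2} writes $K^{m|2n}_{\ell\epsilon_1}\otimes\mS_{m|2n}$ as a sum of two non-isomorphic simple $\osp$-modules, one whose highest weight has $\epsilon_1$-coefficient $\ell+\frac{1}{2}$ and one with $\epsilon_1$-coefficient $\ell-\frac{1}{2}$. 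It then remains to see which summand of the Theorem \ref{HkS} decomposition is which.

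This is settled by a highest-weight estimate. Because $\px$ and $\mE$ are $\osp$-invariant, $\cM_k$ (resp. $\cM_k^\pm$) is an $\osp$-submodule of $\cP_k\otimes\mS_{m|2n}$ (resp. of $\cP_k\otimes\mS_{2d|2n}^\pm$), and every weight of this tensor product has $\epsilon_1$-coefficient at most $k+\frac{1}{2}$, since the polynomial factor contributes at most $k$ and the spinor factor at most $\frac{1}{2}$. Next, by \eqref{px2l1}, $\px\bold{x}=-(2\ell-2+M)$ on $\cM_{\ell-1}$, so for $\ell\ne1-\frac{1}{2}M$ the $\osp$-equivariant operator $\bold{x}$ is injective on $\cM_{\ell-1}$; hence $\bold{x}\cM_{\ell-1}\cong\cM_{\ell-1}$ as $\osp$-modules and all its weights have $\epsilon_1$-coefficient $\le\ell-\frac{1}{2}$. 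Since $\cM_{\ell-1}\ne0$ (a standard non-vanishing which we take as known), the nonzero direct summand $\bold{x}\cM_{\ell-1}$ of $\cH_\ell\otimes\mS_{m|2n}$ cannot contain the simple constituent of $\epsilon_1$-coefficient $\ell+\frac{1}{2}$, so it equals the one of $\epsilon_1$-coefficient $\ell-\frac{1}{2}$, and $\cM_\ell$ is forced to be the remaining one. Reading off the $\mathfrak{so}(m)\oplus\mathfrak{sp}(2n)$-parts of the highest weights from Theorem \ref{decomp2} --- e.g. $\cM_\ell\oplus\bold{x}\cM_{\ell-1}\cong K^{2d+1|2n}_{\ell\epsilon_1+\omega_d-\frac{1}{2}\nu_n}\oplus K^{2d+1|2n}_{(\ell-1)\epsilon_1+\omega_d-\frac{1}{2}\nu_n}$ for $m=2d+1$, and the analogous pair on each of the $\mS^\pm$-valued parts for $m=2d$ (using that $\bold{x}$ interchanges them) --- yields exactly the asserted isomorphisms once we set $\ell=k$. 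The value $k=0$, where $\cM_0=\mS_{m|2n}$, serves as a trivial starting point.

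The only genuine bookkeeping, and the main obstacle, is checking that an admissible $\ell\in\{k,k+1\}$ exists precisely on the stated range. For $m$ odd, $\cH_\ell$ is always irreducible, $1-\frac{1}{2}M$ is non-integral, and Theorem \ref{decomp2} is unconditional, so $\ell=k$ always works; likewise for $m=2d$ with $d>n$, where $1-\frac{1}{2}M=1+n-d\le0$ so $\ell=k$ works for every $k\ge1$. For $m=2d$ with $d\le n$, Lemma \ref{irrHk} shows $\cH_\ell$ is reducible exactly for $\ell\in[2+n-d,\,2+2n-2d]$, and Theorems \ref{HkS} and \ref{decomp2} also exclude $\ell=n-d+1=1-\frac{1}{2}M$; one checks that $\ell=k$ is admissible for $0\le k\le n-d$, that for the boundary value $k=2+2n-2d$ --- which lies in the reducible window, so $\ell=k$ fails --- one must instead take $\ell=k+1=3+2n-2d$ (now $\cH_{k+1}$ is irreducible and the estimate above identifies $\bold{x}\cM_k$ with the $(\ell-1)=k$ summand), and that $\ell=k$ works again for $k\ge3+2n-2d$. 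Conversely, for $k$ in the excluded interval $[1+n-d,\,1+2n-2d]$ neither $\ell=k$ nor $\ell=k+1$ meets all requirements (either $\cH_\ell$ is reducible, or $\ell=n-d+1$), which is exactly why $\cM_k^\pm$ need not be simple there. The upward shift $\ell\to k+1$ at the single degree $k=2+2n-2d$, needed because $\cH_k$ itself is then reducible, is the one nonobvious step; everything else is the weight estimate.
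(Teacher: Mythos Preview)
Your argument is correct and follows the same route as the paper: compare the splitting of $\cH_\ell\otimes\mS_{m|2n}$ furnished by Theorem~\ref{HkS} with the tensor product decomposition of Theorem~\ref{decomp2}, and read off the highest weights. The paper's proof is the one-line ``This is a consequence of the results in Theorem~\ref{decomp2} and Theorem~\ref{HkS}''; you have filled in two details it leaves implicit. First, you make the matching of summands explicit via the $\epsilon_1$-coefficient bound on the weights of $\cM_{\ell-1}\subset\cP_{\ell-1}\otimes\mS_{m|2n}$, whereas the paper simply asserts (inside the proof of Theorem~\ref{HkS}) that ``comparison'' yields $\cM_k\cong K^{2d+1|2n}_{k\epsilon_1+\omega_d-\frac{1}{2}\nu_n}$. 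Second, you observe that for $m=2d$, $d\le n$ the single value $k=2+2n-2d$ lies in the range claimed by the corollary but \emph{not} in the range $k<1-\tfrac{1}{2}M$ or $k>2-M$ where Theorem~\ref{HkS} asserts simplicity at degree $k$; your degree shift $\ell=k+1$ recovers this case, since Theorem~\ref{HkS} at degree $k+1=3+2n-2d$ gives $\bold{x}\cM_k^\mp$ simple with the correct highest weight. This step is arguably missing from the paper's terse proof, so your version is in fact more complete on this boundary case.
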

\begin{proof}
This is a consequence of the results in Theorem \ref{decomp2} and Theorem \ref{HkS}.
\end{proof}
The remaining cases of $\cM_k^{\pm}$ will be dealt with in Section \ref{Mkasreps}. 

\begin{remark}
\rm{In \cite{Lavicka} and \cite{Lavicka2} Gelfand-Tsetlin bases for the spaces of monogenic polynomials on $\mR^m$ are constructed. It is an interesting question whether these methods extend to the monogenics on $\mR^{m|2n}$.}
\end{remark}

In case $M\not\in-2\mN$, the previous results can be used to obtain the Howe duality for super Clifford analysis.

\begin{theorem}
The monogenic Fischer decomposition on $\mR^{m|2n}$ for $m-2n\not\in-2\mN$, is given by
\begin{eqnarray*}
\cP\otimes \mS_{m|2n}=\bigoplus_{j=0}^\infty\bigoplus_{k=0}^\infty \bold{x}^j\cM_k.
\end{eqnarray*}
For $m=2d+1$, this is a decomposition into simple $\mathfrak{osp}(2d+1|2n)$-modules.

If $m=2d$, with $d-n>0$, the decomposition into simple $\mathfrak{osp}(2d|2n)$-modules is given by
\[
\cP\otimes \mS_{2d|2n}=\left(\bigoplus_{j=0}^\infty\bigoplus_{k=0}^\infty \bold{x}^j\cM^{+}_k\right)\bigoplus \left(\bigoplus_{j=0}^\infty\bigoplus_{k=0}^\infty \bold{x}^j\cM^{-}_k\right).
\]
\end{theorem}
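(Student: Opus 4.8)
The strategy is to build the decomposition out of two ingredients already in hand: the harmonic Fischer decomposition (Lemma~\ref{superFischerLemma}) and the monogenic-vs-harmonic decomposition of each isotypical piece (Theorem~\ref{HkS}). Since $\mS_{m|2n}$ carries only the trivial $\mathfrak{sl}_2$-action built into the $\mathfrak{osp}(m|2n)$-picture but the full $\mathfrak{osp}(1|2)$-action comes from $\px$ and $\bold{x}$, the natural route is: first tensor the decomposition $\cP=\bigoplus_{j}\bigoplus_{k}R^{2j}\cH_k$ with $\mS_{m|2n}$, then replace each $\cH_k\otimes\mS_{m|2n}$ by $\cM_k\oplus\bold{x}\cM_{k-1}$ using Theorem~\ref{HkS}, and finally reorganize the double sum so that the powers of $\bold{x}$ (rather than $R^2$) index the summands, using $\bold{x}^2=-R^2$ from Theorem~\ref{osp12}.

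\textbf{Steps.} First I would note that as an $\mathfrak{osp}(m|2n)$-module (forgetting $\mathfrak{sl}_2$ for the moment) one has, by Lemma~\ref{superFischerLemma} tensored with $\mS_{m|2n}$,
\[
\cP\otimes\mS_{m|2n}=\bigoplus_{j=0}^\infty\bigoplus_{k=0}^\infty R^{2j}\left(\cH_k\otimes\mS_{m|2n}\right).
\]
Here the hypothesis $m-2n\notin-2\mN$ is exactly what makes Lemma~\ref{superFischerLemma} available and also guarantees, via Lemma~\ref{irrHk}, that every $\cH_k$ is simple; moreover $k=1-\tfrac12 M$ never occurs under this hypothesis, so Theorem~\ref{HkS} applies to every $k$, giving $\cH_k\otimes\mS_{m|2n}=\cM_k\oplus\bold{x}\cM_{k-1}$ (with $\cM_{-1}=0$). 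Substituting, and using $R^{2j}=(-1)^j\bold{x}^{2j}$ from $\bold{x}^2=-R^2$, the summand $R^{2j}(\cH_k\otimes\mS_{m|2n})$ becomes $\bold{x}^{2j}\cM_k\oplus\bold{x}^{2j+1}\cM_{k-1}$. Relabelling the index of summation in the second family ($k\mapsto k+1$) collapses the two families into $\bigoplus_{j,k}\bold{x}^j\cM_k$, which is the claimed decomposition; directness follows because the original sum is direct and the map $R^{2j}\mapsto \bold{x}^{2j}$, $\bold{x}R^{2j}\mapsto\bold{x}^{2j+1}$ is injective on each graded piece (equations~\eqref{px2l}--\eqref{px2l1} show $\bold{x}$ is injective on each $\cM_k$ once $2k+2l+M\neq0$, again covered by $M\notin-2\mN$). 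For $m=2d+1$ the $\osp$-simplicity of each $\bold{x}^j\cM_k$ is Corollary~\ref{monosp}. For $m=2d$ with $d-n>0$, I would run the refined version of Theorem~\ref{HkS} ($\cH_k\otimes\mS^\pm\cong\cM_k^\pm\oplus\bold{x}\cM_{k-1}^\mp$), split $\cP\otimes\mS_{2d|2n}=(\cP\otimes\mS^+_{2d|2n})\oplus(\cP\otimes\mS^-_{2d|2n})$, and apply the same reindexing to each summand separately; the simplicity of the pieces $\bold{x}^j\cM_k^\pm$ is again Corollary~\ref{monosp}, where the range restriction there disappears because $d>n$.

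\textbf{Main obstacle.} The genuinely delicate point is not the module bookkeeping but verifying that the bigrading is respected, i.e.\ that the decomposition $\cP\otimes\mS_{m|2n}=\bigoplus_{j,k}\bold{x}^j\cM_k$ is an \emph{internal} direct sum with no overlaps between different $(j,k)$. This reduces to showing that for fixed total degree $N$, the subspace $\cP_N\otimes\mS_{m|2n}$ is spanned, without linear relations, by the $\bold{x}^j\cM_k$ with $j+k=N$; equivalently that $\bigoplus_{j+k=N}\bold{x}^j\cM_k\hookrightarrow\cP_N\otimes\mS_{m|2n}$ is both injective and surjective. Surjectivity follows inductively from Theorem~\ref{HkS} applied in degree $N$ together with $\px(\cP_N\otimes\mS_{m|2n})\subseteq\cP_{N-1}\otimes\mS_{m|2n}$; injectivity follows from the injectivity of $\bold{x}$ on each $\cM_k$ (consequence of~\eqref{px2l1} and $M\notin-2\mN$) combined with a degree/lowest-weight argument on the $\mathfrak{osp}(1|2)$-strings $\{\bold{x}^j\cM_k\}_j$. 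Once the internal-direct-sum statement is nailed down, identifying the simple constituents is immediate from Corollary~\ref{monosp}, and the $m=2d$, $d>n$ case is a formal variant.
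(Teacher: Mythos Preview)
Your proposal is correct and follows exactly the same route as the paper, which simply says the result is ``a combination of the scalar Fischer decomposition in Lemma~\ref{superFischerLemma} and the result in Theorem~\ref{HkS}.'' Your expansion of this one-liner is accurate: tensor the harmonic Fischer decomposition with $\mS_{m|2n}$, apply Theorem~\ref{HkS} to each $\cH_k\otimes\mS_{m|2n}$, use $\bold{x}^2=-R^2$, and reindex. The ``main obstacle'' you flag is somewhat overstated: directness of $\bigoplus_{j,k}\bold{x}^j\cM_k$ is automatic, since Lemma~\ref{superFischerLemma} and Theorem~\ref{HkS} are already internal direct sums and the passage $R^{2j}\leftrightarrow\bold{x}^{2j}$ is just multiplication by $(-1)^j$; no separate injectivity or lowest-weight argument is needed beyond noting that the reindexing $(2j,k)\mapsto(2j,k)$, $(2j,k)\mapsto(2j+1,k-1)$ is a bijection onto $\mathbb{N}\times\mathbb{N}$.
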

\begin{proof}
This is a combination of the scalar Fischer decomposition in Lemma \ref{superFischerLemma} and the result in Theorem \ref{HkS}.
\end{proof}

The combination of this theorem with equations \eqref{px2l} and \eqref{px2l1} implies that the spinor-valued polynomials on $\mR^{2d+1|2n}$ are isomorphic to an irreducible multiplicity-free direct sum decomposition under the joint action of $\mathfrak{osp}(1|2)\times\osp$. This is given by
\[
\cP\otimes\mS_{2d+1|2n}\cong\bigoplus_{k=0}^\infty \left(T^{1|2}_{k+d-n+\frac{1}{2}}\times K^{2d+1|2n}_{k\epsilon_1+\omega_d-\frac{1}{2}\nu_n}\right),
\]
with $T^{1|2}_{\alpha}$ the irreducible $\mathfrak{osp}(1|2)$-representation with lowest weight $\alpha$.

If $d>n$, the multiplicity-free irreducible direct sum decompositions on $\mR^{2d|2n}$ is given by
\[
\cP\otimes\mS_{2d|2n}\cong \bigoplus_{k=0}^\infty \left(T^{1|2}_{k+d-n}\times \left(K^{2d|2n}_{k\epsilon_1+\omega_d-\frac{1}{2}\nu_n}\oplus  K^{2d|2n}_{k\epsilon_1+\omega_d+\nu_{n-1}-\frac{3}{2}\nu_n}\right)\right).
\]
This does not yet correspond to a true Howe duality, because each different representation of $\mathfrak{osp}(1|2)$ corresponds to two irreducible $\mathfrak{osp}(m|2n)$-representations. As we will see later, the Howe duality splits into two parts:
\begin{eqnarray*}
\cP^+\otimes\mS_{2d|2n}^+\,\oplus\, \cP^-\otimes\mS_{2d|2n}^-&\cong&\bigoplus_{k=0}^\infty \left(T^{1|2}_{k+d-n}\times K^{2d|2n}_{k\epsilon_1+\omega_d-\frac{1}{2}\nu_n-\frac{1}{2}(1-(-1)^k)\delta_n}\right)\quad\mbox{and}\\
\cP^+\otimes\mS_{2d|2n}^+\,\oplus \,\cP^-\otimes\mS_{2d|2n}^-&\cong&\bigoplus_{k=0}^\infty \left(T^{1|2}_{k+d-n}\times K^{2d|2n}_{k\epsilon_1+\omega_d-\frac{1}{2}\nu_n-\frac{1}{2}(1+(-1)^k)\delta_n}\right),
\end{eqnarray*}
with $\cP^\pm$ the spaces of polynomials consisting of an even, respectively odd, amount of variables.

In order to discuss the Howe duality further we work with complex Lie superalgebras in the remainder of this section. As can be seen from the table in \cite{MR1893457}, the Lie superalgebras $\mathfrak{osp}(1|2;\mC)$ and $\mathfrak{osp}(m|2n;\mC)$ are each others centralizers inside the Lie superalgebra $\mathfrak{osp}(m+4n|2m+2n;\mC)$. Now we show how we can realize $\mathfrak{osp}(m+4n|2m+2n;\mC)$ on the representation space $\cP\otimes\mS_{m|2n}$, which completes the study of the Howe duality. The classical case of this Howe duality is $\mathfrak{osp}(1|2)\times\mathfrak{so}(m)\subset\mathfrak{osp}(m|2m)$, see \cite{Kyo}. The other classical limit is $\mathfrak{osp}(1|2)\times\mathfrak{sp}(2n)\subset\mathfrak{osp}(4n|2n)$, which has been studied in \cite{Krysl}.
  n \cite
 {OSpHarm} as $\mathfrak{sl}_2\times\mathfrak{osp}(m|2n)\subset\mathfrak{osp}(4n|2m)$.

\begin{definition}
\label{defbigosp}
The Lie superalgebra $\mathfrak{g}$ \rm is generated by the operators
\begin{eqnarray*}
X_iX_j(-1)^{([i]+[j])\mE},\qquad 2X_i\partial_{X^j}(-1)^{([i]+[j])\mE}+g_{ji},\qquad \partial_{X^i}\partial_{X^j}(-1)^{([i]+[j])\mE},\qquad B_{ij}(-1)^{([i]+[j])\mE}
\end{eqnarray*}
for $i,j=1,\cdots,m+2n$ and with $B_{ij}$ the bi-vectors from equation \eqref{Bij} and by the operators
\begin{eqnarray*}
X_iE_j(-1)^{([i]+[j])\mE},\qquad E_j\partial_{X^i}(-1)^{([i]+[j])\mE}\qquad\qquad\mbox{for }i,j=1,\cdots,m+2n.
\end{eqnarray*}
The gradation on $\mg$ is induced by $|X_j|=(-1)^{[j]}$ and $|E_j|=1-(-1)^{[j]}$. In fact the choice $|X_j|=1-(-1)^{[j]}$ and $|E_j|=(-1)^{[j]}$ would have the same resulting gradation on $\mathfrak{g}$. 
\end{definition}

Obviously the realization of the Lie superalgebra $\mathfrak{osp}(1|2)$ on $\cP\otimes\mS_{m|2n}$ is embedded in this algebra $\mathfrak{g}$. The operators $K_{ij}(-1)^{([i]+[j])\mE}$ with $K_{ij}$ given in equation \eqref{ospDirac} are also inside the algebra $\mg$. Since $\mE$ commutes with $K_{ij}$ these operators still satisfy commutation relation \eqref{commL}. This realization of $\osp$ on $\cP\otimes\mS_{m|2n}$ clearly has the exact same properties so we identify this realization with the previous one studied in the current paper.

The algebra $\mg$ is defined by operators on $\cP\otimes\mS_{m|2n}$ and hence $\cP\otimes\mS_{m|2n}$ is immediately a $\mg$-module.

\begin{theorem}
The Lie superalgebra $\mg$ from Definition \ref{defbigosp} is isomorphic to the Lie superalgebra $\mathfrak{osp}(m+4n|2m+2n)$ and the representation on $\cP\otimes\mS_{m|2n}$ is irreducible if $m=2d+1$, 
\[\cP\otimes\mS_{2d+1|2n}\cong K^{2d+1+4n|4d+2+2n}_{\omega_{d+2n}-\frac{1}{2}\nu_{n+2d+1}}\]
while for $m=2d$ it decomposes into two irreducible modules as
\begin{eqnarray*}
\cP\otimes \mS_{2d|2n}&=&\left((\cP^+\otimes\mS_{2d|2n}^+)\oplus(\cP^-\otimes\mS_{2d|2n}^-)\right)\bigoplus\left((\cP^+\otimes\mS_{2d|2n}^-)\oplus(\cP^-\otimes\mS_{2d|2n}^+)\right)\\
&\cong&\,K^{2d+4n|4d+2n}_{\omega_{d+2n}-\frac{1}{2}\nu_{n+2d}}\quad \bigoplus\quad K^{2d+4n|4d+2n}_{\omega_{d+2n}+\nu_{n+2d-1}-\frac{3}{2}\nu_{n+2d}}.
\end{eqnarray*}
\end{theorem}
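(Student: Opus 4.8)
The plan is to exhibit $\mathfrak{g}$ as the Lie superalgebra of degree-two elements (``bivectors'') of a single super Clifford--Weyl algebra $\cA$ attached to an orthosymplectic super vector space of dimension $(m+4n)|(2m+2n)$, and to identify $\cP\otimes\mS_{m|2n}$ with the corresponding Fock module. Granting this, the isomorphism $\mathfrak{g}\cong\mathfrak{osp}(m+4n|2m+2n)$ becomes the statement that the degree-two part of such an algebra, with the supercommutator, is $\mathfrak{osp}$ realised through the bivector map \eqref{Bij}; and the decomposition of $\cP\otimes\mS_{m|2n}$ into irreducible $\mathfrak{g}$-modules together with the highest weights is read off from the description of the super spinor spaces $\mS_{m+4n|2m+2n}$ in Section~\ref{superClifford}.

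First I would isolate the degree-one operators out of which every generator listed in Definition~\ref{defbigosp} is built. From $\cP$ these are $X_j,\partial_{X_j}$: by $[\partial_{X_j},X_k]=\delta_{jk}+(-1)^{[j][k]}X_k\partial_{X_j}$ the even ones ($j\le m$) satisfy Heisenberg relations and the odd ones ($\xg_i,\partial_{\xg_i}$, $i=1,\dots,2n$) Clifford relations, so together they furnish $2m$ symplectic and $4n$ orthogonal generators. From $\mS_{m|2n}$ these are the $\kappa(E_k)$: by \eqref{commrelCliff} the first $m$ of them (including the chirality operator $\kappa(E_m)=iG$ when $m$ is odd) are Clifford-type and the last $2n$ (the $\sqrt{2}\,t_i$ and $-\sqrt{2}\,\partial_{t_i}$) Heisenberg-type, providing $m$ orthogonal and $2n$ symplectic generators. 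The twist $(-1)^{([i]+[j])\mE}$ in Definition~\ref{defbigosp} is exactly the device that changes the $\mZ_2$-degree of an operator by its $\cP$-degree, so that $X_j$ acquires parity $[j]+\overline{1}$ while $\kappa(E_k)$ keeps parity $[k]$; with this correction the $m+4n$ Clifford-type and $2m+2n$ Heisenberg-type operators assemble into a single super Clifford--Weyl algebra $\cA$ on an orthosymplectic super space of dimension $(m+4n)|(2m+2n)$.

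Next I would check that the generators of $\mathfrak{g}$ are precisely the degree-two part of $\cA$. The operators $X_iX_j$, $X_i\partial_{X^j}$, $\partial_{X^i}\partial_{X^j}$ exhaust the bivectors built from the $\cP$-generators, spanning the copy of $\mathfrak{osp}(4n|2m)$ acting on $\cP$ studied in \cite{OSpHarm}; the $B_{ij}$ exhaust the bivectors built from the $\mS_{m|2n}$-generators, spanning the copy of $\osp$; and $X_iE_j$, $E_j\partial_{X^i}$ exhaust the remaining ``mixed'' bivectors. Hence $\mathfrak{g}$ equals the degree-two part of $\cA$, which is isomorphic to $\mathfrak{osp}(m+4n|2m+2n)$; a dimension count, $\dim\mathfrak{g}=\frac{9}{2}m^2+\frac{1}{2}m+18mn+18n^2-n$, confirms that no linear dependence shrinks it. (One could instead verify relations \eqref{commL} directly for the twisted operators, but this is long and unilluminating.) In particular the realisation of $\mathfrak{osp}(1|2)$ from Theorem~\ref{osp12} and the realisation of $\osp$ through the $K_{ij}$ sit inside $\mathfrak{g}$ as the mutual centralisers recorded in the table of \cite{MR1893457}.

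For the module, with the $\mZ_2$-grading of Definition~\ref{defbigosp} the anticommuting variables $\xg_i$ and $\theta_j$ become the $2n+\lfloor m/2\rfloor=\lfloor(m+4n)/2\rfloor$ even Grassmann generators, and the variables $x_j$ and $t_i$ become the $m+n$ odd commuting generators, of the re-graded space $\cP\otimes\mS_{m|2n}$; so as a super vector space it is $\Lambda_{\lfloor(m+4n)/2\rfloor\,|\,m+n}$, which carrying the natural action of $\cA$ is, by Definitions~\ref{superGrass}--\ref{kappamap} applied to $\cA$, the super spinor space $\mS_{m+4n|2m+2n}$. Thus $\cP\otimes\mS_{m|2n}\cong\mS_{m+4n|2m+2n}$ as $\mathfrak{g}$-modules, and the irreducibility statements and the highest weights follow from Section~\ref{superClifford}: for $m=2d+1$ it is the simple module $K^{2d+1+4n|4d+2+2n}_{\omega_{d+2n}-\frac{1}{2}\nu_{n+2d+1}}$, and for $m=2d$ it splits as $\mS^+_{2d+4n|4d+2n}\oplus\mS^-_{2d+4n|4d+2n}$. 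In the even case the half-spinor splitting is by the parity of the total $\Lambda$-degree, which equals the parity of the sum of the $\cP$-degree and the $\mS_{m|2n}$-degree; hence $(\cP^+\otimes\mS^+)\oplus(\cP^-\otimes\mS^-)$ is the summand $\Lambda^+$ containing the vacuum $1\otimes1$, so it equals $\mS^+_{2d+4n|4d+2n}=K^{2d+4n|4d+2n}_{\omega_{d+2n}-\frac{1}{2}\nu_{n+2d}}$, and the complementary combination is $\mS^-$ with the stated weight. The main obstacle is the third paragraph, and within it the bookkeeping forced by the twist $(-1)^{([i]+[j])\mE}$: one must verify that it assigns parities consistently --- so the mixed generators $X_iE_j$, $E_j\partial_{X^i}$ and the twisted $B_{ij}$, $X_iX_j,\dots$ carry exactly the parities the corresponding bivectors of $\cA$ demand --- and that the twisted operators still satisfy the super Jacobi identity, so that they really assemble into the bivector algebra of one super Clifford--Weyl algebra rather than a deformed or larger object; the rest (the identity $\lfloor(m+4n)/2\rfloor=2n+\lfloor m/2\rfloor$, matching the $\omega$- and $\nu$-shifts, and placing the vacuum in $\Lambda^+$) is routine.
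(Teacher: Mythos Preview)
Your proposal is correct and follows essentially the same route as the paper: pass to degree-one operators, observe that with the twist they form a single super Clifford--Weyl algebra on a space of superdimension $(m+4n)|(2m+2n)$, identify $\mathfrak{g}$ with its bivector part, and read off the module structure as the spinor space $\mS_{m+4n|2m+2n}$. The paper's proof is terser---it writes down the twisted degree-one variables $X_j(-1)^{[j]\mE}$, $t_i$, $\theta_k(-1)^{\mE}$ directly and invokes the oscillator realisation and \cite{Tensor} for the highest weights---but your more explicit identification $\cP\otimes\mS_{m|2n}\cong\Lambda_{\lfloor(m+4n)/2\rfloor\,|\,m+n}\cong\mS_{m+4n|2m+2n}$ and the tracking of the half-spinor splitting via the vacuum $1\otimes 1$ make the same argument transparent. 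One small refinement: the twist is needed not only to reassign parities but also to force the cross-relations (e.g.\ making $x_j$ and $\theta_k(-1)^{\mE}$ genuinely anticommute); the paper handles this by twisting $\theta_k$ as well as $\xg_i$, which your phrase ``$\kappa(E_k)$ keeps parity $[k]$'' does not quite capture, though you correctly flag this bookkeeping as the main thing to verify.
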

Before we prove this theorem we note that this different behavior for $m$ even or odd corresponds to the observed properties for the multiplicity-free irreducible direct sum decompositions under $\mathfrak{osp}(1|2)\times\osp$ earlier in this section. Each simple $\mg$-submodule of $\cP\otimes \mS_{2d|2n}$ leads to a realization of the Howe duality $\mathfrak{osp}(1|2)\times \mathfrak{osp}(m|2n)$.
\begin{proof}
The irreducibility and decomposition of $\cP\otimes\mS_{m|2n}$ as a $\mg$-representation follow immediately from the definition of $\mg$ and Definition \ref{kappamap}.

To prove the claim $\mg\cong\mathfrak{osp}(m+4n|2m+2n)$ we restrict to $m=2d$, with the case $m=2d+1$ being similar. First of all it is easier to replace the Clifford algebra  elements $E_j$, $j=1,\cdots,2d$ by their corresponding Grassmann variables $\theta_i,\partial_{\theta_i}$, $i=1,\cdots,d$ from Definition \ref{kappamap} and likewise we express $E_j$, $j=2d+1,\cdots,2d+2n$ in terms of $t_k,\partial_{t_k}$, $k=1,\cdots,n$. The variables $X_j(-1)^{[j]\mE}$, $t_i$ and $\theta_k(-1)^\mE$ correspond to $2d+n$ commuting variables and $2n+d$ anti-commuting variables such that the commuting and anti-commuting variables mutually anti-commute. The operators in the definition of $\mg$ are then exactly the quadratic elements in the algebra generated by these variables and their partial derivatives which generate the Lie superalgebra $\mathfrak{osp}(2d+4n|4d+2n)$, as in the oscillator realization or super spinor realizations of orthosymplectic superalgebras, see e.g. \cite{OSpHarm, 
 Tensor, Kyo}. The highest weight then follows immediately from \cite{Tensor}.
\end{proof}

The realizations of the Howe duality are thus given by
\begin{equation}
\label{sumHowe1}
K^{2d+1+4n|4d+2+2n}_{\omega_{d+2n}-\frac{1}{2}\nu_{n+2d+1}}\cong \bigoplus_{k=0}^\infty \left(T^{1|2}_{k+d-n+\frac{1}{2}}\times K^{2d+1|2n}_{k\epsilon_1+\omega_d-\frac{1}{2}\nu_n}\right),
\end{equation}
for $\mathfrak{osp}(2d+1|2n)\times\mathfrak{osp}(1|2)\subset\mathfrak{osp}(2d+1+4n|4d+2+2n)$ and
\begin{eqnarray}
\label{sumHowe2}
K^{2d+4n|4d+2n}_{\omega_{d+2n}-\frac{1}{2}\nu_{n+2d}}&\cong&\bigoplus_{k=0}^\infty \left(T^{1|2}_{k+d-n}\times K^{2d|2n}_{k\epsilon_1+\omega_d-\frac{1}{2}\nu_n-\frac{1}{2}(1-(-1)^k)\delta_n}\right)\quad\mbox{and}\\
\label{sumHowe3}
K^{2d+4n|4d+2n}_{\omega_{d+2n}+\nu_{n+2d-1}-\frac{3}{2}\nu_{n+2d}}&\cong&\bigoplus_{k=0}^\infty \left(T^{1|2}_{k+d-n}\times K^{2d|2n}_{k\epsilon_1+\omega_d-\frac{1}{2}\nu_n-\frac{1}{2}(1+(-1)^k)\delta_n}\right),
\end{eqnarray}
for $\mathfrak{osp}(2d|2n)\times\mathfrak{osp}(1|2)\subset\mathfrak{osp}(2d+4n|4d+2n)$ if $d>n$.

For the sake of completeness, we repeat this main result in the distinguished root system, see \cite{MR051963}. We denote a highest weight representation of $\osp$ with highest weight $\mu$ with respect to that root system by $L^{m|2n}_\mu$. The conversion from one root system to the other is summarized in Section 4 in \cite{Tensor}. Explicitly we have the following:
\begin{eqnarray*}
L^{2d+1+4n|4d+2+2n}_{\omega_{d+2n}-\frac{1}{2}\nu_{n+2d+1}}&\cong&\bigoplus_{k=0}^n \left(T^{1|2}_{k+d-n+\frac{1}{2}}\times L^{2d+1|2n}_{\omega_d+\nu_k-\frac{1}{2}\nu_n}\right) \,\oplus\,\bigoplus_{k=n+1}^\infty \left(T^{1|2}_{k+d-n+\frac{1}{2}}\times L^{2d+1|2n}_{(k-n)\epsilon_1+\omega_d+\frac{1}{2}\nu_n}\right),\\
L^{2d+4n|4d+2n}_{\omega_{d+2n}-\frac{1}{2}\nu_{n+2d}}&\cong&\bigoplus_{k=0}^n \left(T^{1|2}_{k+d-n}\times L^{2d|2n}_{\omega_{d-\frac{1}{2}(1-(-1)^k)}+\nu_k-\frac{1}{2}\nu_n}\right)\\
&&\oplus\bigoplus_{k=n+1}^\infty \left(T^{1|2}_{k+d-n}\times L^{2d|2n}_{(k-n)\epsilon_1+\omega_{d-\frac{1}{2}(1-(-1)^k)}+\frac{1}{2}\nu_n}\right)\quad\mbox{and}\\
L^{2d+4n|4d+2n}_{\omega_{d+2n-1}-\frac{1}{2}\nu_{n+2d}}&\cong&\bigoplus_{k=0}^n \left(T^{1|2}_{k+d-n}\times L^{2d|2n}_{\omega_{d-\frac{1}{2}(1+(-1)^k)}+\nu_k-\frac{1}{2}\nu_n}\right)\\
&&\oplus\bigoplus_{k=n+1}^\infty \left(T^{1|2}_{k+d-n}\times L^{2d|2n}_{(k-n)\epsilon_1+\omega_{d-\frac{1}{2}(1+(-1)^k)}+\frac{1}{2}\nu_n}\right).
\end{eqnarray*}

\section{The $\osp$-modules of spherical monogenics}
\label{Mkasreps}

In Corollary \ref{monosp} most of the spaces of spherical monogenics on $\mR^{m|2n}$ were identified as irreducible infinite dimensional highest weight modules of $\osp$. In this section we show that in the remaining cases the corresponding representations are not irreducible. They are still indecomposable highest weight representations and we determine their decomposition series.  An indecomposable representation is a representation which is not the direct sum of two subrepresentations. First we need the following lemma.
\begin{lemma}
\label{multfree}
The space of spherical monogenics $\cM_k$ on $\mR^{m|2n}$ has a multiplicity-free decomposition into simple $\mathfrak{so}(m)\oplus\mathfrak{sp}(2n)$-modules.
\end{lemma}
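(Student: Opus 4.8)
The plan is to reduce the statement to the classical Fischer decompositions by using the factorization of $\cP\otimes\mS_{m|2n}$ over $\mathfrak{so}(m)\oplus\mathfrak{sp}(2n)$, and then to extract the monogenics using that $\px$ is surjective. First I would record that, as a $\sosp$-module,
\[
\cP\otimes\mS_{m|2n}\;\cong\;\bigl(S(\mC^m)\otimes\mS_m\bigr)\otimes\bigl(\Lambda(\mC^{2n})\otimes\mS_{0|2n}\bigr),
\]
the first factor carrying only the $\mathfrak{so}(m)$-action together with the $m$ bosonic variables $\ux$, the second only the $\mathfrak{sp}(2n)$-action together with the $2n$ fermionic variables $\uxb$; this is $\cP\cong\mC[\ux]\otimes\Lambda_{2n}$ combined with $\mS_{m|2n}\cong\Lambda_{d|n}$, the $\theta_j$'s spanning $\mS_m$ and the $t_i$'s spanning $\mS_{0|2n}$. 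By the monogenic Fischer decomposition on $\mR^m$ (Theorem \ref{bosmon} and the discussion following it, in its chiral form when $m$ is even), the $\mathfrak{so}(m)$-isotypic component of $S(\mC^m)\otimes\mS_m$ of a type $V$ is either zero or equals $\bigoplus_{j\ge0}\ux^j\cM^b_l$ with $V\cong\cM^b_l$; hence $V$ occurs in $(S(\mC^m)\otimes\mS_m)_b$ with multiplicity $1$ if $b\ge l$ and $0$ otherwise.

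For the second factor the corresponding statement is Kr\'ysl's Howe duality \cite{Krysl} (equivalently the $m\to0$ specialization of the present construction): $\Lambda(\mC^{2n})\otimes\mS_{0|2n}$ is a semisimple $\mathfrak{sp}(2n)$-module in which every simple type $W$ occurs in the homogeneous component of degree $a$ with multiplicity $h_W(a)\le1$, and $h_W$ is supported on a finite interval (as $\Lambda_{2n}$ is finite-dimensional). Combining the two, $\cP\otimes\mS_{m|2n}$ is a semisimple $\sosp$-module whose simple summands all have the form $V\boxtimes W$, and the multiplicity of $V\boxtimes W$ in its degree-$d$ part is
\[
f_{V,W}(d)=\sum_{a}h_W(a)\,\mathbf{1}[\,d-a\ge l\,]
\]
where $l$ is determined by $V\cong\cM^b_l$ (and there are no other simple types present).

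Next I would use that $\px$ is homogeneous of degree $-1$, commutes with $\sosp$, and is surjective on $\cP\otimes\mS_{m|2n}$ — the last because $\px^2=-\Delta$ and $\Delta$ is surjective on $\cP$ (since the classical Laplacian is, and $m>0$). Therefore $\px$ maps the degree-$d$ part of the $(V\boxtimes W)$-isotypic component of $\cP\otimes\mS_{m|2n}$ onto its degree-$(d-1)$ part, and by rank–nullity the multiplicity of $V\boxtimes W$ in $\cM_k=\ker\px\cap(\cP\otimes\mS_{m|2n})_k$ equals $f_{V,W}(k)-f_{V,W}(k-1)$. This telescopes:
\[
f_{V,W}(k)-f_{V,W}(k-1)=\sum_a h_W(a)\bigl(\mathbf{1}[k-a\ge l]-\mathbf{1}[k-1-a\ge l]\bigr)=\sum_a h_W(a)\,\mathbf{1}[a=k-l]=h_W(k-l)\le1.
\]
Thus every simple $\sosp$-module occurs in $\cM_k$ with multiplicity at most one, which is the assertion; the same argument applies verbatim to $\cM_k^\pm$ when $m$ is even.

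The one genuinely external input — and the step I expect to be the real obstacle — is the structure of the second factor, namely the "symplectic monogenic Fischer decomposition'' of $\Lambda(\mC^{2n})\otimes\mS_{0|2n}$, i.e.\ the bound $h_W(a)\le1$. This is precisely the case $m=0$, where $M=m-2n\in-2\mN$, so it lies outside the hypotheses of the Fischer-type theorems proved earlier in this paper and must be taken from Kr\'ysl \cite{Krysl}. It is worth noting that handling $S(\mC^m)\otimes\mS_m$ for odd $m$ does \emph{not} require disentangling the operator $G$ of Definition \ref{kappamap}: only the $\sosp$-module structure and the surjectivity of $\px$ enter, never a clean splitting $\px=\partial_{\ux}+\partial_{\uxb}$.
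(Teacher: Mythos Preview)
Your argument is correct and shares its setup with the first of the two proof sketches the paper offers: both start from the $\sosp$-factorization $\cP\otimes\mS_{m|2n}\cong(S(\mC^m)\otimes\mS_m)\otimes(\Lambda(\mC^{2n})\otimes\mS_{0|2n})$ and invoke the classical monogenic Fischer decomposition for the first factor together with Kr\'ysl's result for the second. Where you diverge is in the extraction step. The paper reduces to showing directly that, for each fixed pair of monogenic degrees on $\mR^m$ and $\mR^{0|2n}$, there is a unique polynomial in the two vector variables $\ux$, $\uxb$ of a given total degree whose product with the two monogenics is super monogenic. You replace this explicit analysis by the surjectivity of $\px$ (granted by the Remark following Theorem~\ref{osp12}) and a rank--nullity count on multiplicity spaces, which telescopes to $h_W(k-l)\le 1$. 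This is cleaner in that it never touches the mixed vector-variable polynomials; it trades that computation for the observation that an equivariant surjection is surjective on each isotypic component. The paper's second sketch (via the $\sosp$-decomposition of $\cH_k\otimes\mS^+_{2d|2n}$, obtaining multiplicities $\le 2$ and then splitting through $\cM_k^+\oplus\bold{x}\cM_{k-1}^-$) is genuinely different from both and does not rely on the $m=0$ Fischer decomposition.

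Two small points worth tightening. First, rank--nullity should be applied to the multiplicity spaces $\mathrm{Hom}_{\sosp}(V\boxtimes W,-)$, not to the isotypic components themselves, since the latter are infinite-dimensional when $W$ is a metaplectic constituent; this is implicit in your write-up but should be said. Second, your factorization of $\mS_{m|2n}$ as an $\sosp$-module is immediate for $m=2d$ but for $m=2d+1$ requires noting that the $G$-twist coming from $\kappa(E_m)$ only alters the $\mathfrak{so}(2d+1)$-realization on $\Lambda_d$ by a sign depending on the $t$-parity, and both realizations give the same irreducible spinor; you allude to this but it deserves one sentence.
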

\begin{proof}
This can be proven similarly to the corresponding result for the spherical harmonics on superspace, see \cite{DBE1}. First the decomposition of $\cP\otimes\mS_{m|2n}$ under the action of $\mathfrak{so}(m)\oplus\mathfrak{sp}(2n)$ can be considered. This corresponds to the Fischer decompositions of $\mR[x_1,\cdots,x_m]\otimes\mS_{m|0}$ and $\Lambda_{2n}\otimes\mS_{0|2n}$, which can be found in \cite{MR2677004, MR1169463} and \cite{Krysl}. Each element of $\cP\otimes\mS_{m|2n}$ can be written in terms of the vector variables on $\mR^m$ and $\mR^{0|2n}$ and the corresponding spherical monogenics. Then it needs to be proven that for each pair of degrees of the spherical monogenics on $\mR^m$ and $\mR^{0|2n}$ there is only one polynomial in the two vector variables of a fixed degree, such that the product with the monogenics is super monogenic.

An alternative proof is to consider the space $\cH_k\otimes\mS^{+}_{2d|2n}$, using the decomposition into irreducible $\mathfrak{so}(2d)\oplus\mathfrak{sp}(2n)$-modules in Theorem 4 in \cite{DBE1}, and the decomposition $\mS^+_{2d|2n}=\mS^+_{2d|0}\times\mS^+_{0|2n}\oplus\mS^-_{2d|0}\times\mS^-_{0|2n}$, see \cite{Tensor}. Using the well-known classical tensor products for $\mathfrak{so}(2d)$ and $\mathfrak{sp}(2n)$ it follows that the decomposition of $\cH_k\otimes\mS^{+}_{2d|2n}$ has multiplicities not greater than two. It then remains to be checked that the representations that appear twice are split up under the decomposition $\cH_k\otimes\mS^{+}_{2d|2n}=\cM_k^+\oplus\bold{x}\cM_{k-1}^-$
\end{proof}

\begin{theorem}
\label{remMk}
For $\cM_k$ the space of spherical monogenics of homogeneous degree $k$ on $\mR^{2d|2n}$ with $d\le n$ and $1+n-d\le k\le 1+2n-2d$, the relations
\begin{eqnarray*}
\bold{x}^{2d-2n+2k-1}\cM_{2n-2d-k+1}&\subset&\cM_k
\end{eqnarray*}
and 
\begin{eqnarray*}
\cM^+_k/(\bold{x}^{2d-2n+2k-1}\cM^-_{2n-2d-k+1})&\cong&K^{2d|2n}_{k\epsilon_1+\omega_d-\frac{1}{2}\nu_n}\\
\cM^-_k/(\bold{x}^{2d-2n+2k-1}\cM^+_{2n-2d-k+1})&\cong&K^{2d|2n}_{k\epsilon_1+\omega_d+\nu_{n-1}-\frac{3}{2}\nu_n}
\end{eqnarray*}
hold. Furthermore $\cM_k^\pm$ is always indecomposable and
\begin{equation}
\label{onlysubrepMk}
\cM_k^\pm\cap\left(\bold{x}\,\,\cP\otimes\mS_{2d|2n}\right)=\bold{x}^{2d-2n+2k-1}\cM_{2n-2d-k+1}^\mp
\end{equation}
holds.
\end{theorem}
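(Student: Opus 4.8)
The plan is to treat the successive assertions in turn, using the $\mathfrak{osp}(1|2)$-relations of Theorem~\ref{osp12}, the tensor product decompositions of Theorems~\ref{decomp2} and \ref{HkS}, Lemma~\ref{irrHk}, and the fact that a highest weight module has a unique maximal proper submodule.

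\emph{The inclusion and non-vanishing.} Set $l_0=d-n+k-1$ and $k'=2n-2d-k+1$; the bounds $1+n-d\le k\le 1+2n-2d$ are precisely what make $l_0$ and $k'$ nonnegative integers, and one checks $2d-2n+2k-1=2l_0+1$, $k'+(2l_0+1)=k$ and $2k'+2l_0+M=0$ with $M=2d-2n$. For $M_{k'}\in\cM_{k'}$, equation \eqref{px2l1} gives $\px\left(\bold{x}^{2l_0+1}M_{k'}\right)=-(2k'+2l_0+M)\,\bold{x}^{2l_0}M_{k'}=0$, so $\bold{x}^{2l_0+1}\cM_{k'}\subseteq\cM_k$ since the left-hand side is homogeneous of degree $k$. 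Moreover $\bold{x}$ is injective on $\cP\otimes\mS_{2d|2n}$: if $\bold{x}h=0$ then $R^2h=-\bold{x}^2h=0$, and writing $R^2=A+N$ with $A=\sum_{j=1}^{2d}x_j^2$ and $N$ the fermionic part, which commutes with $A$ and satisfies $N^{n+1}=0$, the relation $R^2h=0$ forces $Ah=-Nh$, whence $A^{n+1}h=(-N)^{n+1}h=0$ and hence $h=0$, $A$ being injective on $\cP$ since $m>2$. Therefore $\bold{x}^{2l_0+1}\colon\cM_{k'}^{\mp}\to\cM_k^{\pm}$ is an injective $\osp$-morphism; since $k'\le n-d$, Corollary~\ref{monosp} identifies $\cM_{k'}^{\mp}$ as an irreducible module, so $\bold{x}^{2l_0+1}\cM_{k'}^{\mp}$ is a nonzero irreducible $\osp$-submodule of $\cM_k^{\pm}$.

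\emph{The module structure.} As in the classical Clifford-analytic theory, $\cM_k^{\pm}$ is generated as an $\osp$-module by an explicit monogenic highest weight vector of weight $k\epsilon_1+\omega_d-\tfrac12\nu_n$, respectively $k\epsilon_1+\omega_d+\nu_{n-1}-\tfrac32\nu_n$; being a highest weight module it is automatically indecomposable (its proper submodules, hence their sum, miss the one-dimensional highest weight line), which is the third assertion, and its unique simple quotient carries that highest weight. It remains to identify the maximal submodule with $\bold{x}^{2l_0+1}\cM_{k'}^{\mp}$. For $k\ne 1+n-d=1-\tfrac12 M$ one argues by induction on $k$: here $\cH_k$ is reducible with composition factors $K^{2d|2n}_{k\epsilon_1}$ and $K^{2d|2n}_{(2-M-k)\epsilon_1}$ (Lemma~\ref{irrHk}), so by Theorem~\ref{decomp2} the space $\cH_k\otimes\mS^{+}_{2d|2n}$ has four (distinct) composition factors; splitting off $\bold{x}\cM_{k-1}^{-}\cong\cM_{k-1}^{-}$ by Theorem~\ref{HkS}, and using the description of $\cM_{k-1}^{-}$ already obtained, one is left with exactly the two factors $K^{2d|2n}_{k\epsilon_1+\omega_d-\frac12\nu_n}$ and $K^{2d|2n}_{(1-M-k)\epsilon_1+\omega_d+\nu_{n-1}-\frac32\nu_n}\cong\cM_{k'}^{-}$ for $\cM_k^{+}$. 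The degenerate value $k=1+n-d$, where Theorem~\ref{HkS} collapses and $\bold{x}\cM_{k-1}^{-}\subseteq\cM_k^{+}$, is handled separately: $\px$ vanishes on $\cH_{1+n-d}\otimes\mS^{+}_{2d|2n}$ (otherwise its image would be the irreducible module $\cM_{n-d}^{-}$, but then $\cM_{1+n-d}^{+}$ would be the irreducible $K^{2d|2n}_{(1+n-d)\epsilon_1+\omega_d-\frac12\nu_n}$ and could not contain the nonzero submodule $\bold{x}\cM_{n-d}^{-}\cong\cM_{n-d}^{-}$), so $\cM_{1+n-d}^{+}=\cH_{1+n-d}\otimes\mS^{+}_{2d|2n}$, which again has two composition factors by Theorem~\ref{decomp2}. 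In all cases $\cM_k^{+}$ has composition length two, its unique maximal submodule is the irreducible one isomorphic to $\cM_{k'}^{-}$, and by the first part $\bold{x}^{2l_0+1}\cM_{k'}^{-}$ is a nonzero submodule isomorphic to $\cM_{k'}^{-}$, so the two coincide; this yields $\cM_k^{+}/(\bold{x}^{2l_0+1}\cM_{k'}^{-})\cong K^{2d|2n}_{k\epsilon_1+\omega_d-\frac12\nu_n}$, and symmetrically for $\cM_k^{-}$.

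\emph{The last identity \eqref{onlysubrepMk}.} Since $\bold{x}$ commutes with the $\osp$-action, $\bold{x}\,(\cP\otimes\mS_{2d|2n})$ is $\osp$-stable and hence $\cM_k^{\pm}\cap\left(\bold{x}\,\cP\otimes\mS_{2d|2n}\right)$ is an $\osp$-submodule of $\cM_k^{\pm}$. It is nonzero, as it contains $\bold{x}^{2l_0+1}\cM_{k'}^{\mp}$, and it is proper: a highest weight vector of $\cM_k^{\pm}$ has $\epsilon_1$-weight strictly larger than any weight occurring in $\cP_{k-1}\otimes\mS_{2d|2n}$, hence cannot lie in $\bold{x}(\cP_{k-1}\otimes\mS_{2d|2n})$. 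Since $\cM_k^{\pm}$ has a unique proper nonzero submodule by the previous step, this intersection equals $\bold{x}^{2l_0+1}\cM_{k'}^{\mp}$.

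I expect the main obstacle to be the composition-factor bookkeeping in the second step. Precisely in the range of $k$ at issue, the harmonic spaces $\cH_k$, the tensor products $K^{2d|2n}_{l\epsilon_1}\otimes\mS^{\pm}_{2d|2n}$ and the spaces $\cM_{k-1}^{\mp}$ are all at the boundary of complete reducibility, so one must keep careful track of which composition factors of $\cH_k\otimes\mS_{2d|2n}$ are carried by $\cM_k^{\pm}$ and which by $\bold{x}\cM_{k-1}^{\mp}$ — this is where the multiplicity-freeness of Lemma~\ref{multfree} is needed, to rule out spurious coincidences of factors — and the endpoint $k=1+n-d$, where the monogenic Fischer decomposition degenerates, has to be dealt with on its own.
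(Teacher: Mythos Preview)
Your argument has two genuine gaps.

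\textbf{Indecomposability.} You assert that $\cM_k^{\pm}$ is generated by an explicit highest weight vector and deduce indecomposability from this. But the claim ``$\cM_k^{\pm}$ is a highest weight module'' is precisely equivalent to ``$\cM_k^{\pm}$ is indecomposable'' once you know the composition length is two: if the module split, the highest weight vector would lie in and generate only the top summand. So the argument is circular. The paper establishes indecomposability by a genuinely different route: it passes to the semisimple subalgebra $\mathfrak{so}(2d)\oplus\mathfrak{sp}(2n)$, uses the multiplicity-freeness of Lemma~\ref{multfree} to pin down the \emph{unique} complement $V$ of the irreducible submodule, and then shows explicitly, via the indecomposability of $\cH_{k+1}$, that $V$ is not $\osp$-stable. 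This is where Lemma~\ref{multfree} is actually used --- not merely to ``rule out spurious coincidences of factors'' as you suggest.

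\textbf{The endpoint $k=1+n-d$.} Your claim that $\px$ vanishes on $\cH_{1+n-d}\otimes\mS^{+}_{2d|2n}$ is false. Given any $M\in\cM_{n-d}^{-}$, surjectivity of $\px$ on $\cP\otimes\mS$ produces $f\in\cP_{1+n-d}\otimes\mS^{+}$ with $\px f=M$; then $\Delta f=-\px^2 f=-\px M=0$, so $f\in\cH_{1+n-d}\otimes\mS^{+}$. Hence $\px$ is surjective onto $\cM_{n-d}^{-}$, and $\cM_{1+n-d}^{+}\subsetneq\cH_{1+n-d}\otimes\mS^{+}$. Your contradiction (``then $\cM_{1+n-d}^{+}$ would be irreducible'') implicitly assumes $\cH_{1+n-d}\otimes\mS^{+}$ has two composition factors, but Theorem~\ref{decomp2} says nothing of the sort at $k=n-d+1$; in fact that tensor product has length three (this is exactly Corollary~\ref{tensornotcr}, proved \emph{after} the present theorem). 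The paper avoids this landmine entirely by working with $\cH_{k+1}\otimes\mS^{-}$ instead of $\cH_k\otimes\mS^{+}$: for $k$ in the stated range both $k+1$ and $1+2n-2d-k$ stay away from the degenerate value $1+n-d$, so Theorem~\ref{HkS} applies twice and one reads off the quotient directly, without needing any separate base case.

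Your treatment of the inclusion $\bold{x}^{2l_0+1}\cM_{k'}\subset\cM_k$ (including the injectivity of $\bold{x}$) and of equation~\eqref{onlysubrepMk} is fine and matches the paper's reasoning, but both rely on the structural results above.
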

\begin{proof}
The first relation follows immediately from equation \eqref{px2l1}. 

For the second relation we take $1+n-d\le k\le 1+2n-2d$ and combine Lemma \ref{irrHk} with Theorem \ref{decomp2} into
\[
\left(\cH_{k+1}\otimes \mS^{-}_{2d|2n}\right)/\left(R^{2k+2d-2n}\cH_{1+2n-2d-k}\otimes \mS^{-}_{2d|2n}\right) \cong K^{2d|2n}_{(k+1)\epsilon_1+\omega_d+\nu_{n-1}-\frac{3}{2}\nu_n}\oplus K^{2d|2n}_{k\epsilon_1+\omega_d-\frac{1}{2}\nu_n}.
\]
Applying Theorem \ref{HkS} twice (which is possible since $k+1\not=1+n-d$ and $1+2n-2d-k\not=1+n-d$) then yields that this is isomorphic to
\begin{eqnarray*}
\left(\cM_{k+1}^-\oplus \bold{x}\cM_{k}^+ \right)/\left(\bold{x}^{2k+2d-2n}\cM_{1+2n-2d-k}^-\oplus\bold{x}^{2k+2d-2n+1}\cM^+_{2n-2d-k}\right).
\end{eqnarray*}
The first relation in the theorem implies that this is isomorphic to
\begin{eqnarray*}
\cM_{k+1}^-/\left(\bold{x}^{2k+2d-2n+1}\cM^+_{2n-2d-k}\right)\,\oplus\, \left(\bold{x}\cM_{k}^+\right)/\left(\bold{x}^{2k+2d-2n}\cM_{1+2n-2d-k}^-\right),
\end{eqnarray*}
where for $k=1+2n-2d$ the first quotient is equal to $\cM_{k+1}^-$ since we consider $\cM^+_{-1}=0$. This then proves the second part of the theorem by iteration.

Now we prove that $\cM_k^\pm$ is indecomposable. For $k$ satisfying $1+n-d\le k\le 1+2n-2d$, Lemma \ref{irrHk} implies that $\cH_{k+1}$ has a non-trivial submodule $R^{2k+2d-2n}\cH_{2n-2d+1-k}$. As an $\mathfrak{so}(2d)\oplus\mathfrak{sp}(2n)$-representation the finite dimensional $\cH_k$ is completely reducible, hence $R^{2k+2d-2n}\cH_{2n-2d+1-k}$ has a complement $\mathfrak{so}(2d)\oplus\mathfrak{sp}(2n)$-module:
 \[
\cH_{k+1}=U \oplus R^{2k+2d-2n}\cH_{2n-2d+1-k}\qquad\mbox{as $\mathfrak{so}(2d)\oplus\mathfrak{sp}(2n)$-representations}.
\]
Also as $\mathfrak{so}(2d)\oplus\mathfrak{sp}(2n)$-representations, the equality
\[
U\otimes \mS^+_{2d|2n} \cong K^{2d|2n}_{(k+1)\epsilon_1+\omega_d-\frac{1}{2}\nu_n}\oplus K^{2d|2n}_{k\epsilon_1+\omega_d+\nu_{n-1}-\frac{3}{2}\nu_n}
\]
holds, since $U\cong K^{2d|2n}_{(k+1)\epsilon_1}$ as an $\mathfrak{so}(2d)\oplus\mathfrak{sp}(2n)$-representation and $k+1\ge 1+n-d$, which implies that Theorem \ref{decomp2} can be applied. This defines two $\mathfrak{so}(2d)\oplus\mathfrak{sp}(2n)$-subrepresentations $V\subset \bold{x}\cM_k^-$ and $W\subset \cM_{k+1}^+$ such that, as an $\mathfrak{so}(2d)\oplus\mathfrak{sp}(2n)$-representations $V\cong K^{2d|2n}_{k\epsilon_1+\omega_d+\nu_{n-1}-\frac{3}{2}\nu_n}$ holds. Since $\cM_k^-$ has a multiplicity-free decomposition into $\mathfrak{so}(2d)\oplus\mathfrak{sp}(2n)$-representations, this implies that $V$ is the unique $\mathfrak{so}(2d)\oplus\mathfrak{sp}(2n)$-complement module of $\bold{x}^{2d-2n+2k}\cM^+_{2n-2d-k+1}$ inside $\bold{x}\cM_k^-$. It remains to be proved that the decomposition \[\bold{x}\cM_k^-=V\oplus \bold{x}^{2d-2n+2k}\cM^+_{2n-2d-k+1}\] does not hold as $\mathfrak{osp}(2d|2n)$-modules. This is equivalent to proving that the decomposition
\begin{eqnarray*}
\cH_{k+1}\otimes\mS^+_{2d|2n}=\bold{x}^{2d-2n+2k}\cM^+_{2n-2d-k+1}\oplus Z&\mbox{with}&Z=V\oplus\cM_{k+1}^+
\end{eqnarray*} does not hold as $\mathfrak{osp}(2d|2n)$-modules.

The highest weight vector of $\bold{x}^{2d-2n+2k}\cM^+_{2n-2d-k+1}$ is $v_2^+\otimes 1$ with $v_2^+$ the highest weight vector of $R^{2n-2n+2k}\cH_{1+2n-2d-k}$. This highest weight vector (vector which is annihilated by all positive root vectors) $v_2^+\in R^{2k+2d-2n}\cH_{2n-2d+1-k}\subset\cH_{k+1}$ is generated by action of negative root vectors on other elements of $\cH_k$, since $\cH_k$ is an indecomposable highest weight module. This highest weight vector can therefore be expressed as $v^+_2=\sum_{i}Y_i u_i$ with $u_i\in U$ and $Y_i$ negative root vectors of $\mathfrak{osp}(2d|2n)$. This means that we can write the highest weight vector of $\bold{x}^{2d-2n+2k}\cM^+_{2n-2d-k+1}$ as
\[
v_2^+\otimes 1 =\sum_i Y_i(u_i\otimes 1)-\sum_i (-1)^{|Y_i||u_i|}u_i\otimes Y_i(1).
\]
Now, $\sum_i (-1)^{|Y_i||u_i|}u_i\otimes Y_i(1)\in U\otimes\mS^+_{2d|2n}\subset Z$ and $(u_i\otimes 1)\in Z$ as well. If $Z$ is an $\mathfrak{osp}(2d|2n)$-module, then $\sum_i Y_i(u_i\otimes 1)\in Z$ also holds, which would imply that $v_2^+\otimes 1 \in Z$ which is a contradiction. Therefore $Z$ is not an $\mathfrak{osp}(2d|2n)$-module. The proof for $\cM_k^+$ is exactly the same.

Since $2n-2d-k+1\le n-d$, Corollary \ref{monosp} implies that $\bold{x}^{2d-2n+2k-1}\cM_{2n-2d-k+1}^-$ is an irreducible subrepresentation. The fact that $\cM_k^+/\bold{x}^{2d-2n+2k-1}\cM_{2n-2d-k+1}^-$ is irreducible implies that $\cM_k^+$ has no other submodule. Therefore $\cM_k^\pm$ has only one submodule and since $\cM_k^\pm\cap\left(\bold{x}\cP\otimes\mS_{2d|2n}\right)$ is also an $\mathfrak{osp}(2d|2n)$-module it must be equal to $\bold{x}^{2d-2n+2k-1}\cM_{2n-2d-k+1}^-$.
\end{proof}

This theorem yields, as a side result, the proof that Theorem 9 in \cite{Tensor} constitutes the complete decomposition series of the tensor product $K^{2d|2n}_{(n-d+1)\epsilon_1}\otimes K^{2d|2n}_{\omega_d-\frac{1}{2}\nu_n}$. This could not be settled in \cite{Tensor} and is stated in the following corollary, which gives extra information on the exceptional case in Theorem \ref{decomp1} and \ref{decomp2}.
\begin{corollary}
\label{tensornotcr}
If $n\ge d$ the tensor products $K^{2d|2n}_{(n-d+1)\epsilon_1}\otimes \mS^\pm_{2d|2n}$ are indecomposable but not irreducible. The representation has subrepresentations
\begin{eqnarray*}
K^{2d|2n}_{(n-d+1)\epsilon_1}\otimes K^{2d|2n}_{\omega_d-\frac{1}{2}\nu_n}\supsetneq V\supsetneq K^{2d|2n}_{(n-d)\epsilon_1+\omega_d+\nu_{n-1}-\frac{3}{2}\nu_n},
\end{eqnarray*}
with $V$ an indecomposable representation satisfying
\begin{eqnarray*}
\left(K^{2d|2n}_{(n-d+1)\epsilon_1}\otimes K^{2d|2n}_{\omega_d-\frac{1}{2}\nu_n}\right)/V&\cong&K^{2d|2n}_{(n-d)\epsilon_1+\omega_d+\nu_{n-1}-\frac{3}{2}\nu_n}\\
V/ K^{2d|2n}_{(n-d)\epsilon_1+\omega_d+\nu_{n-1}-\frac{3}{2}\nu_n}&\cong & K^{2d|2n}_{(n-d+1)\epsilon_1+\omega_d-\frac{1}{2}\nu_n}
\end{eqnarray*}
and the statements for $\mS^-_{2d|2n}$ are similar.
\end{corollary}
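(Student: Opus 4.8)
The plan is to reduce the statement to Theorem~\ref{remMk} by recognizing the tensor product as a space of spinor-valued spherical harmonics. Since $d\le n$ we have $M=2d-2n$ and hence $n-d+1<2-\frac{1}{2}M$, so by Lemma~\ref{irrHk} the module $\cH_{n-d+1}$ is irreducible and $\cH_{n-d+1}\cong K^{2d|2n}_{(n-d+1)\epsilon_1}$; therefore $K^{2d|2n}_{(n-d+1)\epsilon_1}\otimes\mS^{+}_{2d|2n}=\cH_{n-d+1}\otimes\mS^{+}_{2d|2n}$. The first step is to analyze the super Dirac operator on this space. Every element of $\cH_{n-d+1}\otimes\mS^{+}_{2d|2n}$ is annihilated by $\Delta$, and $\px^{2}=-\Delta$, so $\px$ maps $\cH_{n-d+1}\otimes\mS^{+}_{2d|2n}$ into $\ker\px\cap(\cP_{n-d}\otimes\mS^{-}_{2d|2n})=\cM^{-}_{n-d}$; its kernel there is exactly $\cM^{+}_{n-d+1}$, since a degree $n-d+1$ monogenic is automatically harmonic; and it is surjective onto $\cM^{-}_{n-d}$, because given $p\in\cM^{-}_{n-d}$ the surjectivity of $\px$ on $\cP\otimes\mS_{2d|2n}$ yields $g\in\cP_{n-d+1}\otimes\mS^{+}_{2d|2n}$ with $\px g=p$, whence $\Delta g=-\px^{2}g=-\px p=0$, so $g\in\cH_{n-d+1}\otimes\mS^{+}_{2d|2n}$. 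This gives a short exact sequence of $\osp$-modules
\[
0\longrightarrow \cM^{+}_{n-d+1}\longrightarrow K^{2d|2n}_{(n-d+1)\epsilon_1}\otimes\mS^{+}_{2d|2n}\stackrel{\px}{\longrightarrow}\cM^{-}_{n-d}\longrightarrow 0 .
\]

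The second step is bookkeeping. Set $V=\cM^{+}_{n-d+1}$. Theorem~\ref{remMk} with $k=n-d+1$ (so $2d-2n+2k-1=1$ and $2n-2d-k+1=n-d$) says that $\bold{x}\cM^{-}_{n-d}$ is the unique nonzero proper submodule of $V$, that $V$ is indecomposable, and that $V/(\bold{x}\cM^{-}_{n-d})\cong K^{2d|2n}_{(n-d+1)\epsilon_1+\omega_d-\frac{1}{2}\nu_n}$. Since $n-d<1+n-d$, Corollary~\ref{monosp} gives $\cM^{-}_{n-d}\cong K^{2d|2n}_{(n-d)\epsilon_1+\omega_d+\nu_{n-1}-\frac{3}{2}\nu_n}$, which is simple, so both $\bold{x}\cM^{-}_{n-d}$ and the quotient $\cM^{-}_{n-d}$ in the short exact sequence are isomorphic to it. Feeding this into the short exact sequence exhibits $K^{2d|2n}_{(n-d+1)\epsilon_1}\otimes\mS^{+}_{2d|2n}$ as a module of composition length exactly three, with bottom and top factor $K^{2d|2n}_{(n-d)\epsilon_1+\omega_d+\nu_{n-1}-\frac{3}{2}\nu_n}$ and middle factor $K^{2d|2n}_{(n-d+1)\epsilon_1+\omega_d-\frac{1}{2}\nu_n}$, and with $V$ realizing exactly the submodule and the two quotients claimed in the statement. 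Having composition length three is precisely the completeness of the decomposition series that could not be established in \cite{Tensor}. The case of $\mS^{-}_{2d|2n}$ is identical, now using $\px:\cH_{n-d+1}\otimes\mS^{-}_{2d|2n}\to\cM^{+}_{n-d}$ and $V=\cM^{-}_{n-d+1}$.

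The one nonformal point, which I expect to be the main obstacle, is the indecomposability of the full tensor product: Theorem~\ref{remMk} only provides indecomposability of the submodule $V$. Because $V$ is indecomposable with simple socle $\bold{x}\cM^{-}_{n-d}$, a decomposition $\cH_{n-d+1}\otimes\mS^{+}_{2d|2n}=A\oplus B$ would force, after relabeling, $B\cap V=0$, hence $B\hookrightarrow\cM^{-}_{n-d}$ and $B\cong\cM^{-}_{n-d}$; equivalently the short exact sequence above would split. I would rule this out by showing the sequence is non-split: any $\osp$-equivariant splitting would send the highest weight vector of $\cM^{-}_{n-d}$ to a vector of weight $\lambda_0=(n-d)\epsilon_1+\omega_d+\nu_{n-1}-\frac{3}{2}\nu_n$ in $\cH_{n-d+1}\otimes\mS^{+}_{2d|2n}$ that is annihilated by all positive root vectors and has nonzero image under $\px$, so it suffices to prove that every such singular vector already lies in $\ker\px=V$. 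Concretely this amounts to checking that the space of singular vectors of weight $\lambda_0$ in $\cH_{n-d+1}\otimes\mS^{+}_{2d|2n}$ is one-dimensional, spanned by $\bold{x}$ applied to the highest weight vector of $\cM^{-}_{n-d}$. I would carry this out by an explicit weight-vector computation in the realization of $\osp$ on $\cP\otimes\mS_{2d|2n}$ from Definition~\ref{kappamap}, in the spirit of the highest-weight-vector arguments in the proof of Theorem~\ref{remMk} and using that $\cH_{n-d+1}$ and $\mS^{\pm}_{2d|2n}$ are given explicitly. This computation is where the actual work lies; the rest is formal manipulation of the short exact sequence.
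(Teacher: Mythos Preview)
Your approach is essentially the paper's: identify the tensor product with $\cH_{n-d+1}\otimes\mS^{+}_{2d|2n}$, set $V=\cM^{+}_{n-d+1}$, use $\px$ to obtain the short exact sequence with quotient $\cM^{-}_{n-d}$, and then read off the subquotients from Theorem~\ref{remMk} and Corollary~\ref{monosp}. The paper phrases the last step as ``Im$(\px)=\cM_{n-d}^-$ while Ker$(\px)=\cM_{n-d+1}^+$'', which is exactly your short exact sequence.

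Where you diverge is in the treatment of indecomposability of the full tensor product. The paper does \emph{not} prove this in the corollary: as the sentence preceding the corollary makes explicit, the filtration and its indecomposability are the content of Theorem~9 in \cite{Tensor} (see also Theorem~\ref{decomp2} here, which records that the tensor product is not completely reducible for $k=n-d+1$). What was open in \cite{Tensor}, and what the present corollary settles, is only that this filtration is the \emph{complete} composition series --- i.e., that the three listed factors are simple --- and this is precisely what Theorem~\ref{remMk} supplies. So your singular-vector computation is unnecessary: you may simply cite \cite{Tensor} for the indecomposability, as the paper implicitly does.

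One remark on your proposed direct argument, should you still wish to pursue it: the reduction ``$A\oplus B$ forces, after relabeling, $B\cap V=0$'' is not immediate from $V$ being indecomposable with simple socle. In general $V\cap A$ and $V\cap B$ need not give a direct sum decomposition of $V$; one of them could equal $\operatorname{soc}(V)$ without forcing the other to be $0$ or $V$. A cleaner route is to show directly that $\operatorname{soc}$ of the full tensor product is simple (equal to $\bold{x}\cM^{-}_{n-d}$), which does force indecomposability; your proposed check that all singular vectors of weight $\lambda_0$ lie in $\ker\px$ is one ingredient of this, but you would also need to rule out a simple submodule of the other isomorphism type.
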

\begin{proof}
The identifications $\mS^+_{2d|2n}\cong K^{2d|2n}_{\omega_d-\frac{1}{2}\nu_n}$ and $\cH_{n-d+1}\cong K^{2d|2n}_{(n-d+1)\epsilon_1}$ hold. Corollary \ref{monosp} shows that $\bold{x}\cM_{n-d}^-\cong K^{2d|2n}_{(n-d)\epsilon_1+\omega_d+\nu_{n-1}-\frac{3}{2}\nu_n}$ holds. Then we define $V=\cM_{n-d+1}^+$. Theorem \ref{remMk} shows that 
\[V/ K^{2d|2n}_{(n-d)\epsilon_1+\omega_d+\nu_{n-1}-\frac{3}{2}\nu_n}\cong  K^{2d|2n}_{(n-d+1)\epsilon_1+\omega_d-\frac{1}{2}\nu_n}\] holds. The corollary is therefore proved if 
\begin{eqnarray*}
\left(\cH_{n-d+1}\otimes \mS^+_{2d|2n}\right)/\cM_{n-d+1}^+&\cong& \cM_{n-d}^-
\end{eqnarray*}
holds. This identity follows immediately from considering the operator $\px$ on $\cH_{n-d+1}\otimes \mS^+_{2d|2n}$, since Im$(\px)=\cM_{n-d}^-$ while Ker$(\px)=\cM_{n-d+1}^+$.
\end{proof}



 
\begin{remark}
\label{casimir}
It can be checked that up to an additive constant, the quadratic Casimir operator on $\cH_k \otimes \mS^{m|2n}$ is given by the operator $\bold{x}\px$. In particular this shows that whenever $K^{m|2n}_{k\epsilon_1} \otimes \mS^{m|2n}$ is completely reducible the Casimir operator has two different eigenvalues which can easily be checked directly. But it also shows that in case $K^{m|2n}_{k\epsilon_1} \otimes \mS^{m|2n}$ is not completely reducible, the Casimir operator is not diagonalizable.
\end{remark} 

\section{Symmetries of the super Dirac operator}
\label{secCon}

In this section we construct all first order generalized symmetries of the super Dirac operator with scalar symbol. Generalized symmetries of the super Dirac operator are differential operators $D$ for which there exists another differential operator $\delta$ such that
\begin{eqnarray*}
\px D=\delta \px
\end{eqnarray*}
holds. Such operators clearly preserve the kernel of the Dirac operator. The symmetries which are first order generate a Lie superalgebra since the composition of two symmetries is still a symmetry and the super commutator of two first order differential operators is still first order.

We define first order differential operators on $\cO(\mR^{m|2n})\otimes \mS_{m|2n}$ to be elements of the vector space
\begin{eqnarray*}
\left( \cO(\mR^{m|2n})\otimes\sC)\right)\,\oplus\,\left(\mathfrak{vect}(m|2n)\otimes\sC\right)
\end{eqnarray*}
with $\mathfrak{vect}(m|2n)$ the Lie superalgebra of vectorfields on $\cO(\mR^{m|2n})$, endomorphisms which satisfy the graded Leibniz rule.

We will find that the first order generalized symmetries with scalar symbol generate the Lie superalgebra $\mathfrak{osp}(m+1,1|2n)$, which is defined as in Section \ref{prelosp} but with a metric such that the orthogonal part has signature $m+1,1$. Therefore the kernel of the Dirac operator has the structure of an $\mathfrak{osp}(m+1,1|2n)$-module. We prove that this module is irreducible if $m-2n\not\in2-2\mN$. If $m-2n\in2-2\mN$ it is reducible but indecomposable and we determine the decomposition series.

In the classical case of the Dirac operator $\upx$ on $\mR^m$, the algebra of first order generalized symmetries with scalar symbol is isomorphic to the Lie algebra $\mathfrak{so}(m+1,1)$, which is also the algebra of conformal Killing vector fields on $\mR^m$. More precisely, the leading term of the generalized symmetries is the corresponding conformal Killing vector field. This conformal invariance of $\upx$ follows immediately from the similarity between the Stein-Weiss construction in \cite{MR0223492} and the construction of conformally invariant first order differential operators by Fegan in \cite{MR0482879}. By extending the calculation of Killing vector fields on $\mR^{m|2n}$ in Section 3 of \cite{OSpHarm} it can be proved 
 that the conformal algebra for $\mR^{m|2n}$ is $\mathfrak{osp}(m+1,1|2n)$. Again the leading terms of the generalized symmetries correspond to conformal Killing vector fields. Therefore we say that the super Dirac operator on $\mR^{m|2n}$ is conformally invariant.

To describe the conformal symmetries we introduce the following Kelvin inversion 
\[I:\cO(\mR^{m|2n}_0)\to\cO(\mR^{m|2n}_0),
\] 
with $\cO(\mR^{m|2n}_0)=\cC^\infty(\mR^m_0)\otimes\Lambda_{2n}$, via
\[
(If)(\bold{x})=\bold{x}R^{-M}f\left(\frac{\bold{x}}{R^2}\right),
\]
where $f\left(\frac{\bold{x}}{R^2}\right)$ should be understood as a finite Taylor expansion in the anticommuting variables, see e.g. \cite{CDBS3} for a very explicit approach to such functions. This Kelvin inversion satisfies $I^2=-1$ and therefore is an isomorphism of $\cO(\mR^{m|2n}_0)$.
\begin{theorem}
\label{confSymm}
The super Dirac operator on $\mR^{m|2n}$ has a Lie superalgebra of generalized symmetries which is isomorphic to $\mathfrak{osp}(m+1,1|2n)$. This realization of $\mathfrak{osp}(m+1,1|2n)$ is given by the differential operators
\begin{eqnarray}
\label{Pij}
\Pi_j=\bold{x}\widehat{E}_j+X_j(M+2\mE)-R^2\partial_{X^j}&\mbox{and}&\quad \partial_{X^j} \qquad\mbox{ for}\quad j=1,\cdots,m+2n,\\
\nonumber
2\mE+M-1 &\mbox{and}&\quad K_{ij}\quad\mbox{ given in equation \eqref{ospDirac}.}
\end{eqnarray}
\end{theorem}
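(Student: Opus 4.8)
The plan is to prove three statements: that every operator listed in \eqref{Pij} is a generalized symmetry of $\px$, that these operators close under the super-commutator into a Lie superalgebra, and that this Lie superalgebra is isomorphic to $\mathfrak{osp}(m+1,1|2n)$. Three of the four families are symmetries for quick reasons. The constant-coefficient operators $\partial_{X^j}$ commute with $\px=\sum_k\widehat{E_k}\partial_{X_k}$, so $\px\partial_{X^j}=\partial_{X^j}\px$. The operators $K_{ij}$ are exactly \eqref{ospDirac}. The dilation generator $D=2\mE+M-1$ satisfies $\px D=(D+2)\px$, since $[\px,\mE]=\px$ by Theorem \ref{osp12}.

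The substantial case is $\Pi_j$. I would realize $\Pi_j$, up to a nonzero scalar, as the image of a translation $\partial_{X^j}$ under conjugation by the Kelvin inversion $I$ introduced before the statement. The lemma to prove first is an intertwining relation of the form $\px\circ I=\mathcal A\circ I\circ\px$ with $\mathcal A$ an explicit invertible zeroth-order (multiplication) operator built from $\bold{x}$ and $R^{-2}$ — the superspace analogue of the classical identity relating $\upx$ to the Kelvin transform. Because $I^2=-1$, the relation for $I$ yields one for $I^{-1}=-I$, and composing gives $\px\Pi_j=\delta_j\px$ with $\delta_j$ first order. Equivalently, one verifies $\px\Pi_j=\delta_j\px$ directly: expand $\Pi_j=\bold{x}\widehat{E}_j+X_j(M+2\mE)-R^2\partial_{X^j}$ and reduce using the commutators collected in Theorem \ref{osp12}, the identity $\px X_j=\widehat{E_j}+X_j\px$, and the super-Clifford relations \eqref{commrelCliff}. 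Controlling the parity signs $(-1)^{[i][j]}$ throughout this reduction is the main obstacle of the proof.

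For the closure I would compute all super-commutators of the generators and check they remain in the span. The brackets among the $K_{ij}$ are \eqref{commL}; the brackets $[K_{ij},\partial_{X^k}]$ and $[K_{ij},\Pi_k]$ say that $\{\partial_{X^\bullet}\}$ and $\{\Pi_\bullet\}$ each carry a copy of the defining representation \eqref{nataction} of $\osp$; the dilation acts diagonally, with $[D,\partial_{X^j}]=-2\partial_{X^j}$ and $[D,\Pi_j]=2\Pi_j$; and $[\partial_{X^i},\partial_{X^j}]=0$, $[\Pi_i,\Pi_j]=0$. The decisive one is the mixed bracket $[\partial_{X^i},\Pi_j]$, which should come out as a linear combination of $g^{ij}D$ and a reindexed copy of $K_{ij}$. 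Each of these is a first-order computation using Theorem \ref{osp12} and the identities above.

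Finally, to pin down the isomorphism type, I would match the resulting structure constants against the standard presentation \eqref{commL} of an orthosymplectic superalgebra: adjoining two orthogonal-type coordinates and passing to the light-cone combinations of $\partial_{X^j}$, $\Pi_j$ and $D$, the relations just listed become those of $\osp$ with the orthogonal block enlarged by one, and the sign of the $g^{ij}D$ term in $[\partial_{X^i},\Pi_j]$ forces the new block to have signature $(1,1)$, hence $\mathfrak{osp}(m+1,1|2n)$ rather than another real form. The limit $n\to0$ recovers Fegan's $\mathfrak{so}(m+1,1)$ of symmetries of $\upx$, and the vector-field parts of \eqref{Pij} are precisely the conformal Killing fields on $\mR^{m|2n}$, which gives an independent check. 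To see that \eqref{Pij} exhausts the first-order symmetries with scalar symbol, one matches leading symbols in $\px D=\delta\px$: this forces the symbol of $D$ to be a conformal Killing field, the lower-order correction is then determined up to a symmetry of zero symbol, and a zeroth-order symmetry must be a scalar constant.
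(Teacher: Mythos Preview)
Your proposal is correct and follows essentially the same route as the paper: the easy symmetries are dispatched as you do, the paper verifies $\px\Pi_j=\bigl(\bold{x}\widehat{E_j}+X_j(M+2\mE+2)-R^2\partial_{X^j}\bigr)\px$ by the direct computation you outline and then records $\Pi_j=I\circ\partial_{X^j}\circ I$ (using this, not an intertwining relation for $\px$ with $I$, to obtain $[\Pi_i,\Pi_j]=0$), and the identification with $\mathfrak{osp}(m+1,1|2n)$ is made by defining $K_{-1,0}=\mE+\tfrac{M-1}{2}$, $K_{-1,j}=\tfrac12(\Pi_j-\partial_{X^j})$, $K_{0,j}=\tfrac12(\Pi_j+\partial_{X^j})$ and checking \eqref{commL} against the metric $h=\mathrm{diag}(-1,1,g)$. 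Your final paragraph on exhaustion of first-order scalar-symbol symmetries is not part of this theorem; the paper treats that as a separate result immediately afterwards.
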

\begin{proof}
As mentioned before the differential operators $K_{ij}$ in equation \eqref{ospDirac} commute with the Dirac operator. The partial derivatives $\partial_{X^j}$ clearly also commute with the super Dirac operator $\px$. Next we prove that the differential operators $\Pi_j$ in equation \eqref{Pij} are generalized symmetries. Therefore we calculate, using Theorem \ref{osp12} and Proposition \ref{DiracCliff}
\begin{eqnarray*}
\px\Pi_j&=&\px\bold{x}\widehat{E}_j+\px X_j(M+2\mE)-\px R^2\partial_{X^j}\\
&=&-\bold{x}\px\widehat{E}_j-(2\mE+M)\widehat{E}_j+\widehat{E_j}(M+2\mE)+X_j(M+2\mE+2)\px-2\bold{x}\partial_{X^j}-R^2\partial_{X^j}\px\\
&=&2\bold{x}\partial_{X^j}+\bold{x}\widehat{E_j}\px+X_j(M+2\mE+2)\px-2\bold{x}\partial_{X^j}-R^2\partial_{X^j}\px\\
&=&\left(\bold{x}\widehat{E_j}+X_j(M+2\mE+2)-R^2\partial_{X^j}\right)\px,
\end{eqnarray*}
which implies that $\Pi_j$ is a generalized symmetry of $\px$. A direct calculation shows that these generalized symmetries can be written as
\begin{eqnarray}
\label{PiI}
\Pi_j=I\circ \partial_{X^j}\circ I,
\end{eqnarray}
where $I$ is the Kelvin inversion.

The differential operator $2\mE+M-1$ is also clearly a generalized symmetry.

Now we define operators $K_{\alpha\beta}$ for $\alpha,\beta=-1,0,1,\cdots,m+2n$ with $\alpha\le\beta$ given by:
\[K_{\alpha\beta}=\begin{cases}
\mE+\frac{M-1}{2} &\mbox{for } \alpha=-1 \mbox{ and }\beta=0\\
\frac{1}{2}\left(\Pi_j-\partial_{X^j}\right)&\mbox{for }\alpha=-1 \mbox{ and }\beta=j>0\\
\frac{1}{2}\left(\Pi_j+\partial_{X^j}\right)&\mbox{for }\alpha=0\mbox{ and }\beta=j>0
\end{cases}
\]
and equal to the operators $K_{ij}$ in equation \eqref{ospDirac} for $\alpha,\beta>0$ and $\alpha=i$ and $\beta=j$.

We introduce the metric $h\in\mR^{(m+2+2n)\times(m+2+2n)}$ given by
\[
h=\left( \begin{array}{ccc} -1&0&0\\ 0&1&0\\  0&0&g\end{array}\right)
\]
with $g$ the metric in equation \eqref{gmetric}.

Now we prove that the operators $K_{\alpha\beta}$ satisfy 
\[
[K_{\alpha\beta},K_{\gamma\delta}]=h_{\gamma\beta}K_{\alpha\delta}+(-1)^{[\alpha]([\beta]+[\gamma])}h_{\delta\alpha}K_{\beta\gamma}-(-1)^{[\gamma][\delta]}h_{\delta\beta}K_{\alpha\gamma}-(-1)^{[\alpha][\beta]}h_{\gamma\alpha}K_{\beta\delta}
\]
with $[-1]=[0]=\overline{0}$ and $[i]$ given by equation \eqref{gradmap} for $i>0$. It is easy to verify $[K_{-1,0},K_{-1,j}]=K_{0,j}$, $[K_{-1,0},K_{0,j}]=K_{-1,j}$ and $[K_{-1,0},K_{ij}]=0$. Then we calculate
\begin{eqnarray*}
[\partial_{X^k},\Pi_j]&=&\widehat{E_k}\widehat{E_j}+g_{jk}(M+2\mE)+2(-1)^{[j][k]}X_j\partial_{X^k}-2X_k\partial_{X^j}\\
&=&-2\left(K_{kj}+g_{jk}(\mE+\frac{M-1}{2})\right).
\end{eqnarray*}
Together with $[\partial_{X^j},\partial_{X^k}]=0$ and $[\Pi_j,\Pi_k]=0$, which is a consequence of equation \eqref{PiI}, this leads to the relations $[K_{-1j},K_{-1k}]=K_{jk}$, $[K_{0j},K_{0k}]=-K_{jk}$ and $[K_{-1j},K_{0k}]=-g_{kj}K_{-10}$. To obtain the remaining commutation relations we need
\begin{eqnarray*}
[\partial_{X^j},K_{kl}]=[\partial_{X^j},L_{kl}]=g_{kj}\partial_{X^l}-(-1)^{[k][l]}g_{lj}\partial_{X^k},
\end{eqnarray*}
which follows from equation \eqref{partL}, and
\begin{eqnarray*}
[\Pi_j,K_{kl}]&=&[\bold{x}\widehat{E}_j,K_{kl}]+[X_j(M+2\mE),L_{kl}]-[R^2\partial_{X^j},L_{kl}]\\
&=&\bold{x}[\widehat{E}_j,B_{kl}]+[X_j,L_{kl}](M+2\mE)-R^2[\partial_{X^j},L_{kl}]\\
&=&\bold{x}\left(g_{kj}\widehat{E_l}-(-1)^{[j][k]}g_{lj}\widehat{E_k}\right)+\left(g_{kj}X_l-(-1)^{[j][k]}g_{lj}X_k\right)(M+2\mE)\\
&-&R^2\left(g_{kj}\partial_{X^l}-(-1)^{[k][l]}g_{lj}\partial_{X^k}\right)\\
&=&g_{kj}\Pi_l-(-1)^{[k][l]}g_{lj}\Pi_{k},
\end{eqnarray*}
where we used equation \eqref{Baction} and the $\osp$-invariance of $\mE$, $\bold{x}$ and $R^2$.
\end{proof}

In \cite{Joseph}, the action of the complexified $\mathfrak{osp}(m+1,1|2n)$-algebra on scalar functions on $\mR^{m|2n}$ is studied. We note that the action of $\mathfrak{osp}(m+1,1|2n)$ on $\cO(\mR^{m|2n})\otimes\mS_{m|2n}$ considered in the current paper does not correspond to a tensor product action of the action of $\mathfrak{osp}(m+1,1|2n)$ on $\cO(\mR^{m|2n})$ with some action on $\mS_{m|2n}$, contrary to the $\osp$-action. 

Now we prove that the symmetries in Theorem \ref{confSymm} constitute all first order symmetries with scalar symbol.
\begin{theorem}
Every first order generalized symmetry with scalar highest order term of the super Dirac operator $\px$ on $\mR^{m|2n}$ is included in the realization of $\mathfrak{osp}(m+1,1|2n)$ in Theorem \ref{confSymm}.
\end{theorem}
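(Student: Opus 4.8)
The plan is to take an arbitrary first order generalized symmetry $D$ with scalar symbol, write $D=T+C$ with $T=\sum_{j=1}^{m+2n}f_j\,\partial_{X^j}$ the symbol (a vector field on $\mR^{m|2n}$, so $f_j\in\cO(\mR^{m|2n})$) and $C\in\cO(\mR^{m|2n})\otimes\sC$ the zeroth order part, and to determine which $T$ and $C$ can occur; we may assume $D$ homogeneous for the $\mZ_2$-grading. First I would compare the second order parts of $\px D$ and $\delta\px$. By Theorem \ref{osp12} the principal symbol $P(\xi)=\sum_k\kappa(\widehat{E_k})\xi_k$ of $\px$ satisfies $P(\xi)^2=-\sum_{j,k}g_{jk}\xi_j\xi_k$, hence is invertible over the field of rational functions in $\xi$; cancelling it in the identity of second order symbols forces $\delta$ to have scalar symbol equal to $T$, say $\delta=T+C'$ with $C'\in\cO(\mR^{m|2n})\otimes\sC$ (an order count, using the same invertibility, also rules out $\delta$ being of higher order). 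Since each $\widehat{E_k}$ graded-commutes with every vector field, the symmetry condition $\px D=\delta\px$ then collapses to the single identity of first order operators
\[
[\px,T]\;+\;\px\,C\;-\;C'\,\px\;=\;0 .
\]

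The next step is to expand this identity and separate it by differential order and by Clifford degree. A short computation gives $[\px,T]=\sum_{j,k}\widehat{E_k}(\partial_{X_k}f_j)\,\partial_{X^j}$, a first order operator with $\sC$-valued coefficients and no zeroth order part, while $\px C$ has first order part $\sum_k(-1)^{[k]|C|}\widehat{E_k}C\,\partial_{X_k}$ and zeroth order part $\sum_k\widehat{E_k}(\partial_{X_k}C)$, and $C'\px=\sum_kC'\widehat{E_k}\partial_{X_k}$. Collecting the coefficient of each $\partial_{X_l}$ in the first order part yields, for $l=1,\dots,m+2n$, an equation relating the Jacobian $(\partial_{X_k}f_j)$ of $T$ to $\widehat{E_l}C$ and $C'\widehat{E_l}$. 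Decomposing $C$ and $C'$ by Clifford degree and matching degrees, the Clifford degree two (bivector) part identifies the graded antisymmetric part of the Jacobian with the bivector component of $C$, i.e.\ with a realization of $\osp$ through the $B_{ij}$; the Clifford degree zero part identifies the trace of the Jacobian with the scalar component of $C$, i.e.\ with the Euler/dilation term $2\mE+M$; and the remaining components force the graded symmetric trace-free part of the Jacobian of $T$ to vanish, together with all higher Clifford degree (and odd Clifford degree) components of $C$. The vanishing of the graded symmetric trace-free part is exactly the conformal Killing equation on $\mR^{m|2n}$, so $T$ is a conformal Killing vector field. By the extension of the computation in Section 3 of \cite{OSpHarm} the conformal Killing vector fields constitute $\mathfrak{osp}(m+1,1|2n)$, and they are precisely the symbols of $\partial_{X^j}$, $\Pi_j$, $2\mE+M-1$ and $K_{ij}$ in Theorem \ref{confSymm}; choose the element $D_0$ of that realization whose symbol equals $T$, so that $D-D_0$ has vanishing symbol.

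It remains to treat the zeroth order generalized symmetry $D-D_0$. Write $D-D_0=\widetilde C\in\cO(\mR^{m|2n})\otimes\sC$ with $\px\,\widetilde C=\widetilde C'\px$. Comparison of first order symbols gives $\kappa(\widetilde C')=P(\xi)\,\kappa(\widetilde C)\,P(\xi)^{-1}$ for all regular $\xi$ (up to graded signs). Since the products $P(\xi_2)^{-1}P(\xi_1)$ generate the image in $\mathrm{End}(\mS_{m|2n})$ of the even part of $\sC$, and in particular of $\pi_{\mS}(\osp)$, the endomorphism $\kappa(\widetilde C)$ commutes with the $\osp$-action on spinor space, so by Schur's lemma applied to the simple module $\mS_{2d+1|2n}$ (respectively to each simple summand of $\mS_{2d|2n}$) it is scalar; hence $\widetilde C=c(\bold x)$ is a scalar function (possibly different on $\mS^+$ and $\mS^-$ when $m=2d$). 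Then $\px(c f)=(\px c)\,f+c\,(\px f)$ with $\px c=\sum_k\widehat{E_k}(\partial_{X_k}c)\in\cO(\mR^{m|2n})\otimes\sC$, so the symmetry condition reduces to $(\px c)f=(\widetilde C'-c)\,\px f$ for all $f$; comparing orders forces $\widetilde C'=c$ and $\px c=0$, and since $\{\widehat{E_k}\}$ is a basis of the Clifford degree one subspace this means $\partial_{X_k}c=0$ for all $k$, so $c$ is constant. Thus $D$ equals an element of the realization of Theorem \ref{confSymm} up to a constant multiple of the identity (up to $\lambda_+\,\mathrm{id}_{\mS^+}+\lambda_-\,\mathrm{id}_{\mS^-}$ when $m$ is even), which is a trivial symmetry.

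The main obstacle I anticipate is the middle step: carrying out the Clifford degree decomposition of the order one identity in the super setting, where Clifford degree and $\mZ_2$-parity are independent gradings and the super Clifford algebra is infinite dimensional through its symplectic generators, so that one must genuinely argue that every higher Clifford degree component of $C$ is forced to vanish rather than merely bound a finite expansion. Isolating the conformal Killing equation out of this system and keeping track of all the graded signs (heavier analogues of the computations in the proof of Theorem \ref{confSymm}) is where the real work lies; the symbol/invertibility arguments, the identification of conformal Killing fields via Section 3 of \cite{OSpHarm}, and the application of Schur's lemma are comparatively routine.
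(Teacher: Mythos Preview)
Your plan is sound and reaches the same conformal Killing condition on the leading vector field that drives the paper's argument, but the two proofs diverge in how that condition is exploited. After extracting the system analogous to \eqref{setofeq1}--\eqref{setofeq2}, the paper does \emph{not} invoke the classification of conformal Killing vector fields; instead it argues directly from \eqref{pffff} that the bivector coefficients $f_{kl}$ have polynomial degree at most $1$, hence the $F_j$ have degree at most $2$, and then handles the three homogeneous degrees $0,1,2$ by hand to produce $\partial_{X^j}$, $K_{ij}$ and $\Pi_j$ explicitly. Your route via the classification in \cite{OSpHarm} is conceptually cleaner but imports a result the paper only asserts can be proved; the paper's route is entirely self-contained.

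Two points deserve adjustment. First, the higher Clifford-degree components of $C$ are not ``forced to vanish'' by the first-order identity; they merely decouple from the symbol $T$ and survive as independent zeroth-order symmetries (this is precisely what the paper means by ``lead to independent zero-order differential operators which we do not take into account''). Your Schur step then absorbs them, so the logic is intact, but the claim should be weakened. Second, the paper sidesteps zeroth-order symmetries altogether by assuming the $F_j$ are not all zero, i.e.\ that $D$ is genuinely first order; under that reading your Schur argument is optional. If you do keep it, note that for $m=2d$ the conclusion $\lambda_+=\lambda_-$ requires the observation that the chirality operator $\mathrm{id}_{\mS^+}-\mathrm{id}_{\mS^-}$ is not in $\kappa(\sC)$ (it would need $(-1)^{t_i\partial_{t_i}}$, which is not polynomial in the Weyl generators), so it is the hypothesis $\widetilde C\in\cO(\mR^{m|2n})\otimes\sC$ rather than Schur alone that rules it out. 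Also, your invertibility argument for $P(\xi)$ is delicate in the super setting; it is cleaner to use that $\widehat{E_k}$ is invertible for $k\le m$ (since $m>2$) to derive directly that $\widetilde C$ super-commutes with all $\widehat{E_k}^{-1}\widehat{E_l}$, which already generate $\pi_{\mS}(\osp)$.
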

\begin{proof}
We need to classify all differential operators
\[D=\sum_{j=1}^{m+2n}F_j\partial_{X_j}+F_0\]
with $F_j \in \cO(\mR^{m|2n})$ (not all zero) and $F_0\in\cO(\mR^{m|2n})\otimes \sC$ such that $\px D=\delta \px$ for $\delta$ another differential operator. Since $\px$ is a homogeneous operator, we can assume that $D$ is homogeneous and therefore use the notation $|D|=|F_0|=|F_j|+[j]$.

First we restrict to the case $F_0\in\cO(\mR^{m|2n})$. The condition for $D$ to be a generalized symmetry then becomes $\px(F_0)=0$ (which has to be regarded as an equation in $\cO(\mR^{m|2n})\otimes\sC$, not in $\cO(\mR^{m|2n})\otimes\mS_{m|2n}$) and $\sum_{k=1}^{m+2n}\widehat{E}_k(\partial_{X_k}F_j)=H\widehat{E}_j$ for some $H\in \cO(\mR^{m|2n})\otimes \sC$. The first condition implies that $F_0$ is a constant. Since the left-hand term of the second equation is vector valued, $H\widehat{E}_j$ has to be vector valued, which implies that $H$ is scalar. If follows that $(\partial_{X_k}F_j)=(-1)^{|D|[j]}\delta_j^kH$. Acting with a second derivative $\partial_{X_i}$ on this identity shows that $H$ is constant and therefore that $F_j$ is an element of $\cP_0\oplus\cP_1$. This yields the differential operators, (up to additive constants) $D=\partial_{X_j}$ and $D=\mE$, which are included in Theorem \ref{confSymm}.

Now we consider $F_0\in \left(\cO(\mR^{m|2n})\otimes \sC\right)\,\backslash \,\cO(\mR^{m|2n})$. This leads to $\px(F_0)=0$ and 
\[\sum_{k}\widehat{E}_k(\partial_{X_k}F_j)=H\widehat{E}_j-(-1)^{[j]|D|}\widehat{E}_jF_0\] for some $H\in \cO(\mR^{m|2n})\otimes \sC$. Only the vector valued term in $H\widehat{E}_j-(-1)^{[j]|D|}\widehat{E}_jF_0$ can be different from zero and influence $F_j$, the others therefore have to cancel out between $F_0$ and $H$ and lead to independent zero-order differential operators which we do not take into account. The non-trivial contributions of $F_0$ can only come from scalars and bi-vectors, which can be seen from equation \eqref{commrelCliff}. Therefore we can expand $F_0$ in a unique way as $F_0=\sum_{k,l}f_{kl}B_{kl}+f_0$, with $f_0\in\cO(\mR^{m|2n})$ and $f_{kl}=-(-1)^{[k][l]}f_{lk}\in\cO(\mR^{m|2n})$. We obtain, using equation \eqref{Baction}, that
\begin{eqnarray*}
H\widehat{E}_j-(-1)^{[j]|D|}\widehat{E}_jF_0&=&H\widehat{E}_j-F_0\widehat{E}_j+\sum_{k,l}f_{kl}[B_{kl},\widehat{E}_j]\\
&=&h\widehat{E}_j+2\sum_{k,l}f_{kl}g_{jl}\widehat{E}_k
\end{eqnarray*}
holds, where $h=H-F_0$ again has to be scalar. The conditions on $D=\sum_jF_j\partial_{X_j}+\sum_{k,l}f_{kl}B_{kl}+f_0$ then become
\begin{eqnarray}
\label{setofeq1}
\px(\sum_{k,l}f_{kl}B_{kl}+f_0)&=&0\\
\label{setofeq2}
\partial_{X_k}F_j&=&h\delta_{jk}+2\sum_{l}f_{kl}g_{jl}
\end{eqnarray}
for an arbitrary $h\in\cO(\mR^{m|2n})$.

First we prove that equation \eqref{setofeq1} leads to the restriction that $F_0=\sum_{k,l}f_{kl}B_{kl}+f_0$ is an element of $(\cP_0\oplus\cP_1)\otimes\sC$. The equation $\px(F_0)=0$ falls apart into a vector and a tri-vector part,
\begin{eqnarray}
\nonumber
\partial_{X_k}f_0&=&\sum_{j,l}g_{jl}\partial_{X_j}f_{kl} (-1)^{([k]+[l])(|D|+1)]}\\
\label{pffff}
(-1)^{[j](|D|+[j])}\partial_{X_j} f_{kl}&=&-(-1)^{[k](|D|+[k])}\partial_{X_k} f_{lj}(-1)^{[j]([k]+[l])}.
\end{eqnarray}
>From the second equation it follows that $\partial_{X_i}\partial_{X_j}f_{kl}=0$, so $f_{kl}$ is first order and then the first equation shows that $f_0$ is first order as well. 

As a next step we prove that $F_j\in\cP_2\oplus\cP_1\oplus\cP_0$. The condition $\partial_{X_k}F_j=h\delta_{jk}+2\sum_{l}f_{kl}g_{jl}$ with $f_{kl}$ of degree 1 implies $\partial_{X_i}\partial_{X_l}\partial_{X_k}F_j=\left(\partial_{X_i}\partial_{X_l}h\right)\delta_{jk}$ which yields that $h$ must be of degree 1 and $F_j$ of degree 2. Equations \eqref{setofeq1} and \eqref{setofeq2} show that there is no mixing up of different degrees, i.e. we can consider $F_j\in\cP_2$, $F_j\in\cP_1$ and $F_j\in\cP_0$ independently. First we take $F_j\in\cP_0$. Equation \eqref{setofeq2} then implies that $f_{kl}=-hg_{kl}$ and since $f_{kl}=-(-1)^{[k][l]}f_{lk}$ this implies $f_{kl}=0$ and we arrive in the case $F_0$ scalar which is already dealt with. Now assume $F_j\in\cP_1$. Then equation \eqref{setofeq2} together with the anti-symmetry of $f_{kl}$ imply that $f_{kl}$ are constants and therefore $F_0$ is constant.
  It can then easily be checked that exactly the symmetries $K_{ij}=L_{ij}+B_{ij}$ are obtained. 

Finally take $F_j\in \cP_2$. Then $h\in\cP_1$ and up to a choice of coordinates and a renormalization we can assume $h=2X_l$ and we study the generalized symmetry $D-\Pi_l=\sum_j \widetilde{F}_j\partial_{X_j}+\widetilde{F}_0$ with $\Pi_l$ defined in Theorem \ref{confSymm}. For this generalized symmetry, equation \eqref{setofeq2} becomes
\[
\partial_{X_k}\widetilde{F}_j=2\sum_{l}\widetilde{f}_{kl}g_{jl}.
\]
The combination of this equation together with equation \eqref{pffff} for $\widetilde{f}_{kl}$ then shows that $\partial_{X_l}\partial_{X_k}\widetilde{F}_j=0$ which shows that $D=\Pi_l$ up to zero degree terms.
\end{proof}

Since the differential operators in $\mathfrak{osp}(m+1,1|2n)$ are generalized symmetries of the Dirac operator, the kernel constitutes a module. In the following theorem we study this kernel $\cM=\bigoplus_{k=0}^\infty\cM_k$.

\begin{theorem}
\label{bigmonrep}
The space of monogenic polynomials $\cM=\bigoplus_{k=0}^\infty\cM_k$ on $\mR^{m|2n}$ is a irreducible\\ $\mathfrak{osp}(m+1,1|2n)$-module if $m$ is odd with $\mathfrak{osp}(m+1,1|2n)$-action given in Theorem \ref{confSymm}. If $M$ is even and strictly positive, the spaces $\cM^+=\bigoplus_{k=0}^\infty\cM^+_k$ and $\cM^-=\bigoplus_{k=0}^\infty\cM^-_k$ are irreducible $\mathfrak{osp}(m+1,1|2n)$-modules. If $M=-2p$, with $p\in\mN$, then the spaces $\cM^+$ and $\cM^-$ are still indecomposable module, but they have a submodule, given by
\[
\bigoplus_{k=0}^{p}\cM^\pm_k\oplus\bigoplus_{k=p+1}^{2p+1}\bold{x}^{2k-2p-1}\cM^\mp_{2p-k+1}\cong K^{2d+2|2d+2p}_{p\epsilon_1+\omega_{d+1}-\frac{1}{2}\nu_{d+p}-\frac{1}{2}(1\mp 1)\delta_{d+p} }.
\]
\end{theorem}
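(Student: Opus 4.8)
The plan is to exploit the contact grading of $\mathfrak{g}:=\mathfrak{osp}(m+1,1|2n)$ carried by the realization of Theorem \ref{confSymm}. Set $\mathfrak{g}_{-1}=\mathrm{span}\{\partial_{X^j}\}$, $\mathfrak{g}_0=\osp\oplus\mC(2\mE+M-1)$ and $\mathfrak{g}_1=\mathrm{span}\{\Pi_j\}$. The relations computed in the proof of Theorem \ref{confSymm} show that $\mathfrak{g}=\mathfrak{g}_{-1}\oplus\mathfrak{g}_0\oplus\mathfrak{g}_1$ is a $|1|$-grading with $\mathfrak{g}_{\pm1}$ abelian and $[\mathfrak{g}_1,\mathfrak{g}_{-1}]=\mathfrak{g}_0$ (this equality holds because $\mathfrak{g}$ is simple). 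On $\cM=\bigoplus_k\cM_k$ the operators $\partial_{X^j}$, $\Pi_j$ and $K_{ij},\mE$ lower, raise and preserve the polynomial degree respectively, so $\cM$ is a graded $\mathfrak{g}$-module. Since $2\mE+M-1$ acts on $\cM_k$ by the scalar $2k+M-1$ and these are pairwise distinct, any $\mathfrak{g}$-submodule $S$ is graded, $S=\bigoplus_k (S\cap\cM_k)$ with each $S\cap\cM_k$ a $\mathfrak{g}_0$-submodule; when $m$ is odd, or $m=2d$ with $M>0$, Corollary \ref{monosp} gives that every $\cM_k$ (resp.\ $\cM_k^\pm$) is a simple $\mathfrak{g}_0$-module, so $S=\bigoplus_{k\in A}\cM_k$ for some $A\subseteq\mN$.

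Two facts then settle the irreducible cases. \emph{Co-generation}: for $k\ge1$ not every element of $\cM_k$ is killed by all $\partial_{X^j}$ (else $\cM_k\subseteq\mC\otimes\mS_{m|2n}$), so $\mathfrak{g}_{-1}\cM_k$ is a nonzero $\mathfrak{g}_0$-submodule of the simple $\cM_{k-1}$, hence all of it; thus a nonzero graded submodule contains $\cM_0$. \emph{Generation}: I claim $\mathfrak{g}_1\cM_k=\cM_{k+1}$ for all $k\ge0$; as $\mathfrak{g}_1\cM_k$ is a $\mathfrak{g}_0$-submodule of the simple $\cM_{k+1}$ it suffices to prove $\mathfrak{g}_1\cM_k\neq0$. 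Granting this, $k\in A\Rightarrow k+1\in A$, so $A\in\{\emptyset,\mN\}$ and $\cM$ (resp.\ $\cM^\pm$) is simple.

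The generation step is the crux. For $k=0$ it is immediate: if $\mathfrak{g}_1\cM_0=0$ then $\cM_0$ would be a $\mathfrak{g}$-submodule concentrated in a single grading degree, forcing $\mathfrak{g}_0=[\mathfrak{g}_1,\mathfrak{g}_{-1}]$ to act by zero on $\cM_0$, contradicting that the $K_{ij}$ act non-trivially on $\mS_{m|2n}$ (alternatively, one reads off the coefficients in $\Pi_j v=\bold{x}\widehat{E}_jv+MX_jv$ for $v\in\cM_0=\mS_{m|2n}$ and uses that $\kappa(E_mE_j)\neq0$ for distinct bosonic $m\neq j$). For general $k$ one argues by induction: if $k^\ast\ge2$ is minimal with $\mathfrak{g}_1\cM_{k^\ast-1}=0$, then $T:=\bigoplus_{k=0}^{k^\ast-1}\cM_k$ is a graded $\mathfrak{g}$-submodule, and by co-generation $T=U(\mathfrak{g})\cM_0$ is simple. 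One then derives a contradiction from the $\mathfrak{sl}_2$-triple $(\Pi_1,\,2\mE+M-1,\,\partial_{X^1})$, whose action on $T$ decomposes into finite-dimensional $\mathfrak{sl}_2$-modules because the grading of $T$ is bounded, together with the identity $\partial_{X^1}\Pi_1^k v=-k(M+k-2)\Pi_1^{k-1}v$ on $\cM_0$: when $M>0$ a nonzero lowest-weight vector $v$ of $\mathfrak{sl}_2$-weight $M-1\ge0$ cannot lie in a finite-dimensional $\mathfrak{sl}_2$-module, and in the remaining cases ($m$ odd with $M\le1$) the finitely many resonant degrees, where $2k+M-1=0$ or $M+k-2=0$, are excluded using the explicit branching of $\cM_k$ in Corollary \ref{monosp} and the multiplicity-freeness of Lemma \ref{multfree}. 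Ruling out such a bounded submodule $T$ at these resonances is, I expect, the main obstacle.

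For $M=-2p$ one works inside $\cM^\pm$. By co-generation $\cN^\pm:=U(\mathfrak{g})\cM_0^\pm$ is the unique minimal nonzero submodule, hence simple, and it remains to compute it. The analysis above (again ``no bounded graded submodule'', now within $\cM^\pm$) gives $\mathfrak{g}_1\cM_k^\pm=\cM_{k+1}^\pm$ for $0\le k\le p-1$; at $k=p$ one has $M+2\mE=0$ on $\cM_p^\pm$, so by Proposition \ref{DiracCliff} the operators $\Pi_j$ restricted to $\cM_p^\pm$ take values in $\bold{x}\,(\cP\otimes\mS_{m|2n})$, whence $\mathfrak{g}_1\cM_p^\pm\subseteq\cM_{p+1}^\pm\cap(\bold{x}\,\cP\otimes\mS_{m|2n})=\bold{x}\cM_p^\mp$ by \eqref{onlysubrepMk}, with equality by the non-vanishing argument. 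Iterating with \eqref{px2l1} yields $\mathfrak{g}_1(\bold{x}^{2l+1}\cM_{p-l}^\mp)=\bold{x}^{2l+3}\cM_{p-l-1}^\mp$ for $l<p$, while $\mathfrak{g}_1(\bold{x}^{2p+1}\cM_0^\mp)=0$ because $\mC^{m|2n}\otimes\mS_{2d|2n}^\mp$ has no subquotient of the highest weight of $\cM_{2p+2}^\pm$ (Theorem \ref{decomp1}). Hence
\[
\cN^\pm=\bigoplus_{k=0}^{p}\cM_k^\pm\ \oplus\ \bigoplus_{k=p+1}^{2p+1}\bold{x}^{2k-2p-1}\cM_{2p-k+1}^\mp,
\]
all of whose graded pieces are simple $\mathfrak{g}_0$-modules by \eqref{onlysubrepMk} and Corollary \ref{monosp}. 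Applying the same machinery to $\cM^\pm/\cN^\pm$, whose graded pieces are the simple quotients $\cM_k^\pm/\bold{x}^{2k-2p-1}\cM_{2p-k+1}^\mp$ ($p+1\le k\le 2p+1$) and the simple $\cM_k^\pm$ ($k\ge 2p+2$), shows it is simple; thus $\cM^\pm$ has length two, with $\cN^\pm$ its unique proper submodule, so $\cM^\pm$ is indecomposable. Finally, locating the highest-weight vector of $\cN^\pm$ and passing through the weight computation of \cite{Tensor} identifies $\cN^\pm$ with $K^{2d+2|2d+2p}_{p\epsilon_1+\omega_{d+1}-\frac12\nu_{d+p}-\frac12(1\mp1)\delta_{d+p}}$, as asserted.
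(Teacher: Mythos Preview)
Your overall architecture---the $|1|$-grading $\mathfrak{g}=\mathfrak{g}_{-1}\oplus\mathfrak{g}_0\oplus\mathfrak{g}_1$, the observation that submodules are graded because $2\mE+M-1$ separates degrees, and co-generation via $\partial_{X^j}$---matches the paper's proof. The difference, and the place where your argument has a genuine gap, is the \emph{generation} step $\mathfrak{g}_1\cM_k\neq0$.

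The paper does not use your $\mathfrak{sl}_2$-triple at all. Instead it invokes the Kelvin-inversion identity \eqref{PiI}, $\Pi_j=I\circ\partial_{X^j}\circ I$, which (together with $I^2=-1$) gives $\Pi_1^k\,1=\pm I\bigl(\partial_{X^1}^k(\bold{x}R^{-M})\bigr)$. When $M$ is odd, or $M>0$ is even, $R^{-M}$ is not a polynomial, so $\partial_{X^1}^k(\bold{x}R^{-M})\neq0$ for every $k$; since $I$ is an isomorphism of $\cO(\mR^{m|2n}_0)$ this yields $\Pi_1^k\,1\neq0$ for all $k$ in one stroke, and simplicity of each $\cM_k$ finishes the irreducibility. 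When $M=-2p$, $\bold{x}R^{-M}=\bold{x}R^{2p}$ is a polynomial of degree $2p+1$ in $X_1$, so the same computation gives $\Pi_1^k\,1\neq0$ exactly for $k\le2p+1$, which is precisely what locates the submodule.

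Your $\mathfrak{sl}_2$-argument, by contrast, only excludes a bounded graded submodule $T$ when the lowest $H$-weight on $T$ is strictly positive, i.e.\ when $M\ge2$. For $m$ odd with $M\le-1$ (which does occur, e.g.\ $m=3$, $n\ge2$) the lowest weight $M-1$ is negative and the finite-dimensionality of the $\mathfrak{sl}_2$-strings gives no contradiction; you acknowledge this yourself. The suggested patch via Corollary~\ref{monosp} and Lemma~\ref{multfree} is not a proof: nothing in those statements obstructs a bounded $\mathfrak{g}$-submodule, and chasing $\osp$-weights inside $T$ does not by itself forbid $\mathfrak{g}_1$ from annihilating the top piece. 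So the irreducibility for $m$ odd is not established by your method.

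For the $M=-2p$ case your treatment is largely sound and in places more conceptual than the paper's: the vanishing $\mathfrak{g}_1(\bold{x}^{2p+1}\cM_0^\mp)=0$ via the absence of the relevant highest weight among the subquotients of $\mC^{m|2n}\otimes\mS^\mp$ is a clean alternative to the paper's direct Kelvin computation. Two remarks, though. First, your ``analysis above'' for $\mathfrak{g}_1\cM_k^\pm=\cM_{k+1}^\pm$ in the range $0\le k\le p-1$ implicitly uses the $\Pi_1$-string formula, which is fine here because the resonance $k=2-M=2p+2$ lies above this range. Second, your claim that the quotient $\cM^\pm/\cN^\pm$ is simple requires generation again for $k\ge2p+1$, where your $\mathfrak{sl}_2$-argument has the same gap; the paper handles this instead by taking $M_k\in\cM_k^+$ with $M_k\notin\bold{x}\,\cP\otimes\mS_{m|2n}$ and observing directly from the explicit form \eqref{Pij} that $\Pi_jM_k\notin\bold{x}^2\,\cP\otimes\mS_{m|2n}$, whence by \eqref{onlysubrepMk} $\Pi_jM_k\notin\bold{x}\,\cP\otimes\mS_{m|2n}$, so any submodule containing a full $\cM_k^+$ with $k>p$ swallows all higher degrees.

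In short: graft the Kelvin-inversion identity \eqref{PiI} onto your framework in place of the $\mathfrak{sl}_2$-string argument, and the proof goes through uniformly.
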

\begin{proof}
First we consider the case $m$ odd and look at the action of $\osp\hookrightarrow\mathfrak{osp}(m+1,1|2n)$ given by the operators $K_{ij}$ in \eqref{ospDirac}. The $\mathfrak{osp}(m+1,1|2n)$-representation $\cM$ decomposes into irreducible $\osp$-representations as $\cM=\bigoplus_{k=0}^\infty\cM_k$, see Corollary \ref{monosp}. Because of the partial derivatives $\partial_{X^j}\in\mathfrak{osp}(m+1,1|2n)$ it is clear that this representation is indecomposable and that each subrepresentation will contain the scalars $\cM_0$. Since we know that $\Pi_1^k1\in\cM_k$ the theorem is proved if we can show that $\Pi_1^k1\not=0$ for all $k\in\mN$. Because of equation \eqref{PiI} this is equivalent to proving $\partial^k_{X^1}\bold{x}R^{-M}\not=0$, which follows immediately.

For the case $m$ even, the structure of the differential operators in $\mathfrak{osp}(m+1,1|2n)$ shows that $\cM$ decomposes as $\cM^+\oplus\cM^-$ as representations. We focus on $\cM^+$, the other case being completely similar.  For any value of $M$ this representation decomposes as
\begin{eqnarray*}
\cM^+=\bigoplus_{k=0}^\infty\cM_k^+
\end{eqnarray*}
into indecomposable $\osp$-modules, see Corollary \ref{monosp} and Theorem \ref{remMk}. The partial derivatives again imply that $\cM$ is an indecomposable $\mathfrak{osp}(m+1,1|2n)$-module and that each subrepresentation contains the scalars. If $M>0$, the proof is the same as in the case $M$ odd. 

Now we focus on the case $M=-2p$. Then we easily find that $\Pi_1^k1\not=0$ if $k\le 2p+1$ since $\bold{x}R^{-M}$ is a polynomial in $X_1$ of maximal degree $1+2p$. So for each subrepresentation $U\subset\cM^+$ we find $U\cap\cM^+_k\not=0$ if $k\le 2p+1$. By considering the smallest $\osp$-subrepresentations of these $\cM_k^+$ according to Corollary \ref{monosp} and Theorem \ref{remMk}, this implies that
\begin{eqnarray*}
\bigoplus_{k=0}^p\cM^+_k\oplus\bigoplus_{k=p+1}^{2p+1}\bold{x}^{2k-2p-1}\cM^-_{2p-k+1}&\subset&U
\end{eqnarray*}
holds. Next we prove that there can be no other $\osp$-modules included in $U$ than those on the left-hand in the equation above. Assume $\cM_k^+\subset U$ for $k>p$ and take an arbitrary $M_k\in\cM_k^+$ with $M_k\not\in\bold{x}\cP\otimes\mS_{m|2n}$. Then it is obvious that $\Pi_jM_k\not\in\bold{x}^2\cP\otimes\mS_{m|2n}$. Because of equation \eqref{onlysubrepMk} this implies that $\Pi_jM_k\not\in\bold{x}\cP\otimes\mS_{m|2n}$. By induction it follows that all $\cM^+_k$ are inside $U$ so $U=\cM$. Therefore the only possible $U$ is of the form $\bigoplus_{k=0}^p\cM^+_k\oplus\bigoplus_{k=p+1}^{2p+1}\bold{x}^{2k-2p-1}\cM^-_{2p-k+1}$.

The last step is to prove that this is actually an $\mathfrak{osp}(m+1,1|2n)$-module. This corresponds to showing that the action of $\Pi_j$ and $\partial_{X^j}$ stabilizes $U$. The relations
\begin{eqnarray*}
\Pi_j\cM^+_{p}\subset \bold{x}\cM^-_p&\mbox{and}&\Pi_j\bold{x}^{2i-1}\cM^-_{p-i+1}\subset \bold{x}^{2i+1}\cM^-_{p-i}\quad\mbox{and}\quad \Pi_j\bold{x}^{2p+1}=0,
\end{eqnarray*}
are consequence of equations \eqref{Pij} and \eqref{onlysubrepMk} and the corresponding claims for $\partial_{X^j}$ follow similarly. The identification of the highest weight of this irreducible representation follows easily.
\end{proof}

As a side result of Theorem \ref{bigmonrep} and Corollary \ref{monosp} we obtain the following branching rule.
\begin{corollary}
The following branching rule of $\mathfrak{osp}(2d|2d+2p)\hookrightarrow\mathfrak{osp}(2d+2|2d+2p)$ holds:
\begin{eqnarray*}
K^{2d+2|2d+2p}_{p\epsilon_1+\omega_{d+1}-\frac{1}{2}\nu_{d+p}}&\cong&\bigoplus_{k=0}^{p}\left(K^{2d|2d+2p}_{k\epsilon_1+\omega_{d}-\frac{1}{2}\nu_{d+p}}\oplus K^{2d|2d+2p}_{k\epsilon_1+\omega_{d}+\nu_{d+p-1}-\frac{3}{2}\nu_{d+p}}\right).
\end{eqnarray*}
\end{corollary}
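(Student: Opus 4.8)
The plan is to read the branching rule directly off Theorem~\ref{bigmonrep}, by restricting the $\mathfrak{osp}(m+1,1|2n)$-module structure of $\cM^+$ to the subalgebra $\osp\hookrightarrow\mathfrak{osp}(m+1,1|2n)$ acting through the operators $K_{ij}$ of \eqref{ospDirac}. First I would fix $m=2d$ and $n=d+p$, so that $M=m-2n=-2p$ with $p\in\mN$ and $n-d=p$; with these parameters the (complexified) conformal algebra of Theorem~\ref{confSymm} is $\mathfrak{osp}(2d+2|2d+2p)$, and $\osp=\mathfrak{osp}(2d|2d+2p)$ is the indicated subalgebra. Theorem~\ref{bigmonrep} (with the upper sign) then identifies the $\mathfrak{osp}(2d+2|2d+2p)$-submodule
\[
U^{+}=\bigoplus_{k=0}^{p}\cM^{+}_{k}\ \oplus\ \bigoplus_{k=p+1}^{2p+1}\bold{x}^{2k-2p-1}\cM^{-}_{2p-k+1}
\]
of $\cM^{+}$ with the irreducible highest weight module $K^{2d+2|2d+2p}_{p\epsilon_1+\omega_{d+1}-\frac12\nu_{d+p}}$, which is precisely the left-hand side of the corollary.

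The next step is to decompose $U^{+}$ as an $\osp$-module. The summands in the display are already $\osp$-submodules — the $\cM^{+}_{k}$ are $\osp$-modules, and $\bold{x}^{2k-2p-1}\cM^{-}_{2p-k+1}$ is the $\osp$-equivariant image of the $\osp$-module $\cM^{-}_{2p-k+1}$ under the $\osp$-invariant operator $\bold{x}^{2k-2p-1}$ — and since the summands have pairwise distinct polynomial degrees the sum is automatically direct. It then remains to name each piece. For $0\le k\le p$ one has $k\notin[1+n-d,1+2n-2d]=[1+p,1+2p]$, so Corollary~\ref{monosp} gives $\cM^{+}_{k}\cong K^{2d|2d+2p}_{k\epsilon_1+\omega_d-\frac12\nu_{d+p}}$. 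For the second family set $k'=2p-k+1$, so $k'$ runs over $\{0,\dots,p\}$ as $k$ runs over $\{p+1,\dots,2p+1\}$; since $k'\le p=n-d$, Theorem~\ref{remMk} shows that $\bold{x}^{2k-2p-1}\cM^{-}_{k'}$ is an irreducible $\osp$-submodule, and being a nonzero equivariant image of the irreducible module $\cM^{-}_{k'}\cong K^{2d|2d+2p}_{k'\epsilon_1+\omega_d+\nu_{d+p-1}-\frac32\nu_{d+p}}$ (Corollary~\ref{monosp}, applicable because $k'\le p$) it is isomorphic to it. Collecting and reindexing the two families yields $U^{+}\cong\bigoplus_{k=0}^{p}\bigl(K^{2d|2d+2p}_{k\epsilon_1+\omega_d-\frac12\nu_{d+p}}\oplus K^{2d|2d+2p}_{k\epsilon_1+\omega_d+\nu_{d+p-1}-\frac32\nu_{d+p}}\bigr)$.

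Combining the two paragraphs — $U^{+}\cong K^{2d+2|2d+2p}_{p\epsilon_1+\omega_{d+1}-\frac12\nu_{d+p}}$ as an $\mathfrak{osp}(2d+2|2d+2p)$-module, and $U^{+}$ decomposes as the stated sum as an $\osp$-module — is exactly the claimed branching rule. I do not expect a genuine obstacle here: the corollary is, as advertised, an immediate consequence of Theorem~\ref{bigmonrep}, Theorem~\ref{remMk} and Corollary~\ref{monosp}. The only thing that requires care is the bookkeeping — checking that the choice $n=d+p$ forces $n-d=p$, so that all indices $k$ ($0\le k\le p$) and $k'=2p-k+1$ ($p+1\le k\le 2p+1$) occurring in $U^{+}$ fall outside the exceptional interval $[1+n-d,1+2n-2d]$ of Corollary~\ref{monosp}, and matching the shifts $\nu_{n-1}=\nu_{d+p-1}$, $\nu_n=\nu_{d+p}$ from Corollary~\ref{monosp} with the weights written in the corollary. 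The parallel argument starting from $\cM^{-}$ (the lower-sign case of Theorem~\ref{bigmonrep}) gives the companion branching rule for $K^{2d+2|2d+2p}_{p\epsilon_1+\omega_{d+1}+\nu_{d+p-1}-\frac32\nu_{d+p}}$, but it is not needed for the statement above.
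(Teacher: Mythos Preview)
Your proposal is correct and follows exactly the approach the paper intends: the corollary is announced in the text as ``a side result of Theorem~\ref{bigmonrep} and Corollary~\ref{monosp}'', and the paper gives no further argument. Your write-up simply makes this explicit --- identifying $U^{+}$ with the left-hand side via Theorem~\ref{bigmonrep}, then naming each $\osp$-summand via Corollary~\ref{monosp} (and Theorem~\ref{remMk} for the nonvanishing/irreducibility of the pieces $\bold{x}^{2k-2p-1}\cM^{-}_{2p-k+1}$) --- with the bookkeeping $n=d+p$ checked so that all indices avoid the exceptional interval.
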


Two limit cases are
\begin{eqnarray*}
K^{2|2p}_{\left(p+\frac{1}{2}\right)\epsilon-\frac{1}{2}\nu_{p}}&\cong&\bigoplus_{k=0}^{p}\left(K^{0|2p}_{\nu_k-\frac{1}{2}\nu_{p}}\oplus K^{0|2p}_{\nu_k+\nu_{p-1}-\frac{3}{2}\nu_{p}}\right) \mbox{ for } \mathfrak{sp}(2p)\hookrightarrow\mathfrak{osp}(2|2p)\mbox{ and}\\
K^{2d+2|2d}_{\omega_{d+1}-\frac{1}{2}\nu_{d}}&\cong&K^{2d|2d}_{\omega_{d}-\frac{1}{2}\nu_{d}}\oplus K^{2d|2d}_{\omega_{d}+\nu_{d-1}-\frac{3}{2}\nu_{d}}\mbox{ for } \mathfrak{osp}(2d|2d)\hookrightarrow\mathfrak{osp}(2+2d|2d).
\end{eqnarray*}
A method to calculate the highest weights of the first case is explained in \cite{Tensor}. The second one is just a special case of the equality $\mS_{2d|2n}^+\oplus\theta_{d+1}\mS_{2d|2n}^-=\mS_{2d+2|2n}^+$.

In this section we investigated the first order generalized symmetries of the super Dirac operator with scalar symbol, which lead to the Lie superalgebra $\mathfrak{osp}(m+1,1|2n)$. This forms but the first step in the program of determining all (higher order) generalized symmetries of this operator. For the ordinary Laplace operator on $\mR^m$, this problem was solved in the seminal paper \cite{Eastwood}. In this case, it turns out that the higher symmetries constitute the algebra $\cU(\mathfrak{so}(m+1,1))/\cJ$ with $\cJ$ the Joseph ideal. Some preliminary results on the higher symmetries of the super Laplace operator are already obtained in \cite{Joseph}. It is expected that the algebra of symmetries will be $\cU(\mathfrak{osp}(m+1,1|2n))/\cJ$ with $\cJ$ an ideal in the universal enveloping algebra studied in \cite{Joseph}.  The generalized symmetries of the classical Dirac operator are studied in \cite{ESS}. It is an interesting open question whether these results can be 
 generalized to the super Dirac operator, introduced in the present paper.


\section{The Fegan classification}
\label{Fegan}

In \cite{MR0482879}, the classification of conformally invariant first order differential operators between functions on $\mR^m$ or $\mS^m$ with values in finite dimensional irreducible $\mathfrak{so}(m)$-modules, was obtained. The conformal symmetries are given by the Lie algebra $\mathfrak{so}(m+1,1)$ which has a $\mZ$-gradation $\mathfrak{so}(m+1,1)=\mR^m+\mathfrak{co}(m)+\mR^m$, with $\mathfrak{co}(m)=\mathfrak{so}(m)\oplus\mR$. There exists such an operator if and only if $L^m_\mu$ appears in the decomposition into irreducible pieces of the tensor product $\mC^m\otimes L^m_\lambda$. This operator $D$ is the composition of the gradient and the invariant projection $\mC^m\otimes L^m_\lambda\to L^m_\mu$, which is clearly $\mathfrak{so}(m)$-invariant. The tensor product is always multiplicity-free. For each representation $L^m_\mu$ appearing in the tensor product of $L^m_\lambda$, there is a unique conformal weight (character of the one dimensional Lie algebra in $\mathfrak{
 co}(m)$)
 , making them into $\mathfrak{co}(m)+\mR^m$-representations, such that the operator $D$ is conformally invariant. In particular, if the spinor spaces $\mS_m^{(\pm)}$ are considered, the tensor product contains two irreducible representations, see Theorem \ref{decomp1}. The two invariant first order differential operators then are the Dirac operator, corresponding to conformal weight $\frac{1}{2}(m-1)$ and a twistor operator corresponding to conformal weight $-\frac{1}{2}$.

As has been argued in \cite{Tensor}, the natural generalizations of the spinor spaces to the supersetting are infinite dimensional representations. This has also been justified in this paper by the resemblance between the super Dirac operator and the classical Dirac operator. In fact, highest weight representations of $\mathfrak{osp}(m|2n)$ corresponding to the double cover of the supergroup are always infinite dimensional due to the corresponding statement for $\mathfrak{sp}(2n)$. Therefore a proper generalization of the results in \cite{MR0482879} will have to contain infinite dimensional highest weight representations of $\mathfrak{osp}(m|2n)$. This was also the case for the classification in \cite{MR2458281}. There an appropriate class of infinite dimensional $\mathfrak{sp}(2n)$-representations was defined and studied.

On superspace, every conformally invariant operator on $\mR^{ml2n}$, 
\[D:\cO(\mR^{m|2n})\otimes L^{m|2n}_\lambda\to \cO(\mR^{m|2n})\otimes V,\]
for some $\osp$-module $V$ in particular has to be $\mathfrak{osp}(m|2n)$-invariant. By restricting to the first order polynomials in $\cO(\mR^{m|2n})\otimes L^{m|2n}_\lambda$ and only considering the scalar part inside $\cO(\mR^{m|2n})\otimes V$ we obtain that $D$ reduces to an $\osp$-module morphism
\begin{eqnarray*}
\Phi&:&\mC^{m|2n}\otimes L^{m|2n}_\lambda\to V.
\end{eqnarray*}
The existence of such a morphism for a representation $V$ is equivalent to the property 
\[
V\cong\left(\mC^{m|2n}\otimes L_\lambda^{m|2n}\right)/R,
\]
for some $\osp$-module $R\subset \mC^{m|2n}\otimes L_\lambda^{m|2n}$, which corresponds to Ker$(\Phi)$.

As an example of the Fegan classification on superspace we considered the Dirac operator. The corresponding morphism $\Phi$ was constructed in Lemma \ref{Eperp}. We can consider all the differential operators acting between $\cO(\mR^{m|2n})\otimes\mS_{m|2n}^{(\pm)}$ and functions with values in some other representation. By the considerations above, the only candidates are the Dirac operator and the operator corresponding to projection onto the Cartan product. The case $M=m-2n=0$ is different since Theorem \ref{decomp1} implies that the relevant tensor product is not completely reducible. The Dirac operator can still be defined for all values of $M$ as the projection of the gradient on the subrepresentation of the tensor product, isomorphic to a spinor space. This is done in Definition \ref{defDirac}. From Theorem \ref{confSymm} it follows that the super Dirac operator is conformally invariant and the conformal weight is given by $\frac{1}{2}(M-1)$, which is the dimensional c
 ontinuat
 ion of the classical value $\frac{1}{2}(m-1)$. 

The question which needs to be explored as part of the Fegan classification in superspace is whether there is also a conformally invariant differential operator corresponding to the invariant projection of the gradient onto the Cartan product in Theorem \ref{decomp1}. The most interesting case is $M=0$. Based on the $\osp$-invariance and Corollary \ref{tensornotcr} the only other candidate is $\left(\mC^{2n|2n}\otimes \mS_{2n|2n}^\pm \right)/V^\pm$, with $V^\pm$ defined in Theorem \ref{decomp1}, which however is an indecomposable but reducible representation. The quotient irreducible representation is not a candidate because $\left(\mC^{2n|2n}\otimes \mS_{2n|2n}^+\right)/R$ is never equal to $K^{2n|2n}_{\epsilon_1+\omega_n-\frac{1}{2}\nu_n}$ for any subrepresentation $R\subset \mC^{2n|2n}\otimes \mS_{2n|2n}^+$, as follows from Corollary \ref{tensornotcr}. So either reducible representations need to be considered or the classification of invariant differential operators will y
 ield les
 s operators than expected.

Keeping in mind the observed dimensional continuation property and the two classical conformal weights $(m-1)/2$ and $-1/2$, it can be expected that the two conformal weights coincide exactly for the special value $M=0$. This conjecture is also supported by the fact that in the classical case, the conformal weight can be calculated from the value of the quadratic Casimir operator operator of $\mathfrak{so}(m)$ on $L_\mu^m$. Since the two relevant subrepresentations of $\mC^{2n|2n}\otimes \mS^+_{2n|2n}$ correspond to one indecomposable representation $V^+$, they have identical eigenvalues for the Casimir operators of $\osp$.

In the classical case, the appearing conformal weights for the differential operators starting from a certain function space are always strictly different. Because of the arguments above it seems reasonable that this does no longer hold on superspace and that this is intimately related to the appearance of not completely reducible tensor products. The results in Remark \ref{casimir} in case $d=n$ are crucial for this part of the classification. 

\subsection*{Acknowledgment}
The authors would like to thank Vladimir Sou{\v{c}}ek and Dimitry Leites for interesting discussions.


\end{document}